\pgfplotsset{compat=1.18} 
\newcommand{\N}{\mathbb{N}}
\newcommand{\G}{\mathcal{G}}
\newcommand{\E}{\mathcal{E}} 
\newcommand{\D}{\mathbf{D}} 
\newcommand{\K}{\mathcal{K}_\tau} 
\newcommand{\EK}{\mathcal{E}^{\mathcal{K}_\tau}}
\newcommand{\TS}{S_\tau} 
\newcommand{\Ts}{s_\tau} 
\newcommand{\V}{\mathbf{V}}
\newcommand{\Ed}{\mathbf{E}}
\newcommand{\EKS}{\EK_*}
\newcommand{\HH}{\mathcal{H}}
\DeclareMathOperator*{\argmax}{argmax} 
\newtheorem{definition}{Definition}
\newtheorem{lemma}{Lemma}
\newtheorem{theorem}{Theorem}
\newtheorem{proposition}{Proposition}
\newenvironment{customlem}[1]
  {\innercustomlem}
  {\endinnercustomlem}
\newenvironment{customtheorem}[1]
  {\innercustomthm}
  {\endinnercustomthm}
\newcommand{\GS}{\mathcal{G}^*}
\newcommand{\indG}{\perp\!\!\!\perp_\G}
\newcommand{\indGS}{\perp\!\!\!\perp_{\GS}}
\newcommand{\nindG}{\not\!\perp\!\!\!\perp_\G}
\newcommand{\nindGS}{\not\!\perp\!\!\!\perp_{\GS}}
\newcommand{\paG}{\text{Pa}^{\G}}
\newcommand{\paGS}{\text{Pa}^{\GS}}
\newcommand{\deG}{\text{De}^{\G}}
\newcommand{\deGS}{\text{De}^{\GS}}
\newcommand{\ndeG}{\text{NDe}^{\G}}
\DeclareMathAlphabet\mathbfcal{OMS}{cmsy}{b}{n} 
\title{Score-Based Causal Discovery with Temporal Background Information}
\author[1]{\href{mailto:<tobias.ellegaard@sund.ku.dk>?Subject=Paper: Score-Based Causal Discovery with Temporal Background Information}{Tobias~E.~Larsen}{}}
\author[1]{Claus~T.~Ekstrøm}
\author[1]{Anne~H.~Petersen}
\affil[1]{%
    Section of Biostatistics, Department of Public Health\\
    University of Copenhagen\\
    Copenhagen, Denmark
}
\begin{document}
\maketitle
\pagestyle{numbered}
\begin{abstract}
Temporal background information can improve causal discovery algorithms by orienting edges and identifying relevant adjustment sets. We develop the Temporal Greedy Equivalence Search (TGES) algorithm and terminology essential for score-based causal discovery with tiered background knowledge. TGES learns a restricted Markov equivalence class of directed acyclic graphs (DAGs) using observational data and tiered background knowledge. To construct TGES we formulate a scoring criterion that accounts for tiered background knowledge. We establish theoretical results for TGES, stating that the algorithm always returns a tiered maximally oriented partially directed acyclic graph (tiered MPDAG) and that this tiered MPDAG contains the true DAG in the large sample limit. We present a simulation study indicating a gain from using tiered background knowledge and an improved precision-recall trade-off compared to the temporal PC algorithm. 
We provide a real-world example on life-course health data.

\end{abstract}

\section{Introduction}\label{sec:intro}
When aiming to estimate the causal effect of an exposure using observational data, studies will make use of knowledge of the causal structure of the data. This structure will be represented as a \textit{Directed Acyclic Graph} (DAG) \citep{hernan2020causal}, where the variables are illustrated as nodes and direct causal effects are depicted as edges.
In practice, studies often rely on the knowledge of field experts in order to 
construct a DAG.
However, former work shows that there will not always be consensus of causal relationships among experts, hence researchers might be interested in using causal discovery as a data-driven alternative or complement \citep{TheoryVsDataPetersen}. 
\\
Causal discovery seeks to estimate a causal network using empirical data. There exist two overall branches of causal discovery: constraint-based algorithms and score-based algorithms. A canonical algorithm in the constraint-based branch is the \textit{Peter Clark} (PC) algorithm \citep{PC_article}, which uses conditional independence testing to rule out causal links. For the score-based algorithms, a seminal algorithm would be the \textit{Greedy Equivalence Search} (GES) algorithm \citep{Chickering2002}, which scores graphs on their ability to fit the observed data.
\\
\\
It has been proposed to utilize the temporal information embedded in longitudinal data to improve the estimation of the causal network, as we can rule out any causal effect going backward in time \citep{Petersen2021datadriven}. Several implementations are available for a temporal extension of the PC algorithm (TPC) \citep{CausalDisco,TPCpackage}. However, the PC algorithm is known to disproportionally favor high precision of the graph \citep{petersen_ramsey_ekstrøm_spirtes_2023}, which could be an issue for subsequent causal inference, since a sparser graph imposes more constraints on the distribution and hence is more liable to introduce bias when the graph is incorrectly identified \citep{malinsky2024cautiousapproachconstraintbasedcausal}. 
\\
GES is known to have a more balanced precision-recall trade-off \citep{ComparisonOfPackages}, which makes it an attractive alternative to PC.
\\
We have developed a temporal extension of the GES algorithm, which we call \textit{Temporal Greedy Equivalence Search} (TGES), that inherits the desirable theoretical properties of GES, and proves to be a viable alternative to the temporal extension of PC.
\\
We present TGES and show that it is sound and complete in the limit of large sample sizes. Furthermore, we investigate the performance of TGES in finite samples.
\\
\\
The paper is structured as follows. After introducing preliminary graph notation and assumptions in Section \ref{sec:Graph}, we present \textit{Greedy Equivalence Search} (GES) in Section \ref{sec:GES}. In Section \ref{sec:tier_bg_kn} we define tiered background knowledge. 
In Section \ref{sec:TGES} we define a naive extension to GES which incorporates tiered background knowledge called \textit{Simple Temporal Greedy Equivalence Search} (STGES). Furthermore, we introduce the main contribution, the \textit{Temporal Greedy Equivalence Search} (TGES) algorithm, and show that it is sound and complete. In Section \ref{sec:sim_study} we conduct a simulation study which compares TGES to the performance of GES, STGES, and TPC. TGES is here shown to have considerably better recall than TPC while maintaining a precision that is on par. We showcase TGES using a real-world data example in Section \ref{sec:data_application}, and conclude with a discussion in Section \ref{sec:discussion}.

\section{Background and Notation} \label{sec:Graph}
Throughout the article we assume familiarity with the concepts of \textit{Directed Acyclic Graph} (DAG), \textit{Partially Directed Acyclic Graph} (PDAG), \textit{Complete Partially Directed Acyclic Graph} (CPDAG), \textit{d-separation}, \textit{faithfulness}, \textit{Markov equivalence class} and the\textit{ Markov property} (also known as the causal Markov assumption) \citep{CausalityPeters},. We denote a graph $\G$ as a set of \textit{nodes} $\V$ and \textit{edges} $\Ed$, such that $\G = (\V,\Ed)$. A Markov equivalence class is denoted as $\E$.
\\
\\
In this article, we assume there to be a data set consisting of $d$ random variables $\mathbf{X} = (X_1,..., X_d)$. The data set $\D$ is assumed to be an observed sample of $n$ i.i.d. observations of $\mathbf{X}$.
We assume what is commonly referred to as \textit{causal sufficiency}, namely that there are no unobserved confounders. We also assume no unobserved selection.
\\
We assume that the variables are realizations of a multivariate Gaussian distribution according to a DAG and parameters $\theta$. 
This DAG will be referred to as the true data-generating DAG or the underlying data-generating DAG and the true distribution is assumed Markov and faithful with respect to this DAG. The distribution of the underlying data-generating DAG $\G$ will be denoted as $p$ and the density as $p(\D|\theta,\G)$.
\\
Parents, children, ancestors, descendants, and non-descendants of a node $X$ in $\G$ we denote as $\text{Pa}_X^{\G},\text{Ch}_X^{\G},\text{An}_X^{\G},\text{De}_X^{\G}$, $\text{NDe}_X^{\G}$ respectively, where $X\in \text{De}_X^{\G}$ and $X\notin \text{An}_X^{\G}$. 
\\
Whenever we compare two graphs, it is assumed that they are defined on the same set of nodes.
\\
\\
We will consider PDAGs obtained by restricting a CPDAG according to background information, then \citep{meek1995causalbg} has provided 4 orientation rules which together ensure that no more edges can be oriented in a PDAG. 
These rules are referred to as Meek's rules and are given in Appendix \ref{supmat:subsec:Meeks}. The graph that results from applying the rules is a \textit{maximally oriented partially directed acyclic graph} (MPDAG) \citep{perković2017}. The rules orient edges to prevent cycles and to maintain conditional independence statements.

\section{Greedy Equivalence Search} \label{sec:GES}
A score-based algorithm estimates a graph by comparing different graph structures' ability to fit the data. The idea is that graphs embedding wrong conditional independencies will fit the data poorly. 
\\
\\
Provided a scoring criterion $S(\G,\D)$, the simplest score-based algorithm is the \textit{Best Scoring Graph} \citep{CausalityPeters}. Here, we score every graph $\G$ over the nodes $\mathbf{X}$ and choose the graph that yields the maximum score as our estimate,
\begin{equation*}
\hat{\G} = \argmax_{\G \text{ over } \mathbf{X}} S(\G,\D)    
\end{equation*}
The number of DAGs to score for the \textit{Best Scoring Graph} procedure, grows super-exponentially with the number of nodes, and the algorithm thus quickly becomes computationally infeasible \citep{OEIS_A003024}.
Instead of scoring every possible DAG to obtain an estimate, we can use the \textit{Greedy Equivalence Search} (GES) \citep{Chickering2002}.
GES starts with an empty graph and greedily adds edges to improve the score of the graph. 
The adding of edges is the first of two phases in GES and is referred to as the \textit{forward} phase. After the forward phase, GES optimizes the score by greedily removing edges, which is referred to as the \textit{backward} phase. 
\\
An equivalence class $\E'$ is contained in the forward (backward) \textit{neighboring} equivalence class of $\E$ 
if and only if there exists a $\G\in \E$ and a $\G' \in \E'$ such that $\G$ becomes $\G'$ by one single directed edge addition (removal). 
\\
Each step in the phases of GES substitutes for a neighboring equivalence class if it has a higher score. If multiple neighboring equivalence classes have a higher score, then the one with the highest score is chosen. 
\\
\\
There are several extensions to GES, such as a third \textit{turning} phase which seeks to reverse edges to improve the estimations on finite samples \citep{hauser2012characterization}, or a \textit{Fast GES} which is more feasible to use on high dimensional graphs \citep{Ramsey2017}. 

\subsection{Bayesian Information Criterion}
The \textit{Bayesian Information Criterion} (BIC) \citep{BICSchwarz} is often used as the scoring criterion, and we denote it $S_B$ and define it as
\begin{align*}
    S_B(\G,\D) = \log p(\D|\hat{\theta},\G) - \frac{\#\text{parameters}}{2} \log n
\end{align*}
where $\hat{\theta}$ is the maximum likelihood estimate. In this formulation, larger BIC means better model fit. The BIC holds several desirable properties such as being \textit{decomposable}, which means we can rewrite the score as a sum of local scores of each node given the parents of the node \citep{Chickering2002}. When computing the difference in scores between two graphs, decomposability allows us to only compute the difference in the local scores of the nodes that differ in parent sets. Furthermore, the BIC is a \textit{consistent} scoring criterion for Gaussian data \citep{Haughton}. Consistent scoring criteria disfavors graphs with independence statements not in the true distribution and graphs with too many edges. The BIC is a \textit{locally consistent} scoring criterion as well, which means the score increases as a result of adding an edge that eliminates an independence constraint not in $p$ and decreases as a result of adding an edge that does not eliminate such a constraint.
\\
BIC scores every DAG in the same Markov equivalence class the same, this property is referred to as \textit{score equivalence} and gives sense to the notation $S_B(\E,\D)$, even though we in practice will score a single $\G\in\E$ \citep{Chickering2002}. 
\\
\\
GES requires a score equivalent, decomposable, consistent, and locally consistent score criterion (Formal definitions of these are provided in Appendix \ref{supmat:subsec:BIC}), such as the BIC. These properties enable an efficient search space and that GES in the limit of large sample size correctly estimates the Markov equivalence class of the true data-generating DAG \citep{Chickering2002}. 
The results from \citet{Chickering2002} (Lemmas 9,10 and Theorems 15, 17) are reproduced below: 
\begin{theorem}(Sound and Completeness of GES)\\ 
    GES using a score equivalent, decomposable, consistent and locally consistent score criterion, results in a CPDAG, and for $n \rightarrow \infty$ the CPDAG is almost surely a sound and complete estimate of the Markov equivalence class of the true data-generating DAG. \label{thm:GES_S_C}
\end{theorem}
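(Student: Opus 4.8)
The plan is to follow \citet{Chickering2002} and decompose the claim into four ingredients: (i) the object returned by GES is a CPDAG; (ii) the forward phase terminates at an equivalence class whose DAGs are independence maps (I-maps) of the true distribution $p$; (iii) the backward phase, started from such an I-map, terminates at an equivalence class whose DAGs are \emph{perfect} maps of $p$, which under faithfulness is exactly the Markov equivalence class $\E$ of the true data-generating DAG; and (iv) these population-level statements transfer to the empirical BIC almost surely as $n\to\infty$. Part (i) is immediate: both phases move between \emph{equivalence classes} — each step replaces the current class by a forward (backward) neighbor obtained from some representative $\G$ by a single directed edge addition (removal) — so every intermediate object is the equivalence class of a DAG, represented by its CPDAG; score equivalence makes scoring a class well-defined, and hence the final output is a CPDAG.

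For (ii), suppose the forward phase halts at a class with representative $\G$ that is not an I-map of $p$. By the ordered local Markov property, failure of the Markov property forces the existence of non-adjacent nodes $X,Y$ in $\G$ with $X\in\text{NDe}_Y^{\G}$ and $X\not\perp_p Y\mid\text{Pa}_Y^{\G}$. Adding the arrow $X\to Y$ to $\G$ preserves acyclicity (since $X$ is a non-descendant of $Y$), so the equivalence class of the resulting DAG is a legal forward neighbor; this addition eliminates the independence constraint $X\perp Y\mid\text{Pa}_Y^{\G}$, which by choice is absent in $p$. By local consistency — comparing the two DAGs through the single changed local score, using decomposability — the move strictly increases the score, contradicting local optimality. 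Hence $\G$ is an I-map of $p$.

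For (iii), start the backward phase from an I-map $\E'$ of $p$ and suppose it halts at a class whose representative $\G'$ is an I-map but not a minimal one (equivalently, not a perfect map). Let $\HH$ be a perfect map of $p$; then the independence model of $\G'$ is contained in that of $\HH$, so the DAG-transformation result of \citet{Chickering2002} (the resolution of Meek's conjecture) yields a sequence of covered-edge reversals and single edge additions turning $\HH$ into $\G'$. Since covered-edge reversals preserve the equivalence class (hence the independence model) and edge additions only refine it, every DAG along this sequence is an I-map of $p$; reversing the sequence therefore produces, from a DAG in the current class (reached by reversals alone), a single legal edge \emph{deletion} whose result is still an I-map of $p$. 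That deletion introduces only independences that already hold in $p$, so local consistency again forces a strict score increase, contradicting local optimality. Thus the backward phase halts at a class of perfect maps of $p$, which under faithfulness is $\E$, giving soundness and completeness at the population level. This step is the main obstacle: the existence of a score-improving single-edge deletion at every non-minimal I-map is not a local fact and rests on the combinatorial DAG-transformation theorem, which is the technical core of \citet{Chickering2002} and the one step I would not attempt to reprove from scratch.

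Finally, for (iv), the arguments above use the population ordering of scores, via conditional independence in $p$. Replacing $p$-relations by the sample BIC, consistency of the scoring criterion guarantees that almost surely, for all sufficiently large $n$, the sample score ranks the finitely many equivalence classes visited by the search in agreement with every population relation invoked above — favoring I-maps over non-I-maps in the forward phase and minimal I-maps over non-minimal ones in the backward phase. Hence the two-phase argument carries over verbatim, and GES almost surely returns the CPDAG of the Markov equivalence class of the true data-generating DAG.
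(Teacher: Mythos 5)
Your sketch is correct and follows essentially the same route as the source this theorem rests on: the paper itself gives no proof of Theorem \ref{thm:GES_S_C} but reproduces it from \citet{Chickering2002} (Lemmas 9, 10 and Theorems 15, 17), whose argument---forward phase halting at an I-map of $p$ via local consistency, backward phase halting at a perfect map via the covered-edge/addition transformation theorem (Meek's conjecture), and almost-sure transfer of the finitely many population score comparisons---is exactly what you reconstruct, including correctly flagging the transformation theorem as the non-local technical core. The same strategy is mirrored in the paper's own appendix proofs for the TGES analogues (Lemmas \ref{Lem:TGES_for_S_C} and \ref{Lem:TGES_back_S_C}, Theorem \ref{TIERthm4chick}), so there is no gap to report.
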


\section{Tiered Background Knowledge} \label{sec:tier_bg_kn}
In this section, we give a formal introduction to what we have earlier called temporal information and describe how it can be used in causal discovery. We call it \textit{tiered background knowledge}, as it is not restricted to information induced from time, although this is the main use case.
\\
\\
We start out by defining \textit{background knowledge} $\mathcal{K}= (\mathcal{R},\mathcal{F})$ as a set of required edges $\mathcal{R}$ and forbidden edges $\mathcal{F}$.
We say that a DAG $\G=(\V,\Ed)$ \textit{encodes} the background knowledge $\mathcal{K}$ if $\mathcal{R} \subseteq \Ed$ and $\mathcal{F}\cap\Ed=\emptyset$. If a graph $\G$ does not encode $\mathcal{K}$, we say that $\G$ contradicts $\mathcal{K}$. We say a CPDAG/equivalence class $\E$ is \textit{in agreement} with background knowledge $\mathcal{K}$ if there exists a $\G\in\E$ that encodes $\mathcal{K}$. Analogously, a PDAG is in agreement with $\mathcal{K}$ if the class of DAGs the PDAG represents contains a DAG that encodes $\mathcal{K}$. We say that a distribution encodes $\mathcal{K}$ when the underlying data-generating DAG of the distribution encodes $\mathcal{K}$. We assume throughout that the background knowledge is encoded by the true distribution $p$.
\\
Before defining \textit{tiered background knowledge} we give the definition of a \textit{tiered ordering}.
\begin{definition}
    (Tiered ordering, \citet{bang2023wiser})
    \\
    Let $\G$ be a PDAG with node set $\mathbf{V}$ of size $d$, and let $T \in \N, T \leq d$. A tiered ordering of the nodes in $\mathbf{V}$ is a map $\tau : \mathbf{V} \mapsto \{ 1,...,T\}^d$ that assigns each node $V \in \mathbf{V}$ to a unique tier $t \in \{1, ... ,T\}$.
\end{definition}
\begin{definition}(Tiered background knowledge, \citet{bang2023wiser})
\\
Given a tiered ordering $\tau$ we define tiered background knowledge as $\K = (\mathcal{R},\mathcal{F})$ where $\mathcal{R} = \emptyset$ and $\mathcal{F}$ is the set of forbidden directed edges $ \{A \rightarrow B\} $ such that  $ \tau(A) > \tau(B) $.
\end{definition}
Graphs that encode the tiered background knowledge are made up of two types of edges: An edge $ \{A \rightarrow B\}$ is a \textit{cross-tier} edge if $ \tau(A) < \tau(B)$ and an \textit{in-tier} edge if $ \tau(A) = \tau(B)$.

\begin{algorithm}[H]
\caption{Restrict PDAG according to $\K$ \citep{perković2017}}
\label{alg:restrict}
\begin{algorithmic}[1]
\Statex \textbf{Input:} $\K= (\mathcal{R},\mathcal{F})$ and PDAG $\mathcal{W} = (\V,\Ed)$ in agreement with $\K$
\State $\Ed' \gets \Ed$
\For{ $\left\{A \text{ --- } B\right\} \in \Ed$}
\If{$\left\{A \rightarrow B\right\} \in \mathcal{F}$}
\State Replace $\left\{A \text{ --- } B\right\}$ with $\left\{A \rightarrow B\right\}$ in $\Ed'$
\EndIf
\EndFor
\State \textbf{Output:} PDAG $\mathcal{W'}=(\V,\Ed')$
\end{algorithmic}
\end{algorithm}

If a CPDAG is in agreement with $\K$, we can \textit{restrict} it according to $\K$ (Algorithm \ref{alg:restrict}) and obtain the resulting PDAG. This enables the orientation of additional undirected edges by Meek's rules \citep{meek1995causalbg}.
By \citet{bang2023wiser}, the graph resulting from applying only Meek's rule 1 until no further change is a \textit{tiered MPDAG}, which no longer represents a Markov equivalence class, but instead a \textit{restricted equivalence class} which we will denote $\EK$.
\\
Tiered background knowledge does not affect the existence or absence of adjacencies. It only affects directions, and only where there exists an adjacency between nodes in two different tiers (cross-tier). This may indirectly affect in-tier edge orientations as well (via Meek's rule 1).

\section{The Temporal Greedy Equivalence Search Algorithm} \label{sec:TGES}
In this section we first present a naive temporal modification of GES, then introduce a new score criterion that incorporates tiered background knowledge, and following this propose a more advanced temporal extension of GES. We prove that both are sound and complete.
\subsection{Simple Temporal Greedy Equivalence Search}
The algorithm \textit{Simple Temporal Greedy Equivalence Search} (STGES) is a naive modification of GES utilizing tiered background knowledge $\K$.
First STGES runs GES to completion without taking $\K$ into account to obtain a CPDAG. This CPDAG may not, due to statistical error, be in agreement with $\K$. STGES then removes all directed edges contradicting $\K$ to obtain a PDAG that is in agreement with $\K$. Afterward, STGES restricts undirected edges according to $\K$ using Algorithm \ref{alg:restrict} and then uses Meek's rules 1-4 \citep{meek1995causalbg} to orient additional edges. This is a post-hoc background knowledge approach following ideas presented by \citet{perković2017}. 
The STGES algorithm is illustrated and pseudo code is provided in Appendix \ref{supmat:sec:alg}.
\\
In the large sample limit, STGES is sound and complete:
\begin{theorem}(Sound and Completeness of STGES)\\ 
    For $n \rightarrow \infty$ STGES almost surely results in a CPDAG which is a sound and complete estimate of the Markov equivalence class of the true data-generating DAG. \label{thm:STGES_S_C} 
\end{theorem}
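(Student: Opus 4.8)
The plan is to use Theorem~\ref{thm:GES_S_C} to reduce the claim to a purely graph-theoretic statement about the post-hoc incorporation of $\K$, and then to verify that, in the large-sample limit, none of the error-correcting steps of STGES actually change anything.

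First I would record the consequences of the standing assumptions. Since $p$ encodes $\K$, the true data-generating DAG $\G$ encodes $\K$; in particular $\G$ contains no forbidden edge. The forward and backward phases of STGES are exactly GES run on $\D$, so by Theorem~\ref{thm:GES_S_C} there is an almost-sure event on which, as $n\to\infty$, these phases return the CPDAG $\mathcal{C}$ of the true Markov equivalence class $\E$ (the class containing $\G$). Every directed edge of a CPDAG is oriented identically in every DAG of its class and hence in $\G$; since $\G$ contains no forbidden edge, no directed edge of $\mathcal{C}$ is forbidden. Thus $\mathcal{C}$ is already in agreement with $\K$, and the STGES step that deletes directed edges contradicting $\K$ deletes nothing.

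Second I would analyse the remaining operations on this event. STGES restricts $\mathcal{C}$ by $\K$ using Algorithm~\ref{alg:restrict} and then closes under Meek's rules 1--4. Applying Algorithm~\ref{alg:restrict} followed by Meek's rules to a CPDAG in agreement with the background knowledge yields, by the soundness and completeness of Meek's orientation rules \citep{meek1995causalbg,perković2017}, the unique maximally oriented PDAG $\mathcal{M}$ whose consistent extensions are exactly the DAGs of $\E$ that encode $\K$; for tiered $\K$ this is the tiered MPDAG and, by \citet{bang2023wiser}, only rule~1 ever fires. By definition of the restricted equivalence class, $\mathcal{M}$ represents $\EK$. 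For soundness, $\EK$ is precisely the set of DAGs of $\E$ encoding $\K$, so $\G\in\EK$: the output never reverses a true edge, never adds or deletes an adjacency, and never contradicts $\K$. For completeness, $\mathcal{M}$ is maximally oriented given $\E$ and $\K$, and forgetting the background knowledge every DAG it represents still lies in $\E$, so the Markov equivalence class it refines is exactly $\E$. Chaining the three observations gives the theorem.

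The step I expect to be the main obstacle is the second one: one must argue that the post-hoc manipulations of STGES --- which exist only to repair finite-sample mistakes --- are genuinely inert in the limit. This is where the assumption that $p$ encodes $\K$ does the real work, since it is what forces $\mathcal{C}$ to agree with $\K$ so that the deletion step cannot destroy a correctly oriented edge; the rest rests on the classical correctness of Meek's rules for restricting a CPDAG by consistent background knowledge. A secondary point to handle carefully is that STGES runs all four Meek rules while the tiered MPDAG is defined via rule~1 alone, for which one invokes \citet{bang2023wiser} to see that rules 2--4 add nothing here.
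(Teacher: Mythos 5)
Your proposal is correct and follows essentially the same route as the paper's proof: invoke Theorem~\ref{thm:GES_S_C} to get the true Markov equivalence class, observe that since the data-generating DAG encodes $\K$ and lies in that class the edge-deletion step is inert, and then note that Algorithm~\ref{alg:restrict} plus Meek's rules yields the tiered MPDAG representing the restricted equivalence class. Your treatment is somewhat more explicit than the paper's (in particular, why no directed edge of the CPDAG can be forbidden, and why rules 2--4 add nothing beyond rule 1), but the argument is the same.
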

The proof of Theorem \ref{thm:STGES_S_C} is provided in Appendix \ref{supmat:subsec:STGES}.
\\
\\
When using tiered background knowledge, we will often have very high trust in the validity of the information, for example, if this information is temporal. A post-hoc method such as STGES, will not reflect this, as the first part of the algorithm will completely disregard $\K$.
Therefore, we propose an alternative algorithm that utilizes $\K$ earlier on, we call this \textit{Temporal Greedy Equivalence Search} (TGES).

\subsection{Temporal Greedy Equivalence Search}
We require the score criterion in TGES to have certain properties. Like for GES, we require decomposability but will extend consistency, local consistency, and score equivalence to \textit{$\K$-consistency, local $\K$-consistency} and \textit{$\K$-score equivalence} respectively. The definition of $\K$-consistency is given in Appendix \ref{supmat:subsec:TBIC}, since combining decomposability and $\K$-consistency leads to the more interpretable property of \textit{local $\K$-consistency}. We denote independence and dependence in the distribution $p$ as $\perp\!\!\!\perp_p$ and $\not\!\perp\!\!\!\perp_p$, respectively.
\begin{definition} (Local $\K$-Consistency)\label{def:LocConsTier}
\\
Let $\G$ be any DAG that encodes $\K$, and let $\G'$ be the DAG that results from adding the edge $\left\{X \rightarrow Y\right\}$ to $\G$. A scoring criterion $S$ is $\K$-locally consistent if the following holds:
\begin{enumerate}
    \item[i.] If $\tau(X) \leq \tau(Y)$ and
    $Y \not\!\perp\!\!\!\perp_p X \mid \text{Pa}_{Y}^{\G}$, then as $n \rightarrow \infty$ $S(\G', \D, \K) > S(\G, \D, \K)$ almost surely.
    \item[ii.] If $\tau(X) \leq \tau(Y)$ and $Y \perp\!\!\!\perp_p X \mid \text{Pa}_{Y}^{\G}$, then as $n \rightarrow \infty$ $S(\G', \D, \K) < S(\G, \D, \K)$ almost surely.
    \item[iii.] If $\tau(X) > \tau(Y)$, then $S(\G', \D, \K) < S(\G, \D, \K)$.
\end{enumerate}  
\end{definition}
\begin{definition} ($\K$-Score Equivalence)\label{def:K_ScoreEq}
    \\ 
    A scoring criterion $S$ is $\K$-score equivalent if for any pair of DAGs $\G,\G'$ in a given restricted equivalence class $\EK$, we have that  
    \begin{equation*}
        S(\G,\D,\K)=S(\G',\D,\K)
    \end{equation*}
\end{definition}
We construct a scoring criterion that has exactly these properties for Gaussian data, namely the \textit{Temporal Bayesian Information Criterion} (TBIC):
\begin{definition} (Temporal Bayesian Information Criterion)\\ \label{def:TBIC}
\begin{align*}
    \TS(\G,\D,\K) = 
\begin{cases} 
S_B(\G,\D) & \text{if } \G \text{ encodes } \K \\
-\infty & \text{if } \G \text{ contradicts } \K 
\end{cases}
\end{align*}    
\end{definition}
\begin{proposition} \label{thm:TBICallProps}
    TBIC on Gaussian data is $\K$-consistent, $\K$-score equivalent, decomposable and hence also locally $\K$-consistent. 
\end{proposition}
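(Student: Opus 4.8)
The plan is to reduce each of the four claimed properties of $\TS$ to the corresponding (already established) property of the ordinary Gaussian BIC $S_B$, using the two facts that $\TS(\cdot,\D,\K)$ and $S_B(\cdot,\D)$ agree on every DAG that encodes $\K$, and that $\TS$ takes the value $-\infty$ on exactly the DAGs that contradict $\K$. For \emph{decomposability}, since $S_B$ is decomposable I would write $S_B(\G,\D)=\sum_{i=1}^d s_B(X_i,\text{Pa}_{X_i}^{\G},\D)$ and define temporal local scores by $\Ts(X_i,P,\D,\K)=s_B(X_i,P,\D)$ when $\tau(X_j)\le\tau(X_i)$ for every $X_j\in P$ and $\Ts(X_i,P,\D,\K)=-\infty$ otherwise, adopting the convention that a finite sum containing a $-\infty$ summand equals $-\infty$ (the remaining local BIC scores being finite almost surely for Gaussian data). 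Since a DAG contradicts $\K$ if and only if it contains a forbidden edge $X_j\to X_i$ with $\tau(X_j)>\tau(X_i)$, i.e.\ if and only if some node has a forbidden parent, the sum $\sum_i \Ts(X_i,\text{Pa}_{X_i}^{\G},\D,\K)$ equals $S_B(\G,\D)$ when $\G$ encodes $\K$ and equals $-\infty$ when $\G$ contradicts $\K$, which is exactly $\TS(\G,\D,\K)$.

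For \emph{$\K$-score equivalence}, I would use that a restricted equivalence class $\EK$ --- the set of DAGs represented by a tiered MPDAG in the sense of \citet{bang2023wiser} --- is contained in a single Markov equivalence class and consists only of DAGs that encode $\K$. Hence for $\G,\G'\in\EK$ we have $\TS(\G,\D,\K)=S_B(\G,\D)$ and $\TS(\G',\D,\K)=S_B(\G',\D)$; since $\G$ and $\G'$ are Markov equivalent and $S_B$ is score equivalent, $S_B(\G,\D)=S_B(\G',\D)$, so $\TS(\G,\D,\K)=\TS(\G',\D,\K)$.

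For \emph{$\K$-consistency} (definition in Appendix \ref{supmat:subsec:TBIC}), the comparison of two DAGs splits into cases. If both DAGs encode $\K$, then $\TS$ coincides with $S_B$ on both and the required large-sample inequalities are exactly ordinary consistency of $S_B$; the standing assumption that $p$ encodes $\K$ ensures that a DAG that is Markov with respect to $p$ and minimal in parameters can be chosen among the $\K$-encoding DAGs, so the comparison stays in the range where $\TS=S_B$. If one of the two DAGs contradicts $\K$ it scores $-\infty$ while any $\K$-encoding competitor scores a finite $S_B$, hence the $\K$-encoding DAG is preferred. Combining decomposability with $\K$-consistency then yields \emph{local $\K$-consistency} via the implication recorded in Appendix \ref{supmat:subsec:TBIC}; alternatively one checks Definition \ref{def:LocConsTier} directly: in cases (i)--(ii) the added edge $X\to Y$ has $\tau(X)\le\tau(Y)$ and is therefore not forbidden, so $\G'$ still encodes $\K$ and both statements reduce to local consistency of $S_B$, while in case (iii) the edge is forbidden, $\G'$ contradicts $\K$, and $\TS(\G',\D,\K)=-\infty<\TS(\G,\D,\K)$ for every $n$.

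The step I expect to be the main obstacle is \emph{$\K$-score equivalence}, specifically the justification that every DAG represented by a tiered MPDAG both lies in a single Markov equivalence class and encodes $\K$; this is where the structural results on tiered MPDAGs from \citet{bang2023wiser} do the real work, the remaining steps being bookkeeping around the dichotomy between finite scores and $-\infty$. A secondary point of care is the $-\infty$ arithmetic in the decomposition and making sure the consistency argument never needs to weigh two $-\infty$ scores against each other in a way that matters --- which it does not, because $p$ encoding $\K$ always provides a $\K$-encoding DAG that is Markov with respect to $p$.
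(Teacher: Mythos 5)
Your proof is correct. For decomposability, $\K$-score equivalence, and $\K$-consistency it is essentially the paper's own argument (Propositions \ref{prop:TBICDecomp}, \ref{prop:TBICScoreEq}, \ref{prop:TBICcons}): the same local score with the $-\infty$ convention, the same reduction to score equivalence of $S_B$ on the Markov equivalence class containing $\EK$, and the same encode/contradict case split. The genuine divergence is local $\K$-consistency: the paper never verifies Definition \ref{def:LocConsTier} directly for TBIC, but instead proves the general implication that \emph{any} decomposable, $\K$-consistent score is locally $\K$-consistent (Proposition \ref{prop:sc_loc_cons}), via a construction that fixes the parent set of $Y$ inside a DAG $\HH$ for which adding $\left\{X \rightarrow Y\right\}$ gives a completely connected DAG encoding $\K$, and then invokes $\K$-consistency. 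That generality is what the ``hence'' in the statement rests on, and it is reused later when the paper argues that any finite consistent decomposable score penalized to $-\infty$ off $\K$ inherits the TGES guarantees. Your alternative direct check---cases (i)--(ii) of Definition \ref{def:LocConsTier} reduce to local consistency of $S_B$ (Proposition \ref{prop:BICallProps}) because $\G'$ still encodes $\K$, and case (iii) is the $-\infty$ comparison---is simpler and fully valid for TBIC, but buys less, since it does not deliver the score-agnostic implication. Two minor points: your caution about needing $p$ to encode $\K$ in the consistency step is unnecessary, as Definition \ref{def:K_Cons} never compares two DAGs that both contradict $\K$; and the fact you flag as the main obstacle (that $\EK$ consists of $\K$-encoding DAGs inside a single Markov equivalence class) is immediate from the definition of the restricted equivalence class, exactly as used in Proposition \ref{prop:TBICScoreEq}.
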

The proof for Proposition \ref{thm:TBICallProps} is provided in Appendix \ref{supmat:subsec:TBIC}. 
\\
\\
Before presenting TGES, we will introduce neighboring restricted equivalence classes of $\EK$. An equivalence class $\Tilde{\E}^{\K}$ is contained in the forward, backward or turning neighboring equivalence classes of $\EK$ if and only if there exists a $\G\in \EK$ and a $\Tilde{\G} \in \Tilde{\E}^{\K}$ such that $\G$ becomes $\Tilde{\G}$ by one single directed edge addition, removal or reversal, respectively. We will analogously use the term \textit{neighboring graphs} for the corresponding tiered MPDAGs.
\\
\\
TGES requires a scoring criterion $S$ that is $\K$-consistent, $\K$-score equivalent, decomposable, and hence will also be locally $\K$-consistent.
\\
TGES consists of 3 phases analogously to GES: forward, backward and turning.
Like GES, TGES starts with an empty graph and then iterates through forward, backward and turning phases until no more operations can be made by either phase. Each of the phases consists of \textit{steps} which are repeated until no further changes that increase the score can be made to the graph (See Algorithm \ref{alg:TGES}). Each step in a phase consist of 3 \textit{stages} (illustrated in Figure \ref{fig:TGES}):
\\
Stage (i) evaluates the neighboring CPDAGs of the tiered MPDAG. If no neighboring CPDAG has a greater score, we conclude the phase with the current tiered MPDAG. If a neighboring CPDAG has a greater score, we choose the graph with the largest score and move to the next stage.
\\
Stage (ii) restricts the CPDAG according to $\K$ by orienting all undirected cross-tier edges according to the tiered ordering.
\\
Stage (iii) uses Meek's rule 1 to infer extra edge orientations in the graph, resulting in a tiered MPDAG. 
\\
We then go to Stage (i) again and commence a new step.
\begin{algorithm}[H]
\caption{Temporal Greedy Equivalence Search (TGES)}
\label{alg:TGES}
\begin{algorithmic}[1]
\Statex \textbf{Input:} $\D,\K$
\State $\mathcal{W} \gets $ empty graph.
\While{there exist a neighboring graph $\mathcal{W}'$ s.t. $S(\mathcal{W},\D,\K)<S(\mathcal{W}',\D,\K)$}
\While{$\exists$ edge additions s.t. score improves}
\State $\mathcal{W} \gets $Temporal\_Forward\_Step$(\mathcal{W},\D,\K)$
\EndWhile
\While{$\exists$ edge removals s.t. score improves
}
\State $\mathcal{W} \gets $Temporal\_Backward\_Step$(\mathcal{W},\D,\K$
\EndWhile
\While{$\exists$ edge reversals s.t. score improves
}
\State $\mathcal{W} \gets $Temporal\_Turning\_Step$(\mathcal{W},\D,\K)$
\EndWhile
\EndWhile
\State \textbf{Output:} $\mathcal{W}$
\end{algorithmic}
\end{algorithm}
The algorithms for Temporal\_Forward-, -Backward- and -Turning\_Step are provided in Appendix \ref{supmat:sec:alg}. 
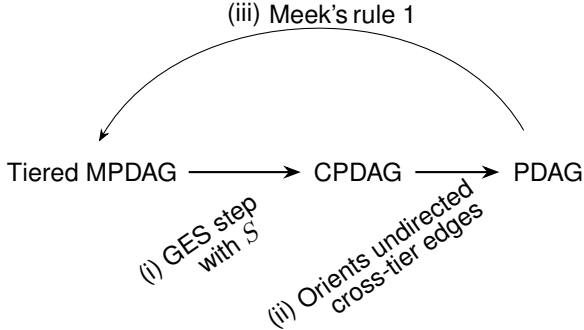
\begin{figure}[!htb]
\centering
\begin{tikzpicture}[>=Stealth, node distance=2cm, every node/.style={circle, draw, thick}]

  \tikzset{reg/.style={draw = none,font=\sffamily}}

  \node[reg] (TMPDAG) at (-2,0) {Tiered MPDAG};
  \node[reg] (CPDAG) at (1.5,0) {CPDAG};
  \node[reg] (PDAG) at (4,0) {PDAG};
  \node[reg] (invTMPDAG) at (-2,0.25) {};
  \node[reg, rotate = 35] (Edgelabel1.2) at (0.5,-1.9){(ii)};
  \node[reg, rotate = 35] (Edgelabel1.2) at (1.85,-0.95){Orients undirected};
  \node[reg, rotate = 35] (Edgelabel1.3) at (2.15,-1.15){cross-tier edges};
  \node[reg, rotate = 35] (Edgelabel2_2) at (-0.6,-0.9){(i) GES step};
  \node[reg, rotate = 35] (Edgelabel2_3) at (-0.2,-1){with $S$};
  \node[reg] (Edgelabel3.1) at (0,2.1){(iii)};
  \node[reg] (Edgelabel3.2) at (0.85,2.1){Meek's};
  \node[reg] (Edgelabel3.3) at (1.85,2.1){rule 1};

  \draw[->, thick] (TMPDAG) -- (CPDAG);
  \draw[->, thick] (CPDAG) -- (PDAG);
  \draw [->] (PDAG) to [out=120,in=60] (invTMPDAG);
  
\end{tikzpicture}
\caption{Illustration of a step in TGES
, consisting of stages (i)-(iii). $S$ is a $\K$-consistent, $\K$-score equivalent and decomposable score criterion.} \label{fig:TGES}
\end{figure}
\\
TGES is implemented in R and available in the supplementary code.
Using Tetrad \citep{scheines1998tetrad}, it is possible to combine the algorithm Fast Greedy Equivalence Search \citep{Ramsey2017} and tiered background knowledge in order to obtain a similar but not identical algorithm. 

\subsubsection{Properties of Temporal Greedy Equivalence Search} \label{subsec:prop_TGES}
We present properties of the TGES algorithm.
\begin{lemma}
    TGES results in a tiered MPDAG.
    \label{lem:TGESResult}
\end{lemma}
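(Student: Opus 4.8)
The plan is to prove the stronger statement that \emph{every} working graph $\mathcal{W}$ maintained by TGES is a tiered MPDAG, and then conclude, since the output of Algorithm \ref{alg:TGES} is simply the value of $\mathcal{W}$ when the outer loop halts. I would argue this by induction on the number of completed steps. For the base case, the initial $\mathcal{W}$ is the empty graph: it is its own CPDAG, it trivially encodes $\K$ (with $\Ed=\emptyset$ we have $\mathcal{R}=\emptyset\subseteq\Ed$ and $\mathcal{F}\cap\Ed=\emptyset$), Algorithm \ref{alg:restrict} leaves it unchanged because it has no undirected edges, and Meek's rule 1 forces nothing, so by the characterization of \citet{bang2023wiser} it is a tiered MPDAG. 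I would also note that the output is well defined: by the loop conditions of Algorithm \ref{alg:TGES} every completed step strictly increases $S(\cdot,\D,\K)$, and since there are only finitely many graphs on $\V$ this can happen only finitely often.

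For the inductive step, assume $\mathcal{W}$ is a tiered MPDAG at the start of a step and follow the three stages. In Stage (i), either no neighboring CPDAG has a larger score, the phase concludes with $\mathcal{W}$ unchanged, and the invariant is preserved trivially; or a neighboring CPDAG $\E'$ with strictly larger score is chosen. Two facts about $\E'$ matter: it is a CPDAG --- namely the CPDAG of a DAG obtained from a DAG in the current restricted equivalence class by one edge addition, removal, or reversal --- and it is in agreement with $\K$. The latter is either part of the definition of the neighbor set (the edge modification is performed between DAGs that encode $\K$, so $\E'$ contains a DAG encoding $\K$) or else is forced by the score: TBIC assigns $-\infty$ to any graph contradicting $\K$, whereas the current $\mathcal{W}$ has finite score, so a strictly higher-scoring $\E'$ cannot contradict $\K$.

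Given that $\E'$ is a CPDAG in agreement with $\K$, Stage (ii) applies Algorithm \ref{alg:restrict} to $\E'$; by \citet{perković2017} this is well defined and returns a PDAG that is again in agreement with $\K$ and in which every cross-tier edge is oriented consistently with $\tau$. Stage (iii) then closes the graph under Meek's rule 1. By \citet{bang2023wiser}, restricting a CPDAG that is in agreement with $\K$ according to $\K$ and closing under Meek's rule 1 yields exactly a tiered MPDAG; hence the graph produced by the step is a tiered MPDAG, which closes the induction and shows that the output of TGES is a tiered MPDAG.

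The step I expect to be the main obstacle is the claim in Stage (i) that the selected neighboring CPDAG $\E'$ is in agreement with $\K$. This requires being precise about what the ``neighboring CPDAGs of a tiered MPDAG'' are --- that they correspond to the neighboring \emph{restricted} equivalence classes of $\EK$, hence arise from edge modifications of DAGs that themselves encode $\K$ --- and about how such a CPDAG is scored, so that the properties guaranteed for TBIC ($\K$-consistency, $\K$-score equivalence, decomposability) genuinely prevent a contradicting $\E'$ from ever being chosen. A secondary point requiring care is the appeal to \citet{bang2023wiser} that Meek's rule 1 \emph{alone}, rather than all four of Meek's rules, is what Stage (iii) needs in order to reach the tiered MPDAG.
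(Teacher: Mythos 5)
Your proposal is correct and follows essentially the same route as the paper: induction over steps with the empty graph as base case, Stage (i) never selecting a class contradicting $\K$ because the scoring criterion (via $\K$-consistency, property (iii), i.e.\ the $-\infty$ penalty for TBIC) would make it score below the current graph, and Stages (ii)--(iii) yielding a tiered MPDAG by Algorithm \ref{alg:restrict} and the result of \citet{bang2023wiser} on Meek's rule 1. Note only that of your two suggested justifications for agreement with $\K$, the score-based one is the right (and the paper's) argument, since neighboring classes are defined by a single edge modification of a DAG in $\EK$ and the modified DAG need not itself encode $\K$.
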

The proof of Lemma \ref{lem:TGESResult} is provided in Appendix \ref{supmat:subsubsec:StageI}. 
\\
This property ensures that the graph outputted by TGES actually represents a class of DAGs and enables the use of theory based on MPDAGs, e.g. for subsequent causal inference \citep{perković2017}.
\\
This restricted equivalence class of DAGs is not guaranteed to contain the true data-generating DAG for finite samples. But as the next result shows, the restricted equivalence class contains the true data-generating DAG in the large sample limit.
\begin{theorem} \label{Thm:TGES_ALL_S_C}
(Sound and Completeness of TGES)\\
TGES using a $\K$-score equivalent, decomposable, $\K$-consistent and $\K$-locally consistent score criterion, results in a tiered MPDAG and for $n \rightarrow \infty$ the tiered MPDAG is almost surely a sound and complete estimate of the restricted Markov equivalence class of the true data-generating DAG.  
\end{theorem}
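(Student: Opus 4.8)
The first conclusion — that TGES returns a tiered MPDAG — is exactly Lemma~\ref{lem:TGESResult}, and since every accepted operation strictly increases the TBIC, which is bounded above, the algorithm terminates. For the large-sample claim the plan is to mirror Chickering's analysis of GES (Theorem~\ref{thm:GES_S_C}) inside the space of tiered MPDAGs, exploiting that every tiered MPDAG visited by TGES agrees with $\K$, that on such graphs the TBIC coincides with $S_B$ (and equals $-\infty$ on any DAG contradicting $\K$), and that $S_B$ is score equivalent while TBIC is $\K$-score equivalent, so that on the structures TGES actually visits its greedy comparisons agree, in the limit and almost surely, with those GES would make. The first step is to fix the target: the true DAG $\G^*$ encodes $\K$ by assumption, so its restricted equivalence class $\EKS$ is well defined, and it coincides with the set of perfect maps of $p$ that encode $\K$; by Proposition~\ref{thm:TBICallProps} ($\K$-consistency together with $\K$-score equivalence) and faithfulness of $p$ to $\G^*$, in the large-sample limit $\EKS$ is almost surely the unique maximizer of the TBIC among tiered MPDAGs, hence the unique possible fixed point of TGES. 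It therefore suffices to show that TGES reaches $\EKS$.

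\emph{Forward phase.} I would show that when the forward phase halts at a tiered MPDAG $\mathcal{W}$ with DAG representative $\G$, the distribution $p$ is Markov with respect to $\G$. Suppose not; then a local Markov relation fails at some node, so by decomposability and local $\K$-consistency~(i) some non-adjacent pair of nodes can be joined by a directed edge whose addition strictly increases the score in the limit — provided that edge can be oriented so as to respect both acyclicity and the tier ordering $\tau$. The crucial new ingredient is that, because $\G^*$ encodes $\K$, it admits a topological ordering refining $\tau$; from this one argues that among the candidate ``repairing'' edges there is always one whose $\tau$-respecting orientation is acyclic, so a score-improving forward operation exists, contradicting that the phase halted. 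This also needs the bookkeeping that the tiered MPDAG produced by stages~(ii)--(iii) has the same skeleton as, and represents a subclass of, the CPDAG selected in stage~(i): since restriction (Algorithm~\ref{alg:restrict}) and Meek's rule~1 only orient edges, $\K$-score equivalence makes the TBIC of that tiered MPDAG equal to $S_B$ of the stage-(i) CPDAG, so the step reproduces the corresponding GES step.

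\emph{Backward and turning phases.} Starting from the forward phase's output, to which $p$ is Markov, I would run the analogue of Chickering's backward argument: if the current $\K$-encoding DAG is not a perfect map of $p$ it contains a superfluous edge whose removal eliminates no independence constraint of $p$, and by local $\K$-consistency~(ii) deleting it strictly increases the score in the limit; moreover, since the required-edge set is empty, deleting an edge can never break $\K$-encoding, so the tier ordering poses no obstruction in this phase. Hence the backward phase reaches a tiered MPDAG representing $\EKS$. Finally, once at $\EKS$, any forward, backward, or turning neighbor either still agrees with $\K$ — in which case it lies in a different restricted equivalence class with strictly smaller TBIC by $\K$-consistency — or would require a DAG contradicting $\K$ (in particular, a cross-tier edge can never be reversed), with TBIC $=-\infty$; so no operation improves the score, the outer loop exits, and TGES outputs the tiered MPDAG of $\EKS$, which is a sound and complete estimate of the restricted Markov equivalence class of $\G^*$.

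\emph{Main obstacle.} The delicate point is the forward-phase guarantee: one must show that the score-improving edge supplied by the failure of the Markov property can always be oriented consistently with the tier ordering, so that the tier constraint cannot trap the greedy search at a non-$I$-map. This is precisely where the existence of a $\tau$-refining topological order of $\G^*$ is used, and where the argument genuinely departs from Chickering's; the remaining ingredients — the composite-stage score bookkeeping via $\K$-score equivalence, and the adaptations of the backward and turning phases — are comparatively routine.
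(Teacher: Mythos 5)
Your overall architecture matches the paper's: Lemma~\ref{lem:TGESResult} for the finite-sample claim, then phase-by-phase limit arguments (the paper's Lemmas~\ref{Lem:TGES_for_S_C}, \ref{Lem:TGES_back_S_C}, \ref{Lem:TGES_turn_S_C}) showing the forward phase ends at a $\K$-encoding I-map of $p$, the backward phase prunes to the restricted class of $\G^*$, and the turning phase makes no change because that class is the unique optimum. However, as written this is a plan with the two genuinely hard steps left unproven. For the forward phase you correctly identify the obstacle — the score-improving edge guaranteed by local $\K$-consistency may be forbidden by $\tau$ or create a cycle — but your proposed resolution (``$\G^*$ admits a topological ordering refining $\tau$, from which one argues a repairing edge with a $\tau$-respecting acyclic orientation exists'') is exactly the claim that needs proof, and it does not follow in any obvious way from the existence of such an ordering: the repairing edge must simultaneously point into a node $W$ with $W \not\!\perp\!\!\!\perp_p V \mid \text{Pa}^{\G}_W$ for the \emph{current} graph's parent set, respect $\tau$, and avoid cycles in the current graph. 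The paper establishes this by tracing an open path in $\GS$ and locating a suitable node (the non-collider $V$ or the collider-adjacent node $K$), with an exhaustive case split (Cases C, D.1.1, D.1.2.1, D.2) and a recursion on path length for Case D.1.2.2; none of that is supplied or replaced by your topological-order remark.

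The second gap is that you label the backward and turning phases ``comparatively routine,'' but both rest on an ingredient your sketch omits: a tiered extension of Chickering's transformational characterization (the paper's Theorem~\ref{TIERthm4chick}), i.e.\ that for $\K$-encoding DAGs with $\G^* \leq \G'$ there is a sequence of covered reversals and additions from $\G^*$ to $\G'$ whose \emph{every intermediate DAG still encodes} $\K$. This is what lets the paper exhibit, via the DAG preceding the last addition, a neighboring \emph{restricted} equivalence class with strictly higher score (using $\K$-consistency (ii)), and it is also what drives Proposition~\ref{prop:PerfMapOptiTBIC}, the unique-maximizer statement you assert at the outset for the turning phase. Your local version of the backward step (``there is a superfluous edge whose removal eliminates no independence constraint of $p$'') is not justified — Chickering himself needs the transformational argument here, and in the tiered setting one must additionally check (as the paper does case by case for Algorithm~\ref{alg:apply}, including the subtle argument that the covered edge reversed is in-tier) that no step contradicts $\K$. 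Observing that deletions cannot violate $\K$ does not address this, since the issue is the existence of the improving deletion neighbor, not the legality of deletions. So the proposal correctly maps the terrain but leaves both of the paper's novel technical contributions — the tier-aware repairing-edge construction and Theorem~\ref{TIERthm4chick} — as unproven assertions.
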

The proof and relevant definitions, lemmas, and theorems are provided in Appendix \ref{supmat:subsec:TGES_S_C}.
\section{Simulation Study} \label{sec:sim_study}
We conduct a simulation study to evaluate the strengths and limitations of TGES on finite data. We use four different metrics to measure how close the estimated graph is to the data-generating DAG. The algorithms we are testing will not be able to fully identify the data-generating DAG, as they estimate a (restricted) equivalence class. Hence, we will compare the estimated graphs to the tiered MPDAG of the data-generating DAG.

\subsection{Setup}
The simulation study is performed using R version 4.4.1. using packages pcalg \citep{pcalgRpackageGES}, causalDisco \citep{CausalDisco} and tpc \citep{TPCpackage}. The code is given as supplementary material. We have simulated 10 000 data-generating DAGs, with the number of nodes uniformly drawn from $\{7,8,...,20\}$, using an Erdős-Rényi model, with the probability of an edge being added drawn from Unif$(0.1,0.8)$ for each graph. We only use DAGs with at least one edge. We divide the nodes into three tiers with at least one node in each tier. We simulate from a multivariate Gaussian distribution with a covariance structure according to the data-generating DAG, with a sample size of $n=10000$. The causal effects of each edge are drawn from a uniform distribution between $0$ and $1$. 
\\
The algorithms GES (pcalg, \citet{pcalgRpackageGES}), STGES, TGES and TPC (tpc, \citet{TPCpackage}) are evaluated on each simulated dataset obtaining a CPDAG, PDAG, or tiered MPDAG, respectively.
GES and STGES uses Gaussian BIC and TGES uses Gaussian TBIC. TPC uses a hyperparameter $\alpha$ for conditional independence testing and we use $\alpha = 0.01$. TPC was run with two other $\alpha$ values (provided in Appendix \ref{supmat:subsec:SimStudy_AllMethods}), and $\alpha = 0.01$ was chosen since it performed the best on our metrics. 
\\
\\
First, we report the \textit{standardized Structural Hamming Distance} (sSHD), which counts how many edge reversals\footnote{A directed edge to an undirected edge and the other way around also counts as a reversal.}, additions, and removals there are between the estimated graph and the true tiered MPDAG \citep{PCALGvignette}, standardized by the number of possible edges in a graph of the given size.
\\
\\
The estimated graphs are then evaluated on recall and precision with respect to \textit{adjacencies, all directions} and \textit{in-tier directions}. 
We will not give the formal definition of these here, as details are provided in Appendix \ref{supmat:subsec:Metrics}, but we underline the fact that the direction metrics are defined only on edges where the true tiered MPDAG and the estimated graph agree on the adjacency. Therefore, we recommend taking the adjacency metric into account when interpreting the direction metrics. 
\\
The in-tier direction metric evaluates what directional information is gained, other than the trivial orientations given directly by $\K$. It is equal to evaluating the all-direction metric, but only on the in-tier edges.

\subsection{Results}
\subsubsection{Standardized Structural Hamming Distance}
\begin{figure}[!htb]
  \centering
  \includegraphics[width=0.7\linewidth]{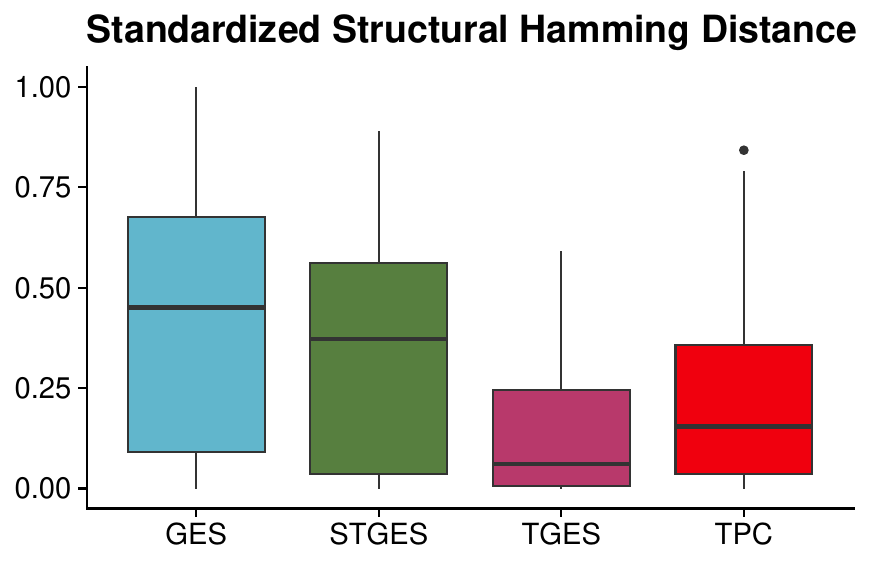}
  \caption{Boxplots showing the standardized Structural Hamming Distance of the four algorithms. The plot is based on 10 000 simulations. Smaller values correspond to better estimation.}\label{fig:Methods_sSHD}
\end{figure}
Figure \ref{fig:Methods_sSHD} presents the sSHD where higher values mean worse estimation. We see that GES performs the worst out of the four algorithms. This would be expected as we are providing tiered information to the other three algorithms and not to GES. While STGES outperforms GES, both TPC and TGES are generally fewer edge transformations from the true tiered MPDAG than both STGES and GES. In terms of sSHD, TGES is generally closer to the true graph than any other method considered.

\subsubsection{Precision and Recall}
\begin{figure}[!htb]
  \centering
  \includegraphics[width=\linewidth]{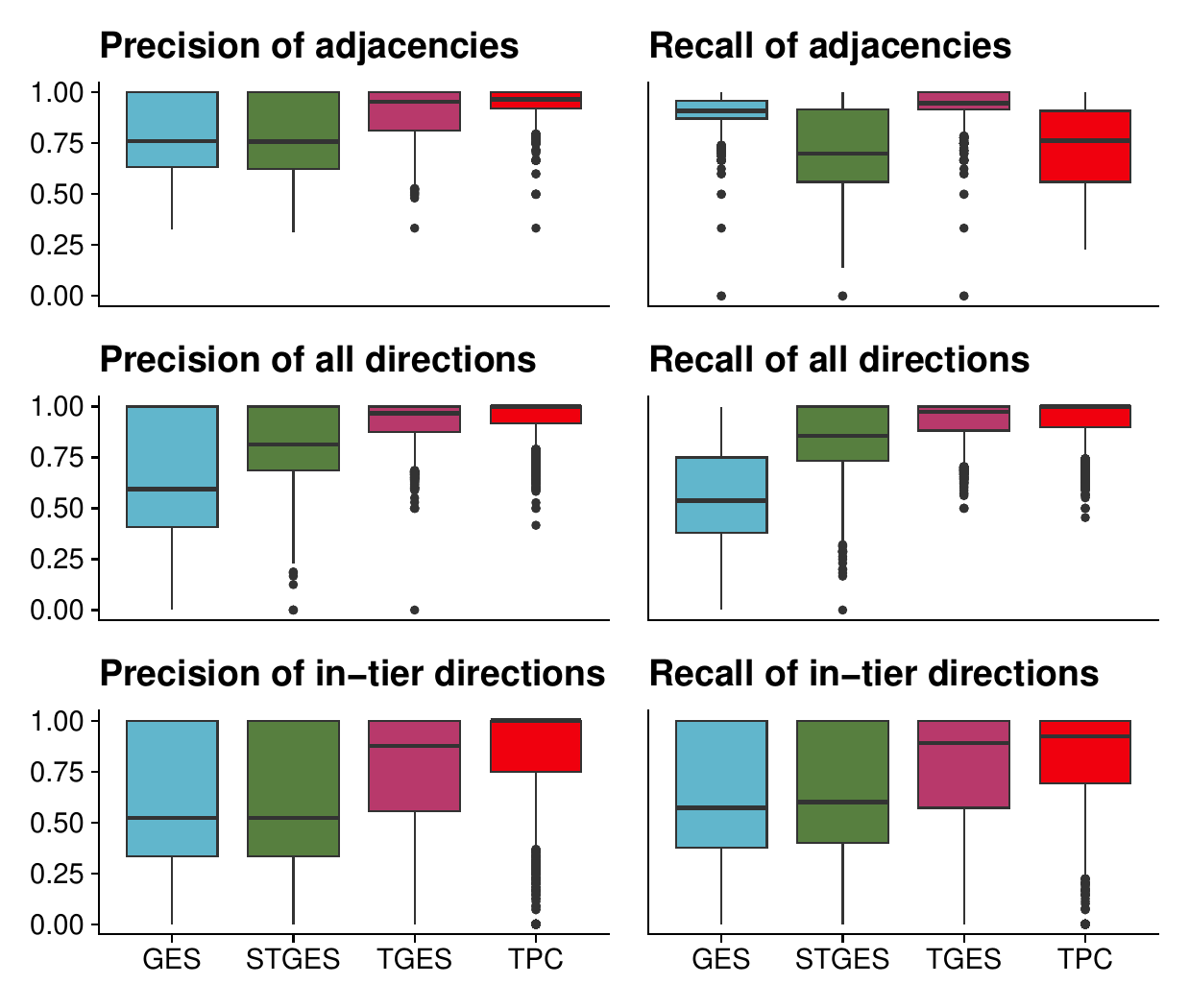}
  \caption{Boxplots showing adjacency, all direction, and in-tier direction precision and recall for the four algorithms. The plot is based on 10 000 simulations. Higher values correspond to better estimates.}\label{fig:Methods_adj_dir_int}
\end{figure}
We see in Figure \ref{fig:Methods_adj_dir_int} that TGES outperforms GES in all variants of precision and recall. While it is not so surprising that TGES orients edges better than GES, it is interesting to observe that TGES, only given additional information on directions, estimates the adjacencies with higher precision and recall as well. Tiered background knowledge does not directly infer adjacencies, only a direction given an adjacency. Hence, it is not obvious that this would result in a better estimation of the adjacencies.
\\
We notice when comparing TGES and GES in Figure \ref{fig:Methods_adj_dir_int}, that TGES is able to infer additional orientations within the tiers. This is partially by design as TGES estimates tiered MPDAGs which by construction are as or more informative than CPDAGs with regards to orientations.  
\\
\\
Next, we look at the relationship between GES and STGES in Figure \ref{fig:Methods_adj_dir_int}. We see that STGES orients edges better than GES in general, but by looking at the in-tier direction metrics we see that this is mainly due to correctly orienting the cross-tier edges, while not a lot of additional information is inferred by STGES within the tiers. Even though we saw in Figure \ref{fig:Methods_sSHD} how the sSHD was better for STGES than for GES, Figure \ref{fig:Methods_adj_dir_int} shows that STGES does a worse job of estimating adjacencies. Hence, the better sSHD can be accredited almost exclusively to the correctly oriented cross-tier edges trivially given by $\K$.
\\
\\
From the relationship between STGES and TGES, we see that embedding $\K$ in the score criterion both gives better estimation of adjacencies and better opportunities for correctly inferring in-tier edge orientations. 
\\
\\
TPC has a worse sSHD than TGES, and we see from Figure \ref{fig:Methods_adj_dir_int} that this mostly stems from the lower recall of adjacencies of TPC.
Compared to TGES, TPC has a slightly better precision of adjacencies and performs on par or better with respect to orienting edges.

\section{Data Application} \label{sec:data_application}
We apply TGES to data from The Metropolit 1953 Danish Male Birth Cohort, consisting of Danish men born in the metropolitan area of Copenhagen and includes data from their birth and up until they are approximately 65 years old \citep{CohortMetro53}. The specific dataset we use, is combined with variables from the Danish national health registers \citep{TheoryVsDataPetersen}.
\\
\\
The dataset has been used by \citet{TheoryVsDataPetersen} in a comparative study where the constraint-based temporal causal discovery algorithm TPC was compared to an expert-made DAG. Hence, we have access to a DAG made by two field experts, which we consider as the ground truth. 
\\
The data ($n=3145$) consists of 22 variables which are partitioned into 5 tiers. The variables are either continuous or binary. The expert DAG consists of 30 edges, of which 22 are cross-tier edges and 8 are in-tier edges.
\\
\\
Two tiered MPDAGs are estimated using TGES. The first one uses the standard Gaussian TBIC as a score criterion, and we will denote it by  $\text{TGES}_{\text{fixed}}$. 
\\
The second estimate, which we call $\text{TGES}_{\text{tuned}}$, uses the Gaussian TBIC with a $\lambda$ scaled penalty term \citep{ComparisonOfPackages} such that the score criterion used is:
\begin{align*}
        \TS^\lambda(\G,\D,\K) = 
\begin{cases} 
\log p(\D|\hat{\theta},\G) -\\ \lambda\frac{\#\text{parameters}}{2} \log n & \text{if } \G \text{ encodes } \K \\
-\infty & \text{if } \G \text{ contradicts } \K 
\end{cases}
\end{align*}
We choose $\lambda$ such that $\text{TGES}_{\text{tuned}}$ has the same number of edges (30) as the expert consensus DAG. 
\\
\\
Confusion matrices for the adjacency metric are provided in Tables \ref{tab:adj_conf_tges_tuned} and \ref{tab:adj_conf_tges_std}, and the expert DAG and fitted tiered MPDAGs are provided in Appendix \ref{supmat:sec:data_example}.
\\
\\
$\text{TGES}_{\text{tuned}}$ finds 10 out of the 30 adjacencies in the expert DAG and $\text{TGES}_{\text{fixed}}$ locates 3 additional correct adjacencies but also identifies 22 additional incorrect adjacencies compared to $\text{TGES}_{\text{tuned}}$. In the study performed by \citet{TheoryVsDataPetersen}, where the number of edges was tuned to 30 as well, we see the same confusion matrix performance but note that the identified edges differ (confusion matrix provided in Appendix \ref{supmat:sec:data_example}). 
\\
\\
In both estimated tiered MPDAGs, we discover a lot of edges that the experts did not identify as causal effects. While the "new" identified effects in $\text{TGES}_{\text{tuned}}$ seem possible, there are at least a few edges in $\text{TGES}_{\text{fixed}}$ which are implausible. For example, birth weight having a causal effect on whether the mother is a smoker at birth.
\\
The agreement between the experts and the two models with regards to the directions of the edges is quite high, as $\text{TGES}_{\text{tuned}}$ has 8 out of 10 correctly specified directions, and $\text{TGES}_{\text{fixed}}$ has 10 out of 12. Very few of the edges where they agree are in-tier edges (See Appendix \ref{supmat:sec:data_example}). 
\\
\\
Testing the overall adjacency performance against random guessing \citep{petersen2024doingbetterrandomguessing} results in p values $p_\text{tuned}=0.002$ and $p_\text{fixed}=0.009$. Thus, TGES finds significantly more correct adjacencies than random guessing, and the lower p-value for $\text{TGES}_{\text{tuned}}$ compared to $\text{TGES}_{\text{fixed}}$ suggests a better trade-off between additional edges and correct findings by tuning.
\begin{table}
\centering
\caption{Confusion matrix for the metric of adjacencies. Estimated using TGES and score TBIC with tuned penalty parameter to match the number of edges in the expert graph.}
\begin{tabular}{lrr}
                    & \multicolumn{2}{c}{\textbf{Expert}} \\ 
            \textbf{$\text{TGES}_{\text{tuned}}$}            & Adjacency  & Non-adjacency \\ \hline
 Adjacency       & 10         & 20            \\
 Non-adjacency & 20         & 181          
\end{tabular}
\label{tab:adj_conf_tges_tuned}
\end{table}
\begin{table}
\centering
\caption{Confusion matrix for the metric of adjacencies. Estimated using TGES and score TBIC.}
\begin{tabular}{lrr}
                    & \multicolumn{2}{c}{\textbf{Expert}} \\ 
            \textbf{$\text{TGES}_{\text{fixed}}$}            & Adjacency  & Non-adjacency \\ \hline
 Adjacency       & 13         & 42            \\
 Non-adjacency & 17         & 159          
\end{tabular}
\label{tab:adj_conf_tges_std}
\end{table}

\section{Discussion and Future Work} \label{sec:discussion}
In this paper, we introduced a temporal extension to GES, TGES, and our main result proved TGES to estimate the correct restricted equivalence class in the limit of large sample size when using a score criterion that is (locally) $\K$-consistent, decomposable, and $\K$-score equivalent. For finite sample size, we proved that TGES obtains a tiered MPDAG. Our simulation study showed that TGES is a viable, and in some cases better, alternative to the constraint-based temporal algorithm TPC. 
We demonstrated that using the tiered background knowledge in an inefficient way, like for STGES, the use case will remain theoretical. When correct temporal information is merged with statistically uncertain estimates in an unrefined way for finite samples it can lead to bad performance.
We saw from the data example, that the estimated graph by TGES is in terms of adjacencies as close to a graph agreed upon by field experts as the graph estimated by TPC, while not being the exact same graph. 
\\
\\
Throughout our applications of TGES, we only used TBIC or a modification of TBIC as a scoring criterion. Any consistent and decomposable score criterion that is finite can be extended to a temporal version by penalizing to $-\infty$ for any graph contradicting $\K$. The score criterion would then be $\K$-consistent, hence Lemma \ref{lem:TGESResult} and Theorem \ref{Thm:TGES_ALL_S_C} holds. It is thus straightforward to extend scores such as the BDeu score \citep{heckerman2015learningbayesiannetworkscombination} for discrete variables and the Conditional Gaussian BIC \citep{ScoringMixedVariables} for mixed data while maintaining the same nice properties of TGES.
\\
\\
Both $\text{TGES}_{\text{fixed}}$ and $\text{TGES}_{\text{tuned}}$ used a score criterion that assumes Gaussianity. Since some of the variables in the Metropolit data example are binary, this is misspecified \citep{Huang2018}. 
With a more refined scoring criterion, it is likely that the performance would improve.
\\
\\
We saw in our simulation study that even though we have the same desirable large sample limit properties for STGES and TGES (Theorem \ref{thm:STGES_S_C} and \ref{Thm:TGES_ALL_S_C}), the performance on finite data differs vastly. 
An analogous large sample limit property has been proved for TPC in  \citet{bang2024improvingfinitesampleperformance}, but the finite sample estimates of TPC are not generally identical to either TGES or STGES. We also saw in our real-world application that this was even true when we tuned TPC and TGES to have the same number of edges.
\\
\\
In general, we found that sSHD was lower for TGES than for TPC, but also that this was mainly due to the improved recall of adjacencies of TGES. This stresses the importance of measuring different metrics of the performance of the algorithms, and not just one metric that does not distinguish between a wrongly specified adjacency and direction. 
It would be interesting to further assess the performance by comparing with a random graph guess \citep{petersen2024doingbetterrandomguessing} or by considering a metric focused on the implications for causal effect estimation \citep{henckel2024adjustmentidentificationdistancegadjid}.
The simulation study did not stratify on true DAG density, but we expect that TPC would have a decreased performance on denser DAGs compared to TGES, like is the case for PC and GES \citep{petersen_ramsey_ekstrøm_spirtes_2023}.
\\
In our simulation study, we chose $\alpha$ for TPC post evaluations, as the goal of the simulation study was to evaluate TGES. Note that TPC thus had the advantage of being (manually) tuned to specific metrics. Hence, we expect that TGES may outperform TPC even more in real applications.
\\
\\
The simulation study indicated that by choosing to use TGES instead of TPC, we are getting slightly worse precision of adjacencies and slightly less correctly oriented in-tier edges, in return for much better recall of adjacencies. In addition to these empirical characteristics, TGES also avoids needing us to specify a hyperparameter like we have to do with $\alpha$ for TPC.
\\
TGES was shown to always estimate a tiered MPDAG. This is not a given for other algorithms such as TPC, as TPC might encounter conflicting results of tests. It is desirable to work with tiered MPDAGs, as there exist established results for these types of graphs, for example, results allowing for identification of adjustment sets and hence sub-sequent causal inference \citep{perković2017}. However, by limiting the search space to tiered MPDAGs we are possibly unnaturally forcing the result to be a tiered MPDAG. This could be an issue if the underlying causal mechanism is not acyclic. 
\\
\\
An obvious strength of using tiered background knowledge is the fact that the estimated class of DAGs becomes smaller. This results in more precise sub-sequent causal inference using for example the IDA procedure \citep{Maathuis_2009}. Additionally, tiered background knowledge is generally reliable as it often stems from a temporal structure in the data. We do however need to be cautious when assigning variables to tiers. For example, measuring the test score of a child and later measuring the genetics when the child is grown up, will not allow us to say that genetics do not have an effect on test scores.
\\
\\
Although not included in the article (See Appendix \ref{supmat:subsec:Comptime}) we found the computation time of GES and TGES to be approximately equal. This was due to TGES saving time scoring graphs in the backward phase while spending some additional time on stages (ii) and (iii).
\\
\\
All results presented here assume the background knowledge is tiered. Whether the algorithm and its results would be possible to extend for a general type of background knowledge, is not clear. While some results would naturally translate to general background knowledge, others would require us to rethink the entire structure of the algorithm. We will look into the generalization of TGES in future work.
\\
\\
We have demonstrated that we have a lot to gain by utilizing temporal information. Using information about the temporal structure of data in the proper manner can have an effect on not only the estimated directions of the graph but also the adjacencies. Moreover, TGES inherits nice properties from GES: we are ensured correct estimation of the class in the limit, and we are ensured to estimate a tiered MPDAG on finite samples. TGES was also shown to be a viable alternative to the constraint-based temporal algorithm TPC on finite data.


\begin{acknowledgements} 

This work was supported by Centre for Childhood Health in collaboration with Professor Katrine Strandberg-Larsen (ID : 2024\_F\_008).
\\
\\
The authors thank K. Svalastoga, E. Høgh, P. Wolf, T. Rishøj, G. Strande-Sørensen, E. Manniche, B. Holten, I.A. Weibull and A. Ortmann, who established the Metropolit study and collected data from 1965 to 1983, and M. Osler who is the current Principal Investigator.
\end{acknowledgements}

\bibliography{main}

\newpage

\onecolumn
\title{Score-Based Causal Discovery with Temporal Background Information\\(Appendix)}
\maketitle

\appendix
\section{Algorithms} \label{supmat:sec:alg}
\subsection{Simple Temporal Greedy Equivalence Search}

\begin{algorithm}[H]
\caption{Simple Temporal Greedy Equivalence Search (STGES)}
\label{alg:STGES}
\begin{algorithmic}[1]
\Statex \textbf{Input:} $\D, \K$ 
\State $\mathcal{W} \gets$ CPDAG from Greedy Equivalence Search on $\D$
\State $\mathcal{W} \gets$ Remove directed edges in $\mathcal{W}$ contradicting $\K$
\State $\mathcal{W} \gets $ Use Algorithm \ref{alg:restrict} on $\mathcal{W}$
\State $\mathcal{W} \gets$ Iterate over Meek's rules 1-4 on $\mathcal{W}$ 
\State \textbf{Output:} 
$\mathcal{W}$
\end{algorithmic}
\end{algorithm}

\begin{figure*}[h]
\centering
\begin{tikzpicture}[>=Stealth, node distance=2cm, every node/.style={circle, draw, thick}]
  \tikzset{reg/.style={draw = none,font=\sffamily}}
  \tikzset{sqr/.style={
  draw, 
  font=\sffamily, 
  rectangle, 
  minimum size=1cm, 
  align=center 
}}
  \node[reg] (GES) at (0,0) {CPDAG};
  \node[sqr] (HiddenGES) at (-3,0) {Data};
  \node[reg] (PDAG1) at (3,0) {PDAG};
  \node[reg] (PDAG2) at (6,0) {PDAG};
  \node[reg] (TMPDAG) at (9.5,0) {$\text{Tiered}$ $\text{MPDAG}$};
  \draw[->, thick] (GES) -- (PDAG1);
  \draw[->, thick] (HiddenGES) -- (GES);
  \draw[->, thick] (PDAG1) -- (PDAG2);
  \draw[->, thick] (PDAG2) -- (TMPDAG);
  \node[reg, rotate = 45] (Edgelabel1) at (0.3,-1.5){Remove directed edges};
  \node[reg, rotate = 45] (Edgelabel2) at (0.8,-1.6){contradicting $\K$};
  \node[reg, rotate = 45] (Edgelabel3) at (4,-1.2){Algorithm};
  \node[reg, rotate = 45] (Edgelabel3) at (4.6,-0.58){\ref{alg:restrict}};
  \node[reg, rotate = 45] (Edgelabel4) at (6.75,-1.2){Meek's rules};
  \node[reg, rotate = 45] (Edgelabel0) at (-1.8,-0.7){GES}; 
\end{tikzpicture}
\caption{An overview of the steps in Algorithm \ref{alg:STGES} (STGES).}
\label{fig:SimTGES}
\end{figure*}
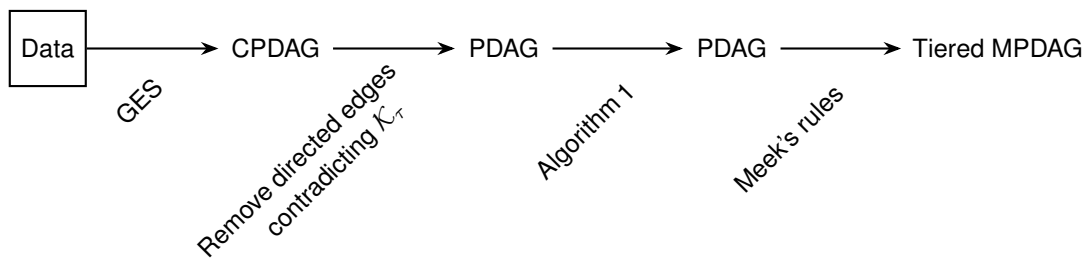

\subsection{Temporal Greedy Equivalence Search}
In this subsection, we provide the algorithms for each type of step (forward, backward, and turning) in TGES (see Algorithm \ref{alg:TGES}). 
\\
\\
We will refer to an edge addition as \textit{valid}, as defined in Theorem 15 of \citet{Chickering2002}, and an edge removal as valid, as defined in Theorem 17 of \citet{Chickering2002}.
\\
\\
Let $S$ be a $\K$-consistent $\K$-score equivalent and decomposable score criterion with local score $s$.

\begin{algorithm}[H]
\caption{Temporal\_Forward\_Step}
\label{alg:T_for_ph}
\begin{algorithmic}[1]
\Statex \textbf{Input:} $\mathcal{W},\D,\K$
\Statex \textbf{Stage (i):}
\For{all pair of nodes $(X,Y)$ in $\mathcal{W}$ where edge addition is valid}
\State $\Delta s_{X,Y} \gets s(X_i,Pa^\mathcal{W}_i \cup Y,\K) - s(X_i,Pa^\mathcal{W}_i,\K)$
\EndFor
\If{ any $\Delta s_{X,Y} > 0$}
\State $(X',Y') \gets \underset{(X,Y)}{\argmax} \left( \Delta s_{X,Y}\right)$
\State Insert edge $\left\{Y' \rightarrow X'\right\}$ in $\mathcal{W}$
\State $\mathcal{W} \gets$ CPDAG of $\mathcal{W}$ 
\Else
\State Output $\mathcal{W}$ with no changes made
\EndIf
\Statex \textbf{Stage (ii):} 
\State $\mathcal{W} \gets$ Use Algorithm \ref{alg:restrict} on $\mathcal{W}$
\Statex \textbf{Stage (iii):}
\State $\mathcal{W} \gets$ Iterate over Meek's rule 1 on $\mathcal{W}$
\State \textbf{Output:} $\mathcal{W}$
\end{algorithmic}
\end{algorithm}

\begin{algorithm}[H]
\caption{Temporal\_Backward\_Step}
\label{alg:T_bac_ph}
\begin{algorithmic}[1]
\Statex \textbf{Input:} $\mathcal{W},\D,\K$
\Statex \textbf{Stage (i):}
\For{all pair of nodes $(X,Y)$ in $\mathcal{W}$ where edge removal is valid}
\State $\Delta s_{X,Y} \gets s(X_i,Pa^\mathcal{W}_i \setminus Y,\K) - s(X_i,Pa^\mathcal{W}_i,\K)$
\EndFor
\If{ any $\Delta s_{X,Y} > 0$}
\State $(X',Y') \gets \underset{(X,Y)}{\argmax} \left( \Delta s_{X,Y}\right)$
\State Remove edge $\left\{Y' \rightarrow X'\right\}$ in $\mathcal{W}$
\State $\mathcal{W} \gets$ CPDAG of $\mathcal{W}$ 
\Else
\State Output $\mathcal{W}$ with no changes made
\EndIf
\Statex \textbf{Stage (ii):} 
\State $\mathcal{W} \gets$ Use Algorithm \ref{alg:restrict} on $\mathcal{W}$
\Statex \textbf{Stage (iii):}
\State $\mathcal{W} \gets$ Iterate over Meek's rule 1 on $\mathcal{W}$
\State \textbf{Output:} $\mathcal{W}$
\end{algorithmic}
\end{algorithm}

\begin{algorithm}[H]
\caption{Temporal\_Turning\_Step}
\label{alg:T_tur_ph}
\begin{algorithmic}[1]
\Statex \textbf{Input:} $\mathcal{W},\D,\K$
\Statex \textbf{Stage (i):}
\For{all directed edges$\left\{X \rightarrow Y\right\}$ in $\mathcal{W}$}
\State $\Delta s_{X,Y} \gets s(X_i,Pa^\mathcal{W}_{X_i} \setminus Y,\K) - s(X_i,Pa^\mathcal{W}_{X_i},\K) + s(Y_i,Pa^\mathcal{W}_{Y_i} \cup X,\K) - s(Y_i,Pa^\mathcal{W}_{Y_i},\K)$ 
\EndFor
\If{ any $\Delta s_{X,Y} > 0$}
\State $(X',Y') \gets \underset{(X,Y)}{\argmax} \left( \Delta s_{X,Y}\right)$
\State Reverse edge $\left\{X' \rightarrow Y'\right\}$ to $\left\{Y' \rightarrow X'\right\}$ in $\mathcal{W}$
\State $\mathcal{W} \gets$ CPDAG of $\mathcal{W}$ 
\Else
\State Output $\mathcal{W}$ with no changes made
\EndIf
\Statex \textbf{Stage (ii):} 
\State $\mathcal{W} \gets$ Use Algorithm \ref{alg:restrict} on $\mathcal{W}$
\Statex \textbf{Stage (iii):}
\State $\mathcal{W} \gets$ Iterate over Meek's rule 1 on $\mathcal{W}$
\State \textbf{Output:} $\mathcal{W}$
\end{algorithmic}
\end{algorithm}

\section{Proofs and Definitions} \label{supmat:sec:proof}
\subsection{Meek's Rules} \label{supmat:subsec:Meeks}
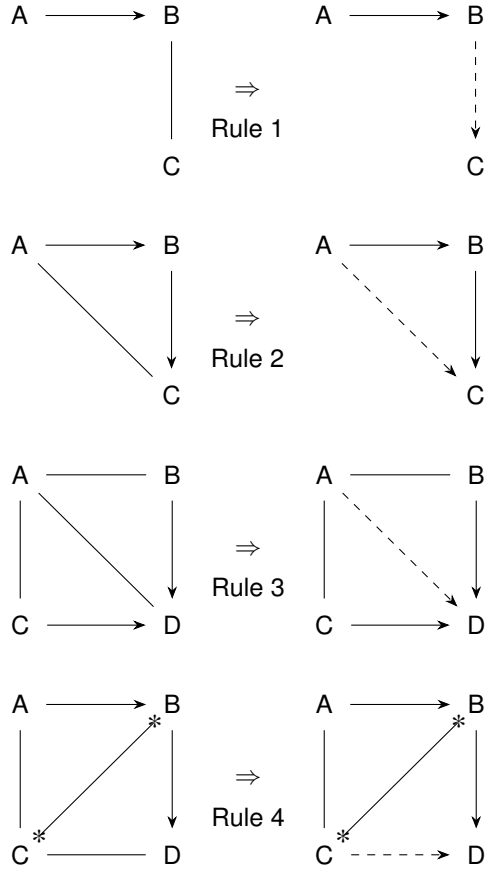
\begin{figure}[H]
\begin{center}
\begin{tikzpicture}[>=Stealth, node distance=2cm, every node/.style={circle, draw, thick}]

  \tikzset{reg/.style={draw = none,font=\sffamily}}

  \node[reg] (A) {A};
  \node[reg] (B) [right of=A] {B};
  \node[reg] (C) [below of=B] {C};
  \node[reg] (label1) at (3,-1) {$\Rightarrow$};
  \node[reg] (label2) at (3,-1.5) {Rule 1};
  
  \node[reg] (A2) [right of=B] {A};
  \node[reg] (B2) [right of=A2] {B};
  \node[reg] (C2) [below of=B2] {C};
  
  \draw[->] (A) -- (B);
  \draw[-] (B) -- (C);
  \draw[->] (A2) -- (B2);
  \draw[->, dashed] (B2) -- (C2);
\end{tikzpicture}
\end{center}
\begin{center}
\begin{tikzpicture}[>=Stealth, node distance=2cm, every node/.style={circle, draw, thick}]

  \tikzset{reg/.style={draw = none,font=\sffamily}}

  \node[reg] (A) {A};
  \node[reg] (B) [right of=A] {B};
  \node[reg] (C) [below of=B] {C};
  \node[reg] (label1) at (3,-1) {$\Rightarrow$};
  \node[reg] (label2) at (3,-1.5) {Rule 2};
  
  \node[reg] (A2) [right of=B] {A};
  \node[reg] (B2) [right of=A2] {B};
  \node[reg] (C2) [below of=B2] {C};
  
  \draw[->] (A) -- (B);
  \draw[->] (B) -- (C);
  \draw[->] (A2) -- (B2);
  \draw[->] (B2) -- (C2);
  \draw[-] (A) -- (C);
  \draw[->,dashed] (A2) -- (C2);
\end{tikzpicture}
\end{center}
\begin{center}
\begin{tikzpicture}[>=Stealth, node distance=2cm, every node/.style={circle, draw, thick}]

  \tikzset{reg/.style={draw = none,font=\sffamily}}

  \node[reg] (A) {A};
  \node[reg] (B) [right of=A] {B};
  \node[reg] (D) [below of=B] {D};
  \node[reg] (C) [below of=A] {C};
  \node[reg] (label1) at (3,-1) {$\Rightarrow$};
  \node[reg] (label2) at (3,-1.5) {Rule 3};
  
  \node[reg] (A2) [right of=B] {A};
  \node[reg] (B2) [right of=A2] {B};
  \node[reg] (D2) [below of=B2] {D};
  \node[reg] (C2) [below of=A2] {C};
  
  \draw[-] (A) -- (B);
  \draw[->] (B) -- (D);
  \draw[-] (A2) -- (B2);
  \draw[->] (B2) -- (D2);
  \draw[-] (A) -- (C);
  \draw[-] (A2) -- (C2);
  \draw[->] (C) -- (D);
  \draw[-] (A) -- (D);
  \draw[->, dashed] (A2) -- (D2);
  \draw[->] (C2) -- (D2);
  
\end{tikzpicture}
\end{center}
\begin{center}
\begin{tikzpicture}[>=Stealth, node distance=2cm, every node/.style={circle, draw, thick}]

  \tikzset{reg/.style={draw = none,font=\sffamily}}

  \node[reg] (A) {A};
  \node[reg] (B) [right of=A] {B};
  \node[reg] (D) [below of=B] {D};
  \node[reg] (C) [below of=A] {C};
  \node[reg] (label1) at (3,-1) {$\Rightarrow$};
  \node[reg] (label2) at (3,-1.5) {Rule 4};
  
  \node[reg] (A2) [right of=B] {A};
  \node[reg] (B2) [right of=A2] {B};
  \node[reg] (D2) [below of=B2] {D};
  \node[reg] (C2) [below of=A2] {C};
  \node[reg,font=\large] (star1) at (0.25,-1.8) {*};
  \node[reg,font=\large] (star2) at (1.77,-0.275) {*};
  \node[reg,font=\large] (star3) at (4.25,-1.8) {*};
  \node[reg,font=\large] (star4) at (5.77,-0.275) {*};
  \draw[->] (A) -- (B);
  \draw[->] (B) -- (D);
  \draw[->] (A2) -- (B2);
  \draw[->] (B2) -- (D2);
  \draw[-] (A) -- (C);
  \draw[-] (A2) -- (C2);
  \draw[-] (C) -- (D);
  \draw[-] (B) -- (C); 
  \draw[-] (B2) -- (C2);
  \draw[->,dashed] (C2) -- (D2);

\end{tikzpicture}
\end{center}
\caption{Representations of Meek's rules 1-4 \citep{meek1995causalbg}. A dashed edge is inferred by the rule. The symbol $*$ is used to denote that the edge between $B$ and $C$ could either be $\{B \rightarrow C\}$, $\{B \leftarrow C\}$ or $\{B \text{ --- } C\}$.}
\label{fig:Meeksrules1to4}
\end{figure}
\subsection{Bayesian Information Criterion}\label{supmat:subsec:BIC}
\begin{definition}(Decomposable score)
    \\
    A score criterion $S$ is decomposable if we can write it as a sum of local scores of the nodes given the parents of the nodes
    \begin{align}
        S(\G,\D) = \sum^d_{i=1}s(X_i,Pa^\G_i),
        \label{eq:decomp}
    \end{align}
where $s$ denotes the local score function, $X_i$ is a node in $\G$, $\text{Pa}_i^{\G}$ are the parents of $X_i$ in $\G$.
\end{definition}
We have that local BIC, denoted $s_B$, is given as
\begin{align}
\label{eq:localBIC}
    S_B(\G,\D) = \sum^d_{i=1}s_B(X_i,Pa^\G_i) = \sum^d_{i=1}\log p(\D_i|\hat{\theta},\G) - \frac{\#\text{parameters}}{2}\log m\text{,}
\end{align}
where $\D_i$ is the data of the nodes $X_i$ and $Pa^\G_i$. $p(\D_i|\hat{\theta},\G)$ is the likelihood of the local model defined on $X_i$ and $Pa^\G_i$ and $\#\text{parameters}$ is the number of parameters in this local model. In the Gaussian case $\#\text{parameters} = \#\text{edges}$.
\\
\\
We will denote \textit{d-separation} in a DAG $\G$ as $\indG$, and no d-separation in DAG $\G$ as $\nindG$. 
\\
\\
Next, we introduce some graph notation from \citet{Chickering2002} concentrating on independence statements:
\\ 
Let $\G$ and $\HH$ be two DAGs defined on the same set of nodes $\V$. We write $\G\leq\mathcal{H}$ when every independence relationship in $\mathcal{H}$ holds in $\G$: For all disjoint node sets $\mathbf{A},\mathbf{B},\mathbf{C} \subseteq \V$
\begin{align*}
     \mathbf{A} \perp\!\!\!\perp_{\HH} \mathbf{B} | \mathbf{C} \Rightarrow \mathbf{A} \indG \mathbf{B} | \mathbf{C}
\end{align*}
The notation $\leq$ is motivated as follows: if $\G\leq\mathcal{H}$ then $\G$ has fewer edges than $\HH$. If there exists an independence statement in $\HH$ which is not in $\G$, we write $\G \not\leq \HH$.
\\
If we have that $\G\leq\mathcal{H}$ and $\HH\leq\G$ then we write $\G \approx \HH$. If $\G \approx \HH$ then the two DAGs are Markov equivalent and hence belong to the same Markov equivalence class. In particular, if $\G \approx \HH$ and they both encode the same tiered background knowledge $\K$ they belong to the same restricted equivalence class.
\begin{definition}
    (Consistent Scoring Criterion)\\
Let $\GS$ be the data-generating DAG from which the data $ \D $ were generated and $ n $ be the number of observations. A scoring criterion $ S $ is consistent if for $n \rightarrow \infty$, the following two properties hold almost surely:
\begin{enumerate}
    \item[i.] If $\GS \leq \G'$ and $\GS \not\leq \G$, then $ S(\G', \D) > S(\G, \D) $.
    \item[ii.] If $\GS \leq \G$, $\GS \leq \G'$, and $\G'$ has fewer edges than $\G$, then $ S(\G', \D) > S(\G, \D) $ .
\end{enumerate}
\end{definition}
\begin{definition}
    (Locally Consistent Scoring Criterion)\\
Let $\D$ be a data set consisting of $ n $ records that are i.i.d. samples from some distribution $p$. Let $\G$ be any DAG with no edge $\left\{X \rightarrow Y\right\}$, and let $\G'$ be the DAG that results from adding the edge $\left\{X \rightarrow Y\right\}$ to $\G$. A scoring criterion $S(\G, \D)$ is locally consistent if as $n\rightarrow \infty$, the following two properties hold almost surely:
\begin{enumerate}
    \item[i.] If $ Y \not\!\perp\!\!\!\perp_p X | Pa^{\G}_Y $, then $ S(\G', \D) > S(\G, \D) $.
    \item[ii.] If $ Y \perp\!\!\!\perp_p X | Pa^{\G}_Y $, then $ S(\G', \D) < S(\G, \D) $.
\end{enumerate}
\end{definition}
\begin{definition}(Score equivalent)\\
We will say a scoring criterion $S$ is score equivalent if for any pair of DAGs $\G,\G'$ in a given equivalence class $\E$
\begin{align*}
    S(\G,\D)=S(\G',\D)
\end{align*}
\end{definition}
\begin{proposition}
\label{prop:BICallProps}
    BIC is decomposable, consistent, locally consistent, and score equivalent on i.i.d. data from a multivariate Gaussian distribution or multinomial distributions for discrete variables.
\end{proposition}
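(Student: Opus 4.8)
\textbf{Proof proposal for Proposition \ref{prop:BICallProps}.}

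The plan is to reduce everything to known facts about the BIC already established in the literature, and to verify the extra ``$\K$-free'' claims by direct appeal to the definitions. The four properties are decomposability, consistency, local consistency, and score equivalence. Decomposability is immediate from the closed form in \eqref{eq:localBIC}: the Gaussian (resp.\ multinomial) log-likelihood factorizes over nodes conditional on their parents, and the parameter count is additive over the local models, so $S_B(\G,\D)=\sum_{i=1}^d s_B(X_i,\mathrm{Pa}_i^\G)$ with $s_B$ depending only on $X_i$ and its parent set. Score equivalence is a classical result: two Markov equivalent DAGs induce the same set of distributions, hence the same maximized likelihood, and they have the same number of edges (Markov equivalent DAGs share the same skeleton), so the penalty term agrees as well; I would cite \citet{Chickering2002} for this. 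So the substantive content is consistency and local consistency.

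For consistency I would invoke the result of \citet{Haughton} that the BIC is a consistent model selection criterion for curved exponential families, which covers both the Gaussian and the multinomial case, and then translate ``consistency of model selection'' into the two graph-theoretic conditions in the definition. For part (i): if $\GS\le\G'$ and $\GS\not\le\G$, then the distribution $p$ (which is Markov and faithful to $\GS$) is representable by the model associated with $\G'$ but not by the model associated with $\G$; since $n\to\infty$ consistency of BIC then forces $S_B(\G',\D)>S_B(\G,\D)$ almost surely, because the correctly-specified model eventually wins over a misspecified one. For part (ii): if both $\GS\le\G$ and $\GS\le\G'$, then $p$ lies in both models, but $\G'$ has strictly fewer edges, hence strictly fewer free parameters; among correctly specified models BIC's penalty term asymptotically selects the more parsimonious one, giving $S_B(\G',\D)>S_B(\G,\D)$ almost surely. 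I would make the parameter-count comparison explicit in the Gaussian case via $\#\text{parameters}=\#\text{edges}$ as noted after \eqref{eq:localBIC}.

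For local consistency I would argue using decomposability to localize the score difference. Adding the edge $\{X\to Y\}$ to $\G$ to form $\G'$ changes only the parent set of $Y$, so $S_B(\G',\D)-S_B(\G,\D)=s_B(Y,\mathrm{Pa}_Y^\G\cup\{X\})-s_B(Y,\mathrm{Pa}_Y^\G)$. This difference is, up to the $O(1)$ penalty term, $n$ times an estimate of the conditional mutual information $I(X;Y\mid \mathrm{Pa}_Y^\G)$ under $p$. If $Y\not\perp\!\!\!\perp_p X\mid \mathrm{Pa}_Y^\G$ this information is strictly positive, so the likelihood gain grows linearly in $n$ and dominates the $\Theta(\log n)$ penalty, yielding $S_B(\G',\D)>S_B(\G,\D)$ almost surely; if $Y\perp\!\!\!\perp_p X\mid \mathrm{Pa}_Y^\G$ the true information is zero, the sample likelihood gain is $O_p(\log\log n)$ or at any rate $o(\log n)$, so the added penalty term for one extra parameter wins and $S_B(\G',\D)<S_B(\G,\D)$ almost surely. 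Alternatively, and more cleanly, I would note that local consistency follows from consistency plus decomposability by the standard argument in \citet{Chickering2002}, reducing this to the previous paragraph.

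The main obstacle is that none of this is a genuinely new computation: the work is almost entirely bookkeeping to show that the abstract conditions in the definitions of ``consistent'' and ``locally consistent'' are exactly what \citet{Haughton} and \citet{Chickering2002} already deliver for Gaussian and multinomial families. The one place requiring a small amount of care is confirming that the hypotheses of Haughton's theorem (that the relevant statistical models form curved exponential families of the right dimension, with the parameter counts matching $\#$edges in the Gaussian case and the usual multinomial count in the discrete case) hold here, and that faithfulness of $p$ to $\GS$ is what is needed to go from a conditional dependence statement to non-representability in the submodel. I expect this to be the crux of the write-up, while the rest is routine citation.
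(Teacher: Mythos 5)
Your proposal is correct and follows essentially the same route as the paper, which simply cites \citet{Haughton} for consistency and \citet{Chickering2002} for decomposability, score equivalence, and local consistency (their Lemma 7); your extra sketches of the mutual-information argument for local consistency and the skeleton/penalty argument for score equivalence are just expanded versions of those cited results.
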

\begin{proof}
    BIC is consistent by \citet{Haughton}, locally consistent by Lemma 7 from \citet{Chickering2002}, decomposable and score equivalent on i.i.d. data from a multivariate Gaussian distribution or multinomial distributions for discrete variables by \citet{Chickering2002}.
\end{proof}
\subsection{Simple Temporal Greedy Equivalence Search} \label{supmat:subsec:STGES}
This subsection contains the proof of Theorem \ref{thm:STGES_S_C}. We restate the Theorem:
\begin{customtheorem}{\ref{thm:STGES_S_C}}
    (Sound and Completeness of STGES)\\ 
    For $n \rightarrow \infty$ STGES almost surely results in a tiered MPDAG which is a sound and complete estimate of the restricted Markov equivalence class of the true data-generating DAG    
\end{customtheorem}

\begin{proof}
Assume that $\E$ is the equivalence class that is a result of running GES to completion for $n \rightarrow \infty$.
By Theorem \ref{thm:GES_S_C} we have that the conditional independence statements in the true data-generating DAG $\GS$ are the same as the conditional independence statements in $\E$. Therefore it must be that $\GS \in \E$.  
Since the data-generating DAG $\GS$ encodes $\K$, and we have that $\GS \in \E$, we know that $\E$ is in agreement with $\K$. Hence, we are not removing any edges in "Remove directed edges contradicting $\K$" in Algorithm \ref{alg:STGES}, and the graph resulting from Algorithm \ref{alg:restrict} and Meek's rules is a tiered MPDAG which have the same conditional independence statements as $\E$. Hence, the estimated tiered MPDAG represents the true restricted Markov equivalence class and the estimate is sound and complete.
\end{proof}

\subsection{Temporal Bayesian Information Criterion}\label{supmat:subsec:TBIC}
We will throughout this subsection assume, data $\D$ is i.i.d. multivariate Gaussian.
\\
\\
We will show that the Temporal Bayesian Information Criterion is decomposable into a sum of local scores. In order to prove this we define the local score function of TBIC.
\begin{align}
    \label{eq:localTBIC}
     \Ts(X_i,Pa^\G_i,\K) = 
\begin{cases} 
s_B(X_i,Pa^\G_i) & \text{if } \forall  A \in Pa^\G_i \text{ we have that } \tau(X_i) \geq \tau(A) \\
-\infty & \text{if } \exists A \in Pa^\G_i \text{ such that } \tau(X_i) < \tau(A)
\end{cases}
\end{align}
Here, $s_B$ is the local score of BIC from (\ref{eq:localBIC}) and $X_i$ is a node in $\G$.
\\
\begin{proposition}
    (TBIC is decomposable) \label{prop:TBICDecomp}
    \\
    We can write $\TS$ as a sum of local scores $\Ts$ of nodes given their parents, with $\Ts$ given in (\ref{eq:localTBIC}):
    \begin{align}
        \TS(\G,\D,\K) = \sum^d_{i=1}\Ts(X_i,Pa^\G_i,\K)
    \end{align}
\end{proposition}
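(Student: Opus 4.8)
The plan is to split into the two cases that define $\TS$ and show that the proposed local decomposition reproduces each one. First I would observe that the global condition "$\G$ encodes $\K$" is, because $\mathcal{R} = \emptyset$ for tiered background knowledge, equivalent to "$\G$ has no forbidden edge", i.e. $\G$ contains no edge $\{A \to B\}$ with $\tau(A) > \tau(B)$. Rephrasing this in terms of parent sets: $\G$ encodes $\K$ if and only if for every node $X_i$ and every $A \in \text{Pa}^\G_i$ we have $\tau(X_i) \geq \tau(A)$. This is precisely the negation of the "$-\infty$" branch of the local score $\Ts$ in \eqref{eq:localTBIC}, aggregated over all nodes.

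Next I would carry out the case analysis. Suppose $\G$ contradicts $\K$. Then there exists at least one node $X_j$ with a parent $A \in \text{Pa}^\G_j$ such that $\tau(X_j) < \tau(A)$, so $\Ts(X_j, \text{Pa}^\G_j, \K) = -\infty$. Every other local term is either a finite real number (a local BIC value) or $-\infty$; in the sum $\sum_{i=1}^d \Ts(X_i, \text{Pa}^\G_i, \K)$ the presence of at least one $-\infty$ term and no $+\infty$ terms forces the total to be $-\infty$, which matches $\TS(\G,\D,\K) = -\infty$ by Definition \ref{def:TBIC}. Conversely, suppose $\G$ encodes $\K$. Then for every $i$ and every $A \in \text{Pa}^\G_i$ we have $\tau(X_i) \geq \tau(A)$, so by \eqref{eq:localTBIC} each local term equals the local BIC term $s_B(X_i, \text{Pa}^\G_i)$. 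Hence $\sum_{i=1}^d \Ts(X_i, \text{Pa}^\G_i, \K) = \sum_{i=1}^d s_B(X_i, \text{Pa}^\G_i)$, which equals $S_B(\G,\D)$ since BIC is decomposable (Proposition \ref{prop:BICallProps}, using \eqref{eq:decomp} and \eqref{eq:localBIC}), and this is exactly $\TS(\G,\D,\K)$ in the encoding case of Definition \ref{def:TBIC}. The two cases together give the claimed identity.

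There is no serious obstacle here; the content is essentially bookkeeping on the correspondence between the global "encodes $\K$" predicate and the per-node tier conditions. The one point that warrants a sentence of care is the arithmetic with $-\infty$: I would state explicitly that we adopt the convention that a finite sum of extended reals equals $-\infty$ as soon as one summand is $-\infty$ and none is $+\infty$ (which never occurs here, since $s_B$ is finite on Gaussian data), so that the decomposition is well defined and the "contradicts $\K$" case goes through cleanly.
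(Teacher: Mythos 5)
Your proposal is correct and follows essentially the same route as the paper's proof: the same two-case split on whether $\G$ encodes or contradicts $\K$, the same translation of the global encoding condition into per-node tier conditions on parent sets, decomposability of the ordinary BIC in the encoding case, and the observation that a $-\infty$ local term with no $+\infty$ terms forces the sum to $-\infty$ (the paper flags this exact point in its $\dagger$ step). Your explicit remark about the extended-real convention is a minor clarification, not a different argument.
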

\begin{proof}
    We separate this proof into two cases: when $\G$ encodes $\K$ and when $\G$ contradicts $\K$.
    \\
    \\
    First let us assume that $\G = (\V,\mathbf{E})$ encodes $\K$. Then for all edges $\{A \rightarrow B\} \in \mathbf{E}$ we have that $\{A \rightarrow B\} \notin \mathcal{F}$, where $\mathcal{F}$ is the set of forbidden edges from $\K$. Hence we have that $\tau(A) \leq \tau(B)$ for all $\{A \rightarrow B\} \in \mathbf{E}$, which is equivalent to stating that for all $X_i \in \V$ it must be that $\forall A \in Pa^\G_i$ have that $\tau(A) \leq \tau(X_i)$. Thus we have that $\Ts(X_i,Pa^\G_i,\K) = s_B(X_i,Pa^\G_i)$.
    \\
    Then 
    \begin{align*}
        \TS(\G,\D,\K) = S_B(\G,\D) = \sum^d_{i=1}s_B(X_i,Pa^\G_i) = \sum^d_{i=1}\Ts(X_i,Pa^\G_i,\K)\text{.}
    \end{align*}
    Now let us instead assume that $\G$ contradicts $\K$. Then there exists an edge $  \{A \rightarrow B\} \in \mathbf{E}$ such that $\{A \rightarrow B\} \in \mathcal{F}$. Hence there exists an edge $ \{A \rightarrow B\} \in \mathbf{E}$ such that $\tau(A) > \tau(B)$, which is equivalent to stating that there exist an $X_q \in \V$ with a parent $A \in Pa^{\G}_q$ such that $\tau(A) > \tau(X_q)$. Thus we have for $X_q$ that $\Ts(X_q,Pa^\G_q,\K)=-\infty$.
    \\
    Then,
    \begin{align*}
        &\sum^d_{i=1}\Ts(X_i,Pa^\G_i,\K) = \sum^{q-1}_{i=1}\Ts(i) + \sum^d_{i=q+1}\Ts(i) +  \Ts(X_q,Pa^\G_q,\K) 
        \\=& \sum^{q-1}_{i=1}\Ts(i) + \sum^d_{i=q+1}\Ts(i) - \infty \overset{\dagger}{=} -\infty = \TS(\G,\D,\K)\text{,}
    \end{align*}
    where we shortened $\Ts(X_i,Pa^\G_i,\K)$ as $\Ts(i)$ for easier notation. 
    $\dagger$ is true since all $\Ts(X_i,Pa^\G_i,\K)<\infty$.
\end{proof}
\begin{definition}($\K$-Consistency)\\
  Let $\GS$ be the DAG from which the data $ \D $ were generated, $ n $ be the number of observations, and $\K$ tiered background knowledge. A scoring criterion $S$ is $\K$-consistent if the following all hold: \label{def:K_Cons} 
   \begin{enumerate}

      \item[i.] If $\G'$, $\G$ both encode $\K$ and we have that $\GS \leq \G'$ and $\GS \not\leq \G$, then for $n \rightarrow \infty $ $ S(\G', \D) > S(\G, \D) $ almost surely.
      
      \item[ii.] If $\G'$, $\G$ both encode $\K$, $\GS \leq \G$, $\GS \leq \G'$, and $\G'$ has fewer edges than $\G$, then for $n \rightarrow \infty $ $ S(\G', \D) > S(\G, \D) $  almost surely.

      \item[iii.] If $\G'$ encodes $\K$ and $\G$ contradicts $\K$ then $S(\G',\D,\K) > S(\G, \D,\K)$.
   \end{enumerate}
\end{definition}
\begin{proposition} \label{prop:TBICcons}
    TBIC is a $\K$-consistent scoring criterion 
\end{proposition}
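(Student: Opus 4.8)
The plan is to verify the three defining conditions of $\K$-consistency (Definition \ref{def:K_Cons}) one at a time, leaning on the already-established consistency of the ordinary BIC (Proposition \ref{prop:BICallProps}). Recall the standing assumption that the true data-generating DAG $\GS$ encodes $\K$, so that the hypotheses appearing below are non-vacuous.

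For conditions (i) and (ii) the hypotheses require both $\G$ and $\G'$ to encode $\K$. By Definition \ref{def:TBIC} this means $\TS(\G,\D,\K) = S_B(\G,\D)$ and $\TS(\G',\D,\K) = S_B(\G',\D)$, so the desired strict inequalities $S(\G',\D,\K) > S(\G,\D,\K)$ are exactly the inequalities $S_B(\G',\D) > S_B(\G,\D)$. These hold almost surely as $n \rightarrow \infty$ by the two clauses of the definition of a consistent scoring criterion, which BIC satisfies on i.i.d. multivariate Gaussian data (Proposition \ref{prop:BICallProps}): clause (i) of consistency matches case (i) here (with $\GS \leq \G'$, $\GS \not\leq \G$), and clause (ii) matches case (ii) here (with $\GS \leq \G$, $\GS \leq \G'$, and $\G'$ having fewer edges). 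Thus (i) and (ii) are immediate.

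For condition (iii), $\G'$ encodes $\K$, so $\TS(\G',\D,\K) = S_B(\G',\D)$, which is a finite real number (the Gaussian log-likelihood evaluated at the MLE minus a finite penalty term), whereas $\G$ contradicts $\K$, so $\TS(\G,\D,\K) = -\infty$ by Definition \ref{def:TBIC}. Hence $S(\G',\D,\K) = S_B(\G',\D) > -\infty = S(\G,\D,\K)$, and this inequality holds deterministically for every sample size, not merely in the limit.

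There is no substantive obstacle: the argument is entirely bookkeeping about which branch of the piecewise definition of TBIC applies, reducing the two limiting statements to the known consistency of BIC and the third to the trivial fact that a finite number exceeds $-\infty$. The only point worth stating carefully is that in case (iii) the value on the encoding side is finite, so that ``$> -\infty$'' is a genuine strict inequality; this is where the Gaussianity assumption (guaranteeing a well-defined finite BIC) enters.
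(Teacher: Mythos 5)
Your proposal is correct and follows essentially the same route as the paper's proof: cases (i) and (ii) reduce to the consistency of BIC via Proposition \ref{prop:BICallProps} since TBIC coincides with BIC on graphs encoding $\K$, and case (iii) is the observation that a finite BIC value exceeds $-\infty$. Your added remark that the encoding-side score is finite (so the inequality in (iii) is strict and holds for every $n$) is a small but accurate refinement of the paper's one-line argument.
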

\begin{proof}
    $(i)$ \& $(ii)$: If $\G$ and $\G'$ encode the tiered background knowledge $\K$ then the scoring criterion TBIC is equal to the scoring criterion $BIC$ which by Proposition \ref{prop:BICallProps} has property $(i)$ and $(ii)$.
    \\
    $(iii)$: Assume $\G'$ encodes $\K$ and $\G$ contradicts $\K$. Then the property follows from the fact that  $\TS(\G',\D,\K) = S_B(\G',\D) > -\infty$ and $\TS(\G, \D,\K) = -\infty$.  
\end{proof}
\begin{proposition} \label{prop:sc_loc_cons}
    A $\K$-consistent and decomposable scoring criterion is a locally $\K$-consistent scoring criterion.
\end{proposition}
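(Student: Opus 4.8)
The plan is to follow the proof of Lemma~7 of \citet{Chickering2002}, which establishes the non-tiered statement that a decomposable and consistent score is locally consistent, while keeping track of the tiered ordering so that every auxiliary DAG used still encodes $\K$, and to add a short separate argument for the third clause of Definition~\ref{def:LocConsTier}, which has no analogue in the non-tiered case.

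Fix a DAG $\G$ that encodes $\K$, distinct nodes $X,Y$ with no edge $X\to Y$ in $\G$, let $\G'$ be $\G$ with $X\to Y$ added, and write $P=\text{Pa}_Y^{\G}$. The first step is to reduce the global comparison to a local one: by decomposability, $S(\G',\D,\K)-S(\G,\D,\K)=s(Y,P\cup\{X\},\K)-s(Y,P,\K)$, because $\G$ and $\G'$ have the same parent set at every node except $Y$. The same computation shows that this difference equals $S(\HH',\D,\K)-S(\HH,\D,\K)$ for \emph{any} pair of DAGs $\HH,\HH'$ that agree everywhere except that $Y$ has parent set $P$ in $\HH$ and $P\cup\{X\}$ in $\HH'$, so it suffices to produce one convenient such pair on which a clause of $\K$-consistency (Definition~\ref{def:K_Cons}) can be invoked.

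If $\tau(X)>\tau(Y)$ (clause~(iii) of local $\K$-consistency), then $X\to Y$ is forbidden, so $\G'$ contradicts $\K$ while $\G$ encodes it, and clause~(iii) of $\K$-consistency immediately gives $S(\G,\D,\K)>S(\G',\D,\K)$, with no limit required. If $\tau(X)\le\tau(Y)$, then $\G'$ still encodes $\K$, and I would build $\HH,\HH'$ as in \citet{Chickering2002}: take a topological order $\prec$ that refines $\tau$ and along which $P\cup\{X\}$ precedes $Y$, let $\HH'$ be (essentially) the complete DAG along $\prec$ but with $\text{Pa}_Y^{\HH'}=P\cup\{X\}$, and let $\HH$ be $\HH'$ with the edge $X\to Y$ removed, so $\text{Pa}_Y^{\HH}=P$; both encode $\K$ since their edges go forward along $\prec$ and $\prec$ refines $\tau$. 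A factorization argument along $\prec$ shows $\HH'$ is an I-map of $p$ and that $\HH$ is an I-map of $p$ precisely when $Y\perp\!\!\!\perp_p X\mid P$; as $p$ is faithful to $\GS$, being an I-map of $p$ coincides with $\GS\le\cdot$. Hence when $Y\not\!\perp\!\!\!\perp_p X\mid P$ (clause~(i)) we have $\GS\le\HH'$ and $\GS\not\le\HH$, so clause~(i) of $\K$-consistency yields $S(\HH',\D,\K)>S(\HH,\D,\K)$ almost surely as $n\to\infty$, hence $S(\G',\D,\K)>S(\G,\D,\K)$; and when $Y\perp\!\!\!\perp_p X\mid P$ (clause~(ii)) both $\GS\le\HH$ and $\GS\le\HH'$ with $\HH$ having one fewer edge, so clause~(ii) of $\K$-consistency, applied with $\HH'$ and $\HH$ in the swapped roles, yields $S(\HH,\D,\K)>S(\HH',\D,\K)$ almost surely, hence $S(\G',\D,\K)<S(\G,\D,\K)$. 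Together these are exactly the three clauses of local $\K$-consistency.

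The step I expect to be the main obstacle is carrying out Chickering's construction of $\HH,\HH'$ under the tier constraint: the auxiliary DAGs must simultaneously be I-maps of $p$ with the prescribed parent set for $Y$ \emph{and} encode $\K$, which forces their underlying order $\prec$ to refine $\tau$. One must then check that choosing $\prec$ to refine $\tau$, with $P\cup\{X\}$ still preceding $Y$, does not spoil the I-map property even though the tier order may force further nodes to precede $Y$ --- reconciling these requirements is where the real care is needed. Everything else is a direct transcription of the argument in \citet{Chickering2002}, with clause~(iii) of $\K$-consistency supplying the one genuinely new case.
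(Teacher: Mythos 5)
Your overall route is the same as the paper's: clause (iii) of local $\K$-consistency is handled exactly as in the paper via clause (iii) of Definition \ref{def:K_Cons}, and clauses (i)--(ii) are handled by using decomposability to replace the pair $(\G,\G')$ by an auxiliary pair $(\HH,\HH')$ with parent sets $P$ and $P\cup\{X\}$ at $Y$, to which clauses (i)--(ii) of $\K$-consistency are then applied (the paper takes $\HH'$ to be a completely connected DAG that encodes $\K$, so that $\GS\leq\HH'$ is immediate and $Y\perp\!\!\!\perp X\mid \text{Pa}_Y^{\HH}$ is the unique independence of $\HH$). The problem is that the one step you defer --- constructing $\HH,\HH'$ compatibly with the tiers --- is not a routine check but the crux, and the intermediate claim you do make for it is false in general.

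Concretely, suppose some node $W$ satisfies $\tau(W)<\tau(Y)$ and $W\notin P\cup\{X\}$. In any DAG encoding $\K$, tiers are non-decreasing along directed paths, so $W$ cannot be a descendant of $Y$; hence in your $\HH'$ (complete along $\prec$ except that the extra edges into $Y$ are dropped so that $\text{Pa}_Y^{\HH'}=P\cup\{X\}$), $W$ and $Y$ are non-adjacent and the local Markov property gives the d-separation $Y\perp\!\!\!\perp_{\HH'} W\mid P\cup\{X\}$. If $W$ is, say, a parent of $Y$ in $\GS$, then by faithfulness $Y\not\!\perp\!\!\!\perp_p W\mid P\cup\{X\}$, so $\GS\not\leq\HH'$ and clause (i) of $\K$-consistency cannot be invoked for $(\HH,\HH')$; the ``factorization argument along $\prec$'' showing $\HH'$ is an I-map of $p$ simply does not exist in this case. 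Keeping the full complete DAG along a tier-refining order instead forces $W\in\text{Pa}_Y^{\HH'}$, which destroys the decomposability identity you rely on. So, as written, your argument does not go through whenever a node from a tier earlier than $\tau(Y)$ lies outside $P\cup\{X\}$; supplying an argument for exactly this situation is what the proof needs. For comparison, the paper's proof handles the point by asserting that $\HH$ can be chosen so that adding $X\rightarrow Y$ yields a completely connected DAG encoding $\K$, which tacitly presupposes that every node in a tier before $\tau(Y)$ already lies in $P\cup\{X\}$; it does not treat the configuration above either, so the difficulty you flagged is genuine rather than an artifact of your variant. Note also that for the score actually used in the paper, TBIC, clauses (i)--(ii) of local $\K$-consistency follow directly from ordinary local consistency of BIC (Lemma 7 of \citet{Chickering2002}), since TBIC coincides with BIC on DAGs encoding $\K$ and a DAG and its supergraph obtained by adding an allowed edge both encode $\K$; that route avoids the auxiliary construction entirely, but only for TBIC-like scores, not for an arbitrary $\K$-consistent decomposable criterion as the proposition claims.
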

\begin{proof}
    Let $\G$ be any DAG that encodes $\K$ and where adding the edge $\left\{X \rightarrow Y\right\}$ to $\G$ results in a DAG $\G'$.
    \\
    It must then be that $X\in\ndeG_Y$ since adding the edge $\left\{X\rightarrow Y\right\}$ results in a DAG and therefore no cycle. Hence, $Y \indG X |\paG_Y$.
    \\
    Assume $S$ is a decomposable and $\K$-consistent score criterion. We show that each of the properties $(i)-(iii)$ for a locally $\K$-consistent score criterion is held by $S$.
    \\
    \\
    $(i)$ \& $(ii)$: Assume $\tau(X)\leq \tau(Y)$. By decomposability of the score criterion, the increase in score that results from adding $\left\{X \rightarrow Y\right\}$ to $\G$ is the same increase in score that results from adding $\left\{X \rightarrow Y\right\}$ to any other DAG $\HH$ where $\text{Pa}_{Y}^{\HH}= \paG_Y$. Choose $\HH$ as a DAG where $\text{Pa}_{Y}^{\HH}= \paG_Y$ and where adding $\left\{X \rightarrow Y\right\}$ results in a completely connected DAG $\HH'$ that encodes $\K$. Then $\HH$ will by default also encode $\K$. Since $\HH'$ is fully connected, we have that $\GS\leq\HH'$. We now continue the proof for $(i)$ and $(ii)$ separately.
    \begin{adjustwidth}{0.5cm}{}
    $(i)$: Assume  $ Y \not\!\perp\!\!\!\perp_p X | Pa^{\G}_Y $, then  $ Y \not\!\perp\!\!\!\perp_p X | Pa^{\HH}_Y $ since $\text{Pa}_{Y}^{\HH}= \paG_Y$. Since the data-generating DAG is Markov and faithful to the true distribution $p$, we have that $ Y \nindGS X | Pa^{\HH}_Y $. From this, we have that there exists an independence statement in $\HH$ which is not in $\GS$ and therefore $\GS \not\leq \HH$. By $\K$-consistency ($(i)$ in Def. \ref{def:K_Cons}) we have that $S(\HH,\D,\K) < S(\HH',\D,\K) $ which implies that $S(\G,\D,\K) < S(\G',\D,\K) $.    
    \\
    \\
    $(ii)$: First let us realize that the only d-separation statement in $\HH$ is $ Y \!\perp\!\!\!\perp_p X | Pa^{\HH}_Y $. One less node in the conditioning set would open the path through a parent of $Y$. Since all nodes except $X$ and $Y$ are adjacent in $\HH$, adding an extra node $Z$ to the conditioning set would imply $Z\in\text{Ch}_Y^{\HH}$. $Z$ is also adjacent to $X$ and hence either $Z\in\text{Pa}_X^{\HH}$ or $Z\in\text{Ch}_X^{\HH}$. If $Z\in\text{Ch}_X^{\HH}$, then the path $X\rightarrow Z\leftarrow Y$ is open when conditioning on $Z$ thus there is no independence. If $Z\in\text{Pa}_X^{\HH}$ then $Y\in \text{An}_X^{\HH}$ and adding the edge $\{X \rightarrow Y\}$ would introduce a cycle, and contradict that $\HH'$ is a DAG.
    \\
    Assume  $ Y \!\perp\!\!\!\perp_p X | Pa^{\G}_Y $, then  $ Y \!\perp\!\!\!\perp_p X | Pa^{\HH}_Y $ which implies $ Y \indGS X | Pa^{\HH}_Y $. As this is the only independence statement in $\HH$, we have that $\GS\leq \HH$. By $\K$-consistency ($(ii)$ in Def. \ref{def:K_Cons}) we have that $S(\HH,\D,\K) > S(\HH',\D,\K) $ which implies that $S(\G,\D,\K) > S(\G',\D,\K) $.
    \end{adjustwidth}
    $(iii)$: Assume $\tau(X) > \tau(Y)$. Then $\G'$ contradicts $\K$, and by $\K$-consistency ($(iii)$ in Def. \ref{def:K_Cons}) we have that $S(\G,\D,\K) > S(\G',\D,\K) $.
\end{proof}
\begin{proposition} \label{prop:TBICloc_cons}
    TBIC is a locally $\K$-consistent scoring criterion.
\end{proposition}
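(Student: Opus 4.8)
\textbf{Proof proposal for Proposition \ref{prop:TBICloc_cons}.}
The plan is simply to invoke the two structural properties of TBIC already established and then apply the general implication relating them to local $\K$-consistency. Concretely, I would first recall from Proposition \ref{prop:TBICDecomp} that $\TS(\G,\D,\K) = \sum_{i=1}^d \Ts(X_i, Pa^\G_i, \K)$ with $\Ts$ as in \eqref{eq:localTBIC}, so that TBIC is decomposable. Next I would recall from Proposition \ref{prop:TBICcons} that TBIC is $\K$-consistent, i.e. it satisfies properties $(i)$--$(iii)$ of Definition \ref{def:K_Cons}.

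Having both of these in hand, the conclusion is immediate from Proposition \ref{prop:sc_loc_cons}, which states precisely that any $\K$-consistent and decomposable scoring criterion is locally $\K$-consistent. Applying it to $S = \TS$ yields that TBIC satisfies properties $(i)$--$(iii)$ of Definition \ref{def:LocConsTier}, which is the claim. One short line I would include for completeness: since the data are assumed i.i.d.\ multivariate Gaussian throughout this subsection, the number of parameters in each local model equals the number of in-edges, so the local scores $\Ts$ are well-defined and finite on graphs encoding $\K$, which is what Proposition \ref{prop:sc_loc_cons} needs when it chooses the fully connected auxiliary DAG $\HH'$.

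There is essentially no obstacle remaining at this point: the real work was front-loaded into Propositions \ref{prop:TBICDecomp} and \ref{prop:TBICcons} (in particular, verifying decomposability required handling the $-\infty$ branch of $\Ts$ carefully, and $\K$-consistency required reducing to ordinary BIC-consistency on the sublattice of $\K$-encoding DAGs together with the trivial comparison against $-\infty$), and into the graph-theoretic argument of Proposition \ref{prop:sc_loc_cons} that converts global $\K$-consistency into the local statement. If I wanted to make the present proof self-contained rather than a one-line corollary, the only point worth re-checking is that the hypothesis ``$\G$ encodes $\K$'' in Definition \ref{def:LocConsTier} matches the hypothesis used in Proposition \ref{prop:sc_loc_cons}, which it does verbatim, so no adaptation is needed.

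\begin{proof}
By Proposition \ref{prop:TBICDecomp}, TBIC is decomposable, and by Proposition \ref{prop:TBICcons}, TBIC is $\K$-consistent. Hence, by Proposition \ref{prop:sc_loc_cons}, TBIC is a locally $\K$-consistent scoring criterion.
\end{proof}
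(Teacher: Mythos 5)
Your proposal is correct and follows exactly the paper's own argument: cite Proposition \ref{prop:TBICDecomp} for decomposability, Proposition \ref{prop:TBICcons} for $\K$-consistency, and conclude via Proposition \ref{prop:sc_loc_cons}. The extra remark about the Gaussian local scores is harmless but not needed, since Proposition \ref{prop:sc_loc_cons} applies to any $\K$-consistent and decomposable criterion.
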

\begin{proof}
    TBIC is $\K$-consistent and decomposable by Proposition \ref{prop:TBICDecomp} and Proposition \ref{prop:TBICcons}. It then follows from Proposition \ref{prop:sc_loc_cons} that TBIC is a locally $\K$-consistent scoring criterion.
\end{proof}
\begin{proposition} (TBIC is $\K$-score equivalent)\label{prop:TBICScoreEq}
    \\ 
    For any DAGs $\G,\G'\in \EK$, we have that the TBIC scores are identical  $\TS(\G,\D,\K)=\TS(\G',\D,\K)$.
\end{proposition}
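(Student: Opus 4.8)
The plan is to reduce $\K$-score equivalence of TBIC directly to ordinary score equivalence of BIC, which is already available from Proposition \ref{prop:BICallProps}. The only structural fact needed is the characterization of a restricted equivalence class: by \citet{bang2023wiser}, $\EK$ consists precisely of those DAGs that are pairwise Markov equivalent and that encode $\K$ (equivalently, the DAGs represented by a fixed tiered MPDAG). This is exactly the point noted earlier in the excerpt, namely that if $\G \approx \HH$ and both encode $\K$, then $\G$ and $\HH$ lie in the same restricted equivalence class.

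So let $\G, \G' \in \EK$. First, since both $\G$ and $\G'$ encode $\K$, Definition \ref{def:TBIC} gives $\TS(\G,\D,\K) = S_B(\G,\D)$ and $\TS(\G',\D,\K) = S_B(\G',\D)$; in particular neither value is $-\infty$. Second, since $\G$ and $\G'$ belong to the same restricted equivalence class, they are Markov equivalent, i.e. $\G \approx \G'$, so they belong to a common (unrestricted) Markov equivalence class $\E$. Third, BIC is score equivalent on i.i.d. multivariate Gaussian data by Proposition \ref{prop:BICallProps}, hence $S_B(\G,\D) = S_B(\G',\D)$. Chaining the three equalities yields $\TS(\G,\D,\K) = \TS(\G',\D,\K)$, which is the claim.

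I do not anticipate a genuine obstacle: the argument is essentially a one-line consequence of the case distinction in the definition of TBIC together with score equivalence of BIC. The only point that deserves care is making explicit that membership in a common restricted equivalence class $\EK$ forces the two DAGs to be Markov equivalent (not merely to share the same tiered MPDAG after applying Meek's rule 1). This is where the construction of $\EK$ — starting from a CPDAG in agreement with $\K$, restricting via Algorithm \ref{alg:restrict}, and applying Meek's rule 1 — is invoked: none of these operations alters adjacencies or v-structures, so every DAG represented by the resulting tiered MPDAG lies in the original Markov equivalence class, and hence the two DAGs are Markov equivalent as required.
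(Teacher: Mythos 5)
Your proposal is correct and follows essentially the same route as the paper: both arguments note that every DAG in $\EK$ encodes $\K$ so TBIC reduces to BIC, and then invoke the score equivalence of BIC on the corresponding (unrestricted) Markov equivalence class containing both DAGs. Your extra remark justifying that membership in $\EK$ entails Markov equivalence is a slight elaboration of a step the paper simply asserts, but it does not change the approach.
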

\begin{proof}
    Both $\G$ and $\G'$ encode $\K$ hence $\TS(\G,\D,\K)=S_B(\G,\D)$ and $\TS(\G',\D,\K)=S_B(\G',\D)$ by definition.
    \\
    We have that the corresponding equivalence class $\E$ to the restrictive equivalence class $\EK$ contains $\G,\G'\in\E$. We have by the score equivalence of BIC (Proposition \ref{prop:BICallProps}) that $S_B(\G,\D)=S_B(\G',\D)$.
\end{proof}
\subsection{Proof of Lemma \ref{lem:TGESResult}} \label{supmat:subsubsec:StageI}
We start out with restating the Lemma we wish to prove:
\begin{customlem}{\ref{lem:TGESResult}}
   TGES results in a tiered MPDAG. 
\end{customlem}
First, we prove the following lemmas:
\begin{lemma} \label{lem:StageIAgree}
    Let $S$ be a $\K$-consistent, $\K$-score equivalent and decomposable scoring criterion. 
    Then stage (i) in TGES using $S$ applied to a restricted equivalence class $\EK$ will terminate in a CPDAG that is in agreement with the tiered background knowledge $\K$.
\end{lemma}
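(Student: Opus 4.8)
The plan is to analyze what stage (i) does to a restricted equivalence class $\EK$ and argue that it cannot get "stuck" at a graph that contradicts $\K$. Stage (i) is a single GES-type step: it examines the neighboring CPDAGs of the current tiered MPDAG (obtained by a single valid directed-edge addition, removal, or reversal in some member DAG), and if any has strictly larger score it moves to the highest-scoring one, otherwise it terminates. The key observation is that the score criterion $S$ is $\K$-consistent, and in particular (property (iii) of Definition \ref{def:K_Cons}) any DAG that contradicts $\K$ is scored $-\infty$, strictly below any DAG that encodes $\K$. So I would first establish that, along the run of stage (i), once we are at a graph (equivalence class) in agreement with $\K$ we can never be forced to move to one that contradicts $\K$, since such a move would strictly decrease the score and stage (i) only takes strictly improving steps.

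Next I would handle termination and the starting point. The input to stage (i) is the current tiered MPDAG, which by construction (the previous step ended with stages (ii)–(iii), or we are at the empty graph) is in agreement with $\K$ — the empty graph trivially encodes $\K$ since $\mathcal{R}=\emptyset$ and it has no edges to forbid. By the argument above, every intermediate equivalence class visited by stage (i) remains in agreement with $\K$: inductively, from an equivalence class in agreement with $\K$, the chosen neighbor has strictly greater score, hence is not $-\infty$, hence (by $\K$-consistency (iii), contrapositive) contains a DAG encoding $\K$, i.e. is in agreement with $\K$. Since each step strictly increases $S$ and there are finitely many equivalence classes over a fixed node set, stage (i) terminates. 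The graph it terminates at is a CPDAG (stage (i) outputs the CPDAG of its current graph, exactly as in GES) and, by the invariant just established, it is in agreement with $\K$.

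The main obstacle I anticipate is the subtle point that stage (i) operates on neighboring CPDAGs of a tiered MPDAG, not of a CPDAG — i.e. the "current state" entering stage (i) is a tiered MPDAG $\mathcal{W}$, and "neighboring CPDAG" must be interpreted as the CPDAG reached by a single valid edge operation on some DAG represented by $\mathcal{W}$. I would need to make precise that the set of DAGs represented by the tiered MPDAG is exactly the set of DAGs in the restricted equivalence class $\EK$ (this follows from the tiered-MPDAG construction of \citet{bang2023wiser} invoked in Section \ref{sec:tier_bg_kn}), so that the neighboring CPDAGs considered are well-defined and the $\K$-consistency comparisons apply to genuine DAGs. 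Once that identification is in hand, the rest is the monotonicity-plus-finiteness argument above, mirroring the termination argument for GES in \citet{Chickering2002} but with the extra clause that the $-\infty$ penalty of $S$ on $\K$-contradicting DAGs blocks any exit from agreement with $\K$.
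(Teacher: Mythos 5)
Your proposal is correct and uses essentially the same argument as the paper: since stage (i) only takes a strictly score-improving greedy step, and $\K$-consistency (property (iii) of Definition \ref{def:K_Cons}) forces any class contradicting $\K$ to score strictly below the current class that is in agreement with $\K$, the step can never land in (and hence never terminate at) a CPDAG contradicting $\K$. The only cosmetic slip is attributing the $-\infty$ penalty to general $\K$-consistency (it is specific to TBIC), but you also invoke the correct general strict-inequality form, so the argument stands.
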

\begin{proof}
    Let $\EK$ be a restricted equivalence class that encodes $\K$, and $S$ a $\K$-consistent scoring criterion. We have that all DAGs $\G\in\EK$ encode the tiered background knowledge. Let $\E'$ be the equivalence class that results from stage (i) given $\EK$.
    \\
    Assume for contradiction that the equivalence class $\E'$ is not in agreement with $\K$.
    \\
    This implies that all DAGs $\G'\in\E'$ contradict $\K$. By $\K$-consistency ($(iii)$ in Def. \ref{def:K_Cons}) we have that $S(\G^*,\D,\K) > S(\G',\D,\K)$. But since GES is greedy and $S(\G^*,\D,\K) > S(\G',\D,\K)$ there is a contradiction, and hence it must be that $\E'$ is in agreement with $\K$.
\end{proof}
\begin{lemma} \label{lem:TGESStepResult}
    Each step in TGES results in a tiered MPDAG
\end{lemma}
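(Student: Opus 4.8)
The plan is to argue by induction over the steps that TGES executes, combined with the characterization of a tiered MPDAG from \citet{bang2023wiser}: a graph is a tiered MPDAG exactly when it arises from a CPDAG in agreement with $\K$ by first restricting it according to $\K$ with Algorithm \ref{alg:restrict} and then applying Meek's rule 1 until closure --- which is precisely what stages (ii) and (iii) of a step carry out. So, for a single step, it suffices to establish that (a) its input is a tiered MPDAG and (b) the graph handed from stage (i) into stage (ii) is a CPDAG that is in agreement with $\K$.

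First I would record the base case of the induction: the input of the first step is the empty graph, which is the CPDAG of the empty DAG, is vacuously in agreement with $\K$ (it has no edges, so it omits no required edge since $\mathcal{R}=\emptyset$ and contains no forbidden edge), and is fixed by both Algorithm \ref{alg:restrict} and Meek's rule 1; hence it is a tiered MPDAG. For the inductive step, the input of step $k+1$ is the output of step $k$, a tiered MPDAG by the induction hypothesis, so I would fix a single step whose input $\mathcal{W}$ is a tiered MPDAG, i.e. represents a restricted equivalence class $\EK$ that encodes $\K$, and split into two cases. If stage (i) makes no change, the step outputs $\mathcal{W}$, still a tiered MPDAG. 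Otherwise stage (i) performs a single valid forward, backward, or turning operation and then replaces the result by its CPDAG; by Lemma \ref{lem:StageIAgree} this CPDAG is in agreement with $\K$. Stage (ii) then applies Algorithm \ref{alg:restrict} to it --- legitimate, since the algorithm's only precondition is a PDAG in agreement with $\K$ --- and stage (iii) iterates Meek's rule 1 to closure, so by the \citet{bang2023wiser} characterization the step's output is a tiered MPDAG. Lemma \ref{lem:TGESResult} then follows as well, since the output of TGES is the output of its last step, or the empty graph when no step is executed.

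The point most in need of care is guaranteeing that the hypothesis of Lemma \ref{lem:StageIAgree} actually holds --- that stage (i) is invoked on a graph representing a restricted equivalence class encoding $\K$ --- and this is exactly what the induction supplies, since the previous step's output is a tiered MPDAG and every tiered MPDAG represents such a class. A secondary point worth a sentence is that the operations examined in stage (i) are the valid forward/backward/turning operations in the sense of \citet{Chickering2002} (Theorems 15 and 17) and \citet{hauser2012characterization}, applied with respect to the equivalence-class structure underlying the current tiered MPDAG, so that the ``CPDAG of $\mathcal{W}$'' formed after the operation genuinely is a CPDAG and hence a valid input to Algorithm \ref{alg:restrict}; once that is granted, the conclusion is immediate from the definition of a tiered MPDAG and requires no further computation.
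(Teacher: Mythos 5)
Your proposal is correct and follows essentially the same route as the paper's proof: induction starting from the empty graph (a tiered MPDAG by default), using Lemma \ref{lem:StageIAgree} to get a CPDAG in agreement with $\K$ after stage (i), Algorithm \ref{alg:restrict} in stage (ii), and the \citet{bang2023wiser} result that closing under Meek's rule 1 yields a tiered MPDAG in stage (iii). Your added remarks (the no-change case and the validity of the ``CPDAG of $\mathcal{W}$'' construction) are elaborations of the same argument rather than a different approach.
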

\begin{proof}
    TGES commences with an empty graph which by default is a tiered MPDAG for any tiered background knowledge $\K$.
    \\
    Assuming we start with a tiered MPDAG, we prove that stage (i)-(iii) results in a tiered MPDAG. By Lemma \ref{lem:StageIAgree} Stage (i) results in a CPDAG in agreement with $\K$. Stage (ii) restricts the CPDAG according to $\K$ (Algorithm \ref{alg:restrict}) resulting in a PDAG. Stage (iii) utilizes Meek's rule 1, which by \citet{bang2023wiser} results in a tiered MPDAG.
\end{proof}
The proof of Lemma \ref{lem:TGESResult} now follows directly from Lemma \ref{lem:TGESStepResult}.

\subsection{Proof of Theorem \ref{Thm:TGES_ALL_S_C}: Sound and Completeness of TGES} \label{supmat:subsec:TGES_S_C}
We start out by restating the theorem we wish to prove:
\begin{customtheorem}{\ref{Thm:TGES_ALL_S_C}}
    (Sound and Completeness of TGES)\\
TGES using a $\K$-score equivalent, decomposable, $\K$-consistent and $\K$-locally consistent score criterion, results in a tiered MPDAG and for $n \rightarrow \infty$ the tiered MPDAG is almost surely a sound and complete estimate of the restricted Markov equivalence class of the true data-generating DAG.
\end{customtheorem}
This subsection will prove Theorem \ref{Thm:TGES_ALL_S_C} by introducing and proving three lemmas for each of the phases of TGES. Theorem \ref{Thm:TGES_ALL_S_C} follows directly from these lemmas. The first lemma concerns the forward phase of TGES:

\begin{lemma}\label{Lem:TGES_for_S_C}
(TGES forward phase)\\
Let $\GS$ be the DAG from which the data $\D$ was generated, $n$ be the number of observations in $\D$, and $\K$ the tiered background knowledge.
Let $\EK$ be the restricted equivalence class that is a result of a forward phase in TGES on $\D$ and $\K$ commencing with the empty graph. Then for $n \rightarrow \infty$ we have that all $\G\in \EK$ encode $\K$ and $\GS \leq \G$ almost surely. 
\end{lemma}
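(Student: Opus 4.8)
The plan is to mirror the structure of Chickering's proof for the forward phase of GES (Lemma 10 / Theorem 15 in \citet{Chickering2002}), but to carry the tiered background knowledge through every step. I would argue by contradiction: suppose that after the forward phase terminates in some restricted equivalence class $\EK$, there is a DAG $\G \in \EK$ with $\GS \not\leq \G$, i.e.\ $\G$ omits some independence statement that is not true in $\GS$ --- equivalently, $\G$ is missing an edge whose addition would eliminate a conditional independence constraint absent from $p$. Since $\GS$ encodes $\K$ and all DAGs in $\EK$ encode $\K$ (this half of the statement follows from Lemma \ref{lem:StageIAgree}, which guarantees stage (i) terminates in a CPDAG in agreement with $\K$, and stages (ii)--(iii) preserve that agreement), there is a meaningful ``target'' within the search space, namely a DAG that is both a supergraph-in-independence of $\GS$ and encodes $\K$.

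The key step is to show that whenever $\G \in \EK$ with $\GS \not\leq \G$, there exists a \emph{valid} forward edge addition (in the sense of Theorem 15 of \citet{Chickering2002}) that strictly increases the score, contradicting termination. Here I would invoke the local $\K$-consistency of $S$: because both $\GS$ and $\G$ encode $\K$, and $\GS \not\leq \G$, some edge $\{X \to Y\}$ with $\tau(X) \le \tau(Y)$ can be added to (a DAG Markov-equivalent to) $\G$ so that $Y \nindep_p X \mid \paG_Y$; by part (i) of Definition \ref{def:LocConsTier} this addition raises the score almost surely as $n \to \infty$. The delicate point is that this edge addition must keep us inside the class of DAGs encoding $\K$ --- but since $\tau(X) \le \tau(Y)$, the new edge is cross-tier or in-tier and hence not forbidden, so the resulting DAG still encodes $\K$; then stages (ii) and (iii) only orient further edges consistent with $\K$ and Meek's rule 1, landing us in a legitimate neighboring restricted equivalence class with strictly higher score. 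I would lean on Chickering's combinatorial argument (the existence of an edge whose addition removes a spurious independence, using the characterization of Markov equivalence and the meek-rule structure) to produce such an $X, Y$, checking that his construction can always be chosen with $\tau(X) \le \tau(Y)$; this is plausible because $\GS$ itself encodes $\K$, so the ``missing'' edges of $\G$ relative to $\GS$ are never tier-violating.

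The main obstacle I anticipate is precisely this last compatibility check: Chickering's proof picks the edge to add fairly freely (via a topological-order argument over $\GS$), and I must verify that the same edge, oriented appropriately, is never forbidden by $\K$ and that after inserting it the CPDAG-then-restrict-then-Meek pipeline of stages (i)--(iii) does not ``undo'' progress or create a graph contradicting $\K$. Concretely I would show: (a) any edge present in $\GS$ but absent in $\G$ connects nodes $A,B$ with the $\GS$-orientation going from the earlier-or-equal tier to the later-or-equal tier, so adding it in that orientation is a valid, $\K$-respecting forward step; (b) the stage-(i) GES subroutine, when it improves the score, does so via such a valid addition, by local $\K$-consistency; and (c) stages (ii)--(iii) are monotone and $\K$-preserving (already established in Lemma \ref{lem:TGESStepResult}). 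Combining (a)--(c) with the greedy nature of the phase yields the contradiction, so at termination $\GS \leq \G$ for all $\G \in \EK$, which together with the already-noted fact that all such $\G$ encode $\K$ completes the proof.
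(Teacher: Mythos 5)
Your overall strategy (contradiction, local $\K$-consistency to exhibit a score-improving neighbor, and Lemmas \ref{lem:StageIAgree}/\ref{lem:TGESStepResult} for the encode-$\K$ half) matches the paper, and the encode-$\K$ half of your argument is fine. The genuine gap is exactly at the point you flag as the ``main obstacle,'' and your proposed resolution does not close it. From $\GS \not\leq \G$ one obtains a pair $X$, $Y \in \ndeG_X$ with $X \nindGS Y \mid \paG_X$; the Chickering-style move adds the edge \emph{into} $X$, i.e.\ $\{Y \rightarrow X\}$, because local consistency is stated relative to the parent set of the receiving node. If $\tau(X) < \tau(Y)$ this orientation is forbidden, and the tier-allowed orientation $\{X \rightarrow Y\}$ improves the score only if $Y \nindGS X \mid \paG_Y$ --- which can fail even though $X \nindGS Y \mid \paG_X$ holds, since the two conditioning sets differ (this is the paper's Case D; Table \ref{tab:Lemma1_proof_examples} gives minimal examples). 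Your claim (a) does not cover this situation: the witnessing pair $(X,Y)$ need not be adjacent in $\GS$ at all (the forward phase is expected to add ``extra'' edges for dependencies arising along paths), so there is in general no ``edge present in $\GS$ but absent in $\G$'' between $X$ and $Y$ whose $\GS$-orientation you could borrow, and the fact that $\GS$ encodes $\K$ says nothing about whether the node at which the needed conditional dependence holds sits in the earlier tier.

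Closing this case is the actual content of the paper's proof: when $Y \indGS X \mid \paG_Y$, it takes a path $\pi$ from $X$ to $Y$ in $\GS$ that is open given $\paG_X$, examines the first blockade of $\pi$ given $\paG_Y$ walking from $Y$, and in each sub-case produces a \emph{different} tier-respecting, acyclicity-preserving, score-improving addition: $\{V \rightarrow X\}$ or $\{X \rightarrow V\}$ when the blockade is a non-collider $V \in \paG_Y$ (Cases D.1.1 and D.1.2.1), or $\{K \rightarrow Y\}$ for the neighbor $K$ of a blocking collider (Case D.2), together with a recursion ($Y := V$, $\pi := \pi_V$) whose termination is argued from the strictly decreasing path length (Case D.1.2.2). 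Some argument of this kind --- locating a new node pair when the obvious one is tier-incompatible, and showing the search terminates --- is indispensable; asserting that Chickering's choice ``can always be chosen with $\tau(X) \leq \tau(Y)$'' is precisely what must be proved, and it is not true of the pair his construction hands you. (A partial salvage of your line (a) does work when some adjacency of $\GS$ is missing in $\G$: adjacent nodes are never d-separated, so the $\GS$-orientation of that edge is both tier-allowed and score-improving, and acyclicity can be arranged using that $\G$ encodes $\K$. But you would still need to handle $\GS \not\leq \G$ with no missing $\GS$-adjacency, which lands you back in the Case D analysis.)
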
 

The second lemma is concerned with the backward phase:

\begin{lemma} \label{Lem:TGES_back_S_C}
(TGES backward phase)\\
Let $\GS$ be the DAG from which the data $\D$ was generated, $n$ be the number of observations in $\D$, and $\K$ the tiered background knowledge.
Let $\EK$ be the restricted equivalence class that is a result of a backward phase in TGES on $\D$ and $\K$ commencing with a restricted equivalence class where all graphs $\G$ encode $\K$ and $\GS \leq \G$. Then for $n \rightarrow \infty$ we have that all $\HH\in \EK$ encode $\K$ and $\GS \approx \HH$ almost surely. 
\end{lemma}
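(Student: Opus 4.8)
The plan is to adapt Chickering's analysis of the backward phase of GES (Theorem~17 of \citet{Chickering2002}) to the tiered setting, using $\K$-consistency (Definition~\ref{def:K_Cons}), local $\K$-consistency (Definition~\ref{def:LocConsTier}), $\K$-score equivalence, and decomposability. First I would dispose of termination: by decomposability each backward step compares local scores $\Ts$ and executes a deletion only when the global score $S(\cdot,\D,\K)$ strictly increases; since there are finitely many DAGs on $\V$ and (as the starting class satisfies $\GS \leq \G$) the relevant scores are finite, the score takes finitely many values, so only finitely many steps occur. By Lemma~\ref{lem:TGESStepResult} every intermediate graph is a tiered MPDAG, so the phase ends with a well-defined restricted equivalence class $\EK$ all of whose DAGs encode $\K$, and $\K$-score equivalence guarantees that the scores being compared do not depend on the chosen representative.

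Next I would show that the invariant ``$\GS \leq \HH$ for every $\HH \in \EK$'' is preserved by each backward step. Stage~(i) executes a valid edge deletion on a DAG representative and passes to the CPDAG; because the score strictly increased, conditions (i)--(ii) of local $\K$-consistency (which here coincide with ordinary local consistency, since all graphs involved encode $\K$) force the deleted edge $X \to Y$ to be conditionally superfluous, i.e.\ $Y \perp\!\!\!\perp_p X$ given the parents of $Y$ in the post-deletion graph, and Chickering's argument then yields $\GS \leq$ the new CPDAG. Stages~(ii) and~(iii) only orient cross-tier edges (Algorithm~\ref{alg:restrict}) and apply Meek's rule~1; they alter no adjacencies and merely single out a subset of the same Markov equivalence class, so the $d$-separation model, and hence the relation $\GS \leq \cdot$, is untouched. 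Combined with Lemma~\ref{lem:StageIAgree}, the CPDAG chosen in each step is in agreement with $\K$, so each step is well defined and the invariant persists to the end of the phase.

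The heart of the argument --- and the step I expect to be the main obstacle --- is to conclude that at termination $\GS \approx \HH$. I would argue by contradiction: assume $\GS \leq \HH$ but $\GS \not\approx \HH$ for $\HH \in \EK$, so $\HH$ has strictly more edges than $\GS$. By the transformational characterization underlying \citet{Chickering2002}, there is a sequence of covered-edge reversals and single-edge deletions taking $\HH$ down to $\GS$ with $\GS \leq$ every intermediate DAG; its first deletion corresponds to a valid backward move between the associated Markov equivalence classes that, by local $\K$-consistency, strictly increases $S(\cdot,\D,\K)$ in the large-sample limit. The delicate point is that this move must also be \emph{admissible} for TGES: it must land in a class still in agreement with $\K$, so that stage~(i) does not discard it as having score $-\infty$. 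I would establish this by showing the reduction chain can be routed entirely through DAGs that encode $\K$ --- every cross-tier adjacency carries the orientation forced by $\tau$, which already agrees with $\GS$ (since $\GS$ encodes $\K$), so no reversal along the chain ever needs to flip a cross-tier edge; reversals touch only in-tier covered edges and keep the graph in-tier, and deletions trivially preserve $\K$-encoding. Granting this, the DAG from which the first deletion is taken lies in $\EK$, the target class is in agreement with $\K$, and by Lemmas~\ref{lem:StageIAgree} and~\ref{lem:TGESStepResult} the step yields a genuine tiered MPDAG; hence TGES would perform a score-improving backward step, contradicting termination. Therefore $\GS \approx \HH$, and since $\GS$ and $\HH$ both encode $\K$ they lie in the same restricted equivalence class. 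I expect essentially all the real effort to sit in that routing claim --- proving by induction along Chickering's chain that it stays within $\K$-encoding DAGs, and verifying that the covered-edge reversals the construction performs never touch cross-tier edges --- with the remainder being a faithful transcription of the GES backward-phase proof with ``CPDAG'' replaced by ``tiered MPDAG'' and ``locally consistent'' by ``locally $\K$-consistent.''
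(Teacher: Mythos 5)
Your proposal is correct and follows essentially the same route as the paper: maintain the invariant $\GS \leq \HH$ throughout the backward phase via ($\K$-)consistency of the score, and at termination use Chickering's transformational characterization, routed entirely through DAGs that encode $\K$, to exhibit a single score-improving edge deletion into a neighboring restricted equivalence class, contradicting greedy termination (your ``first deletion of the downward chain'' is just the mirror image of the paper's ``graph preceding the last addition of the upward chain''). The routing claim you isolate as the main obstacle is exactly the paper's Theorem \ref{TIERthm4chick}, proved by a case analysis of Apply-Edge-Operation (Algorithm \ref{alg:apply}); be aware that its hard case --- showing any covered edge the construction reverses is in-tier --- requires the sink-node/d-separation argument given there, not merely the observation that cross-tier adjacencies already carry the $\tau$-forced orientation.
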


As can be seen, TGES with only forward and backward phases fulfills Theorem \ref{Thm:TGES_ALL_S_C}. The third lemma then ensures that the turning phase does not turn any edges when the graph is Markov equivalent to the data-generating DAG.

\begin{lemma} \label{Lem:TGES_turn_S_C}
(TGES turning phase)\\
Let $\GS$ be the DAG from which the data $\D$ was generated, $n$ be the number of observations in $\D$, and $\K$ the tiered background knowledge.
Let $\EK$ be the restricted equivalence class that is a result of a turning phase in TGES on $\D$ and $\K$ commencing with a restricted equivalence class where all graphs $\HH$ encode $\K$ and $\GS \approx \HH$. Then for $n \rightarrow \infty$ we have that all $\mathcal{W}\in \EK$ encode $\K$ and $\GS \approx \mathcal{W}$ almost surely.
\end{lemma}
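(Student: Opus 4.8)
The plan is to show that, once TGES has reached the restricted Markov equivalence class of $\GS$, the turning phase makes no change at all, so that the restricted equivalence class it returns is the one it was handed; the conclusion of the lemma is then immediate.

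I would begin by recording two facts. By hypothesis every DAG in the starting restricted equivalence class $\EK$ encodes $\K$ and is Markov equivalent to $\GS$, so $\EK$ is exactly the restricted Markov equivalence class of $\GS$, and by $\K$-score equivalence (Definition \ref{def:K_ScoreEq}) all of its members carry one common score $s^\ast$. Moreover, taking $\G'=\GS$ in parts (i) and (ii) of $\K$-consistency (Definition \ref{def:K_Cons}), and using that there are only finitely many DAGs on $\V$, almost surely for all large $n$ we have $S(\G,\D,\K)<s^\ast$ for every DAG $\G$ that encodes $\K$ but is not Markov equivalent to $\GS$. Hence $s^\ast$ is the strict maximal score over all $\K$-encoding DAGs, and it is attained only on $\EK$.

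Next I would analyse an arbitrary reversal considered in Stage (i) of a turning step on the tiered MPDAG representing $\EK$, namely the reversal of one of its directed edges $X\to Y$; writing $\mathcal W'$ for the DAG obtained from a DAG representative of $\EK$ by that reversal, decomposability identifies the quantity $\Delta s_{X,Y}$ of Algorithm \ref{alg:T_tur_ph} with $S(\mathcal W',\D,\K)-s^\ast$. Because the tiered MPDAG encodes $\K$, every directed edge obeys $\tau(X)\le\tau(Y)$, so there are two cases. If $\tau(X)<\tau(Y)$, then $\mathcal W'$ contains the forbidden edge $Y\to X$, hence contradicts $\K$, and part (iii) of $\K$-consistency gives $S(\mathcal W',\D,\K)<s^\ast$, i.e.\ $\Delta s_{X,Y}<0$ (for TBIC this term is literally $-\infty$ by (\ref{eq:localTBIC})). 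If $\tau(X)=\tau(Y)$, then $\mathcal W'$ still encodes $\K$; since a reversal, the CPDAG step, Algorithm \ref{alg:restrict} and Meek's rule 1 never alter adjacencies, $\mathcal W'$ has the skeleton of $\GS$, so either $\mathcal W'\approx\GS$ — the step stays in $\EK$ and $\Delta s_{X,Y}=0$ — or $\mathcal W'\not\approx\GS$, in which case, since $\mathcal W'$ and $\GS$ share a skeleton, $\GS\leq\mathcal W'$ would force $\mathcal W'\approx\GS$, so $\GS\not\leq\mathcal W'$ and $\K$-consistency again gives $\Delta s_{X,Y}<0$.

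In every case $\Delta s_{X,Y}\le 0$, so no turning operation strictly increases the score: the inner while-loop of the turning phase exits without changing the graph, the output restricted equivalence class is $\EK$, and all of its members encode $\K$ and are Markov equivalent to $\GS$. I expect the main obstacle to be the identification in the third paragraph of $\Delta s_{X,Y}$ with the genuine difference in scores between the restricted equivalence class produced by Stages (i)--(iii) and the current one: one must check that the valid turning operator used in Algorithm \ref{alg:T_tur_ph}, followed by re-forming the CPDAG, restricting according to $\K$, and applying Meek's rule 1, yields a restricted equivalence class containing $\mathcal W'$, so that by $\K$-score equivalence its score equals $S(\mathcal W',\D,\K)$ even though many edges may have been re-oriented along the way. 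This is where decomposability and $\K$-score equivalence do the real work, and where one must also confirm that restricting attention to in-tier reversals keeps the operator's validity conditions and the tiered ordering mutually compatible.
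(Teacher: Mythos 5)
Your overall route is the paper's own: once the phase starts in the restricted equivalence class containing $\GS$, strict score-maximality of that class among all restricted equivalence classes makes the greedy turning phase inert. The paper isolates exactly this maximality statement as Proposition \ref{prop:PerfMapOptiTBIC} and then concludes in one sentence; your third paragraph (the case analysis over reversals, with the cross-tier case killed by part (iii) of Definition \ref{def:K_Cons}) is a more explicit, but compatible, way of saying that any operation either stays in the class or moves to a class of strictly lower score.

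The gap is in how you establish the maximality. You assert it follows by ``taking $\G'=\GS$ in parts (i) and (ii) of $\K$-consistency'', but part (ii) of Definition \ref{def:K_Cons} only applies when $\GS$ has \emph{fewer edges} than the competing DAG. So for a $\K$-encoding $\G$ with $\GS \leq \G$ and $\G \not\approx \GS$ you must first show that $\G$ has strictly more edges than $\GS$; likewise, your in-tier case uses that a DAG sharing the skeleton of $\GS$ and satisfying $\GS \leq \mathcal{W}'$ is already equivalent to $\GS$. Both claims are true, but they are not part of the definition of $\K$-consistency: they are precisely what the paper extracts from the tiered Chickering-type result, Theorem \ref{TIERthm4chick} (together with Proposition \ref{prop:chick27} and Lemma \ref{lem:needed_for_back}), inside the proof of Proposition \ref{prop:PerfMapOptiTBIC} --- there is a sequence of covered reversals and $\K$-encoding edge additions from $\GS$ to $\G$; no additions forces equivalence, while at least one addition forces strictly more edges and, via part (ii), a strictly lower score. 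Your argument becomes complete once you invoke Theorem \ref{thm4chick}/\ref{TIERthm4chick} (or an equivalent edge-count argument) at that point; the residual issue you flag, identifying $\Delta s_{X,Y}$ in Algorithm \ref{alg:T_tur_ph} with a difference of class-level scores via decomposability and $\K$-score equivalence, is left at the same level of detail in the paper, which simply argues directly on restricted equivalence classes.
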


\subsubsection{Proof of Lemma \ref{Lem:TGES_for_S_C}: TGES forward phase}
In this subsection, we prove that as $n\rightarrow\infty$ the forward phase of TGES almost surely results in a tiered MPDAG which contains an independence statement only if the same statements is in the true data-generating DAG. The way we show this is by assuming the forward phase of TGES does not result in such a tiered MPDAG, which then leads to there existing an edge addition which improves the score. This is a contradiction to the forward phase continuing until no neighboring restricted equivalence classes have a higher score.
To show these contradictions we split into quite a lot of cases. To give an overview of the cases we illustrated them in Figure \ref{fig:Lemma_overview}. We also made minimal examples of pairs of a data-generating DAG and a DAG evaluated by the algorithm which correspond to each of the cases in Table \ref{tab:Lemma1_proof_examples}.
\\
\\
First, we introduce some needed concepts.
\\
\\
Note that for $\GS$ being the data-generating DAG, we can substitute $\perp\!\!\!\perp_p$ and $\not\!\perp\!\!\!\perp_p$ for $\indGS$ and $\nindGS$ respectively, as the distribution $p$ is assumed to be Markov and faithful w.r.t. $\GS$. We will use that these two are interchangeable in the proof of Lemma \ref{Lem:TGES_for_S_C}.
\\
We remind the reader that for any $X,Y$ in $\G$ where $Y\in\ndeG_X$ we have by the Markov property that $X\indG Y|\paG_X$. By contraposition of the Markov property, we thus have
\begin{align} \label{eq:negated_markov}
    X\nindG Y|\paG_X \Rightarrow Y\in \deG_X
\end{align}
We define a \textit{path} as in \citet{hernan2020causal} between two nodes $X$ and $Y$ in a DAG $\G$ as a sequence of edges that connects $X$ and $Y$ such that no node is visited more than once.
\\
A path $\pi$ is \textit{blocked} given a set of nodes $C$ if and only if it contains a non-collider that is in $C$, or it contains a collider that is not in $C$ and the collider has no descendants in $C$ \citep{Pearl_2009}. A path is \textit{open} if it is not blocked.
\\
Let $\pi$ be a path, we denote the length of $\pi$ as $|\pi|$, which is equal to the number of nodes visited including the endpoints.
\\
\\
We introduce two lemmas that will help prove Lemma \ref{Lem:TGES_for_S_C}. 
\begin{lemma} \label{lem:Open_subpath}
    Let $\pi$ be an open path in $\G$ given nodes $\mathbf{M}$. Then any sub-path $\pi_s$ of $\pi$ is open in $\G$ given $\mathbf{M}$ if the endpoints of $\pi_s$ are non-colliders on $\pi$.
\end{lemma}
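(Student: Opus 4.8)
The plan is to argue directly from the definition of a blocked path, showing that any node $Z$ on the sub-path $\pi_s$ has the same collider/non-collider status on $\pi_s$ as it has on $\pi$, so that the blocking condition transfers verbatim. First I would fix an open path $\pi$ given $\mathbf{M}$ and a sub-path $\pi_s$ whose endpoints are non-colliders on $\pi$. For any interior node $Z$ of $\pi_s$, the two edges incident to $Z$ along $\pi_s$ are exactly the two edges incident to $Z$ along $\pi$ (since $\pi_s$ is a contiguous segment of $\pi$), so $Z$ is a collider on $\pi_s$ if and only if it is a collider on $\pi$. This is the key structural observation; everything else is bookkeeping.

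Next I would check the two endpoints of $\pi_s$ separately, since these are the only places where the local edge configuration can differ between $\pi_s$ and $\pi$. By hypothesis each endpoint of $\pi_s$ is a non-collider on $\pi$. An endpoint of $\pi_s$ is of course not an interior node of $\pi_s$, so it imposes no collider/non-collider blocking requirement \emph{as an interior node of} $\pi_s$; but I still need it not to block $\pi_s$ in the role it plays. Here I would invoke that the endpoints of $\pi_s$ are non-colliders on $\pi$, hence — because $\pi$ is open given $\mathbf{M}$ — they are not in $\mathbf{M}$ (a non-collider in the conditioning set blocks). Thus an endpoint of $\pi_s$ cannot block $\pi_s$ either: as an endpoint it is not a potential blocking node in the interior sense, and it is not in $\mathbf{M}$.

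Then I would assemble the argument: suppose for contradiction that $\pi_s$ is blocked given $\mathbf{M}$. Then either (a) some non-collider of $\pi_s$ lies in $\mathbf{M}$, or (b) some collider $Z$ of $\pi_s$ has $Z \notin \mathbf{M}$ and no descendant of $Z$ in $\mathbf{M}$. In case (a), the offending non-collider must be an interior node of $\pi_s$ (endpoints are excluded from this clause, and moreover we showed the endpoints are not in $\mathbf{M}$), hence it is also a non-collider of $\pi$ lying in $\mathbf{M}$, so $\pi$ is blocked given $\mathbf{M}$ — contradiction. In case (b), the collider $Z$ is an interior node of $\pi_s$, hence also a collider of $\pi$ with $Z \notin \mathbf{M}$ and no descendant in $\mathbf{M}$, so again $\pi$ is blocked — contradiction. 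Therefore $\pi_s$ is open given $\mathbf{M}$.

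I expect the only subtle point to be the treatment of the endpoints of $\pi_s$: one has to notice that the collider status at an interior node is inherited because the incident edges are shared, and that the hypothesis ``endpoints of $\pi_s$ are non-colliders on $\pi$'' is exactly what rules out those endpoints being the source of a block in clause (a) (via the fact that openness of $\pi$ forces them out of $\mathbf{M}$). The interior-node inheritance is immediate; the endpoint case is where the stated hypothesis is actually used.
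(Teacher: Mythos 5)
Your proposal is correct and follows essentially the same route as the paper's proof: verify directly from the definition of blocking that every non-collider on the sub-path is outside $\mathbf{M}$ and every collider on it is opened by $\mathbf{M}$, using that interior nodes of $\pi_s$ keep their collider status from $\pi$ and that the endpoint hypothesis (together with openness of $\pi$) disposes of the endpoints. Your write-up is simply a more explicit, contradiction-phrased version of the paper's brief argument.
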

\begin{proof}
    All non-colliders on $\pi$ are not contained in $\mathbf{M}$, otherwise $\pi$ would be blocked given $\mathbf{M}$. Let $A$ and $B$ be non-colliders on $\pi$ and $\pi_s$ be a sub-path of $\pi$ such that $A$ and $B$ are the end points of $\pi_s$. 
    \\
    All colliders on $\pi$ are opened by $\mathbf{M}$, hence all colliders on $\pi_s$ are opened by $\mathbf{M}$. Therefore, $\pi_s$ is not blocked given $\mathbf{M}$ and is hence open.
\end{proof}
\begin{lemma} \label{lem:nde_valid_dag}
    Let $\G$ be a DAG and $X$ a node in $\G$. Let $Y\in \ndeG_X$. Then the graph $\G'$ that results from adding the edge $\left\{Y\rightarrow X\right\}$ to $\G$ is also a DAG.
\end{lemma}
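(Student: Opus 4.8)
The plan is to reduce the claim to the standard fact that a directed graph is a DAG exactly when it contains no directed cycle. Observe first that $\G'$ is obtained from $\G$ by adding a single \emph{directed} edge, so $\G'$ is again a directed graph (it has no undirected edges to worry about); hence it suffices to show $\G'$ is acyclic. Before that I would dispose of two degenerate cases: by the convention adopted in the paper that $X \in \deG_X$, the hypothesis $Y \in \ndeG_X$ forces $Y \neq X$, so $\{Y \to X\}$ is not a self-loop; and if $\{Y \to X\}$ is already an edge of $\G$ then $\G' = \G$ and there is nothing to prove, so I may assume the edge is genuinely new.

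Next I would argue by contradiction: suppose $\G'$ contains a directed cycle. Since $\G$ is itself acyclic, any directed cycle of $\G'$ must use the newly added edge $Y \to X$. Deleting that edge from the cycle leaves a directed walk from $X$ to $Y$ consisting only of edges of $\G$, which contains a directed path $X \to \cdots \to Y$ in $\G$.

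A directed path from $X$ to $Y$ in $\G$ means, by definition of descendants, that $Y \in \deG_X$, contradicting the hypothesis $Y \in \ndeG_X$. Therefore $\G'$ has no directed cycle and is a DAG, completing the argument.

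Honestly, I do not expect a genuine obstacle here: the statement is an immediate consequence of the definition of non-descendant together with the cycle characterization of DAGs. The only points needing a moment's attention are the bookkeeping on the degenerate cases ($Y = X$ or the edge already present), both handled by the stated convention $X \in \deG_X$, and the elementary observation that a cycle created by adding one edge must pass through that edge.
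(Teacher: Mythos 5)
Your proposal is correct and follows essentially the same argument as the paper: a cycle created by adding $\left\{Y \rightarrow X\right\}$ would require a directed path from $X$ to $Y$ in $\G$, giving $Y \in \deG_X$ and contradicting $Y \in \ndeG_X$. Your extra bookkeeping on the degenerate cases is fine but not needed beyond what the paper already does.
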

\begin{proof}
We are adding a directed edge, hence all edges in $\G'$ will have a direction. We need only to prove that adding the edge does not introduce a cycle. For there to be a cycle introduced by adding the edge $\left\{Y\rightarrow X\right\}$ in $\G$, there must be a directed path from $X$ to $Y$. But then we have that $Y \in\deG_X$ which is a contradiction, hence we do not introduce any cycles by adding the edge.
\end{proof}
Now we restate Lemma \ref{Lem:TGES_for_S_C} and then provide the proof. 
\begin{customlem}{\ref{Lem:TGES_for_S_C}}
(TGES forward phase)\\
Let $\GS$ be the DAG from which the data $\D$ was generated, $n$ be the number of observations in $\D$, and $\K$ the tiered background knowledge.
Let $\EK$ be the restricted equivalence class that is a result of a forward phase in TGES on $\D$ and $\K$ commencing with the empty graph. Then for $n \rightarrow \infty$ we have that all $\G\in \EK$ encode $\K$ and $\GS \leq \G$ almost surely.
\end{customlem} 
\begin{proof}
Each step in the TGES forward phase is ensured by Lemma \ref{lem:TGESStepResult} to result in a restricted equivalence class that encodes $\K$. 
\\
\\
We will make a proof by contradiction, where we assume that TGES forward phase results in a $\EK$ where $\exists\G\in \EK$ s.t. $\GS\not\leq \G$. We will prove that if this holds, then there exists a single edge addition to $\G$ that results in a DAG with a higher score. This contradicts forward TGES running until no further improvement of the score.
\\
\\
For $\G\in\EK$ assume $\GS\not\leq \G$. Then there exists some node $X$ and a node $Y\in\ndeG_X$ such that
\begin{align*}
    X \nindGS Y | \paG_X
\end{align*}
If $\tau(Y) \leq \tau(X)$, then adding the edge $\left\{Y \rightarrow X\right\}$ will by local $\K$-consistency ($(i)$ in Def. \ref{def:LocConsTier}) improve the score, which is a contradiction to TGES adding edges till no further increase in score. Since $Y\in\ndeG_X$ we are ensured by Lemma \ref{lem:nde_valid_dag} that adding the edge to $\G$ results in a DAG, which completes the proof for this case.
\\
\\
Instead, assume $\tau(X)<\tau(Y)$. We will split into four mutually exclusive cases of the d-separation statements of $X$ and $Y$ given $\paG_Y$ in $\G$ and $\GS$. We illustrate the cases by the Table \ref{tab:lemma_cases_ABCD} below:
\\
\begin{table}[H]
    \centering
    \begin{tabular}{c|cc}
     &$ Y \nindGS X | \paG_Y$ & $Y \indGS X | \paG_Y$ \\  
     \hline
    $Y \nindG X | \paG_Y$ & \textbf{Case A} & \textbf{Case B} \\
    $Y \indG X | \paG_Y$ & \textbf{Case C} & \textbf{Case D}
    \end{tabular}
    \caption{Cases for proof of Lemma \ref{Lem:TGES_for_S_C} when $\tau(X) < \tau(Y)$.}
    \label{tab:lemma_cases_ABCD}
\end{table}
\textbf{Case A and Case B}:
\\
In both cases, we have that $Y \nindG X | \paG_Y$ which by the contrapositioned Markov property (\ref{eq:negated_markov}) means that $X \in \deG_Y$. This implies that $\tau(X)\geq \tau(Y)$ which contradicts the assumption that $\tau(X) < \tau(Y)$. Hence, Case A and Case B can not occur.
\\
\\
\textbf{Case C}:
\\
In this case $Y \indG X | \paG_Y$ and $Y \nindGS X | \paG_Y$. We have by $\tau(X)<\tau(Y)$ that $X \in \ndeG_Y$, and by Lemma \ref{lem:nde_valid_dag} adding the edge $\left\{X\rightarrow Y\right\}$ to $\G$ will result in a DAG.
Since $Y \nindGS X | \paG_Y$, we have by local $\K$-consistency ($(i)$ in Def. \ref{def:LocConsTier}) that adding the edge $\left\{X\rightarrow Y\right\}$ will improve the score, which is a contradiction to TGES adding edges till no further increase in score.
\\
\\
\textbf{Case D}:
\\
First, let us summarize \textbf{Case D}:
\begin{align}
    Y \in \ndeG_X \label{as:Y_NDX}
    \\
    X \nindGS Y | \paG_X \label{as:X_nindGS_Y_paX}
    \\
    \tau(X)<\tau(Y) \label{as:tau_X_Y}
    \\
    Y \indG X | \paG_Y \label{as:Y_indG_X_paY}
    \\
    Y \indGS X | \paG_Y \label{as:Y_indGS_X_paY}
\end{align}
By $X \nindGS Y | \paG_X$ (\ref{as:X_nindGS_Y_paX}) there exist an open path $\pi$ in $\GS$ from $X$ to $Y$ given $\paG_X$. We write the path as a sequence of nodes, ordered from $X$ to $Y$, such that $\pi = \{X, ... , Y\}$.
\\
By $Y \indGS X | \paG_Y$ (\ref{as:Y_indGS_X_paY}) the path $\pi$ in $\GS$ is not open given $\paG_Y$. The path $\pi$ can be blocked by $\paG_Y$ in two non-exclusive ways:
\begin{enumerate}
    \item (Non-collider closed by $\paG_Y$) There exist a non-collider $V \in \pi$ where $V\in\paG_Y$.
    \item (Collider not opened by $\paG_Y$) There exist a collider $C \in \pi$ where for $\forall D \in \deGS_C \text{ we have that } D \notin \paG_Y$.
\end{enumerate}
Now we split into these two sub-cases. Starting from $Y$ and following the path $\pi$ backwards towards $X$, the \textbf{first} blockade is either:
\begin{enumerate}
    \item[]\textbf{Case D.1.} A non-collider $V \in \pi$ where $V\in\paG_Y$.
    \item[]\textbf{Case D.2.} A collider $C \in \pi$ where for $\forall D \in \deGS_C \text{ we have that } D \notin \paG_Y$. 
\end{enumerate}
These two cases are mutually exclusive since we are only looking at the first blockade on $\pi$ starting from $Y$ and going towards $X$. The two cases cover all scenarios since there are no other way a path can be blocked.
\\
\\
\textbf{Case D.1.}
\\
We assume the first blockade on the path $\pi$ starting from $Y$ and following the path backward towards $X$ is a non-collider $V \in \pi$ where $V\in\paG_Y$. Then the following hold:
\begin{align}
    V\in \ndeG_X \label{D1:V_ndeX}
    \\
    X \nindGS V |\paG_X \label{D1:X_nindGS_V_pa_X}
    \\
    X \notin \paG_V  \label{D1:X_n_paV}
\end{align}
We prove each statement:
\\
\\
(\ref{D1:V_ndeX}): We see that $V\in \ndeG_X$ (\ref{D1:V_ndeX}) since $V\in\paG_Y$ and $\paG_Y \subseteq \ndeG_X$. If $\paG_Y \not\subseteq \ndeG_X$ then there exist a $Z\in\paG_Y$ such that $Z\in \deG_X$, but then $Y\in\deG_X$ which contradicts $Y\in\ndeG_X$ (\ref{as:Y_NDX}).
\\
\\
(\ref{D1:X_nindGS_V_pa_X}): $X$ and $V$ are both non-colliders on $\pi$, hence we have by Lemma \ref{lem:Open_subpath} that the sub-path $\pi_V = \{X,...,V\}$ of $\pi$ is an open path in $\GS$ given $\paG_X$. 
Since there is an open path $\pi_V$ in $\GS$ given $\paG_X$ 
we have that $X \nindGS V |\paG_X$ (\ref{D1:X_nindGS_V_pa_X}).
\\
\\
(\ref{D1:X_n_paV}): $X \notin \paG_V$ (\ref{D1:X_n_paV}) by the following reasoning. If it were that $X \in \paG_V$, and we have that $V \in \paG_Y$, then it would be that $Y \in \deG_X$, which contradicts $Y \in \ndeG_X$ (\ref{as:Y_NDX}).
\\
\\
Now we consider the two distinct sub-cases:
    \begin{enumerate}
        \item[]\textbf{Case D.1.1.} $\tau(V)\leq \tau(X)$
        \item[]\textbf{Case D.1.2.} $\tau(V) > \tau(X)$
    \end{enumerate}
\textbf{Case D.1.1.}
\\
As $V \in \ndeG_X$ (\ref{D1:V_ndeX}), we get from Lemma \ref{lem:nde_valid_dag} that adding the edge $\left\{V \rightarrow X\right\}$ will result in a DAG.
Since $\tau(V)\leq \tau(X)$  and $X\nindGS V|\paG_X$ (\ref{D1:X_nindGS_V_pa_X}), we have by local $\K$-consistency ($(i)$ in Def. \ref{def:LocConsTier}) that adding the edge $\left\{V \rightarrow X\right\}$ will improve the score, which is a contradiction to TGES adding edges till no further increase in score.
\\
\\
\textbf{Case D.1.2.}
\\
First we see that since $\tau(V) > \tau(X)$, it must be that
\begin{align}
    X \in \ndeG_V \label{D12:X_ndeV}
\end{align}
\\
We split in two further sub-cases: 
\begin{enumerate}
    \item[]\textbf{Case D.1.2.1.} $V \nindGS X | \paG_V$
    \item[]\textbf{Case D.1.2.2.} $V \indGS X | \paG_V$
\end{enumerate}
\textbf{Case D.1.2.1.}
\\
By $X \in \ndeG_V$ (\ref{D12:X_ndeV}) and Lemma $\ref{lem:nde_valid_dag}$ adding the edge $\left\{X \rightarrow V\right\}$ to $\G$ results in a DAG. Since $\tau(V) > \tau(X)$ and $V \nindGS X | \paG_V$, we have by local $\K$-consistency ($(i)$ in Def. \ref{def:LocConsTier}) that adding the edge $\left\{X \rightarrow V\right\}$ will improve the score, which is a contradiction to TGES adding edges till no further increase in score.
\\
\\
\textbf{Case D.1.2.2.}
\\
Now note that we are in the exact same scenario as at the start of \textbf{Case D}. 
Therefore, set $Y:=V$ and fix the path $\pi := \pi_V$ and go to the start of \textbf{Case D}. 
\\
\\
Now we show that if we continue setting $Y:=V$, we at some point end up in another case than \textbf{Case D.1.2.2.}. 
\\ 
Since $V\in \paG_Y$ we know that $V\neq Y$, hence $|\pi_V|<|\pi|$.
\\
Since $\pi$ is finite and $|\pi_V|<|\pi|$, we just need to show that for a $|\pi_V|$ smaller than a specific integer, we can must end up in another case than \textbf{Case D.1.2.2.}. We show this for $|\pi_V|<3$. 
\\
Assume we are in \textbf{Case D.1.2.} and $|\pi_V|<3$. A path must be at least of length $2$, hence $|\pi_V|=2$. Thus, the path in $\GS$ consists of only one edge between $X$ and $V$, and since we assumed $\tau(X)<\tau(Y)$ in this case, it must be the edge $\left\{X \rightarrow V\right\}$ in $\GS$. And since $X \notin \paG_V$ (\ref{D1:X_n_paV}) we have that $V \nindGS X|\paG_V$, which mean we are in \textbf{Case 1.2.1.}. 
\\
\\
\textbf{Case D.2.}
\\
This is the case where starting from $Y$ and following the path $\pi$ backward towards $X$, the first blockade is a collider $C \in \pi$ where for $\forall D \in \deGS_C \text{ we have that } D \notin \paG_Y$. In particular, we know that $C\notin \paG_Y$, since $C\in \deGS_C$.
\\\\
Since $\pi$ is an open path given $\paG_X$ and $C$ is a collider on the path, there must exist some $D\in \deGS_C$ such that $D\in\paG_X$, otherwise $\pi$ is blocked given $\paG_X$.
\\
Since $D\in \deGS_C$ we know $\tau(C)\leq\tau(D)$, and since $D\in\paG_X$ we know $\tau(D)\leq\tau(X)$. Lastly, we know that $\tau(X)<\tau(Y)$ (\ref{as:tau_X_Y}), and hence it follows that $\tau(C)<\tau(Y)$. 
\\
\\
Now let $K$ be the node to the right in the collider structure of $C$, such that $\rightarrow C\leftarrow K$ and $\pi = \{X,...,C,K,...,Y\}$. 
Then $K \in \paGS_C$, hence $\tau(K)\leq \tau(C)$, and in particular, $\tau(K)<\tau(Y)$. Since $\tau(K)<\tau(Y)$ we have that $K\in \ndeG_Y$. 
\\
\\
It must be that $K$ is a non-collider on the path $\pi$, since it is next to a collider $C$. Starting from $Y$, $K$ came before $C$, and because we are not in \textbf{Case D.1.}, we know that $\K\notin \paG_Y$. 
\\
Since $C$ was the first blockade given $\paG_Y$ on $\pi$ starting from $Y$ in $\GS$, we know that $\pi_K=\{K,...,Y\}$ is an open path in $\GS$ given $\paG_Y$. Since there is an open path $\pi_K$ in $\GS$ given $\paG_Y$ we have that $Y \nindGS K|\paG_Y$.
\\
Since $K\in \ndeG_Y$, we have by Lemma \ref{lem:nde_valid_dag} that adding the edge $\left\{K \rightarrow Y\right\}$ results in a DAG.
\\
Since $\tau(K)<\tau(Y)$ and $Y \nindGS K|\paG_Y$ we have by local $\K$-consistency ($(i)$ in Def. \ref{def:LocConsTier}) that adding the edge $\left\{K \rightarrow Y\right\}$ will improve the score, which is a contradiction to TGES adding edges till no further increase in score.
\end{proof}

\subsubsection{Proof of Lemma \ref{Lem:TGES_back_S_C}: TGES backward phase} 
This subsection introduces everything needed to prove Lemma \ref{Lem:TGES_back_S_C} and states the proof of it. 
\\\\
First, we introduce some concepts and notations from \citet{Chickering2002}. 
We will say that an edge $\left\{X \rightarrow Y\right\}$ in $\G$ is \textit{covered} if $\text{Pa}^\G_Y = \text{Pa}^\G_X \cup X$.
\\ Covered edges have the following property as described in \citet{Chickering2002}:
Let $\G$ be any DAG, and let $\G'$ be the result of reversing the edge $\left\{X \rightarrow Y\right\}$ in $\G$. Then $\G'$ is a DAG that is equivalent to $\G$ if and only if $\left\{X \rightarrow Y\right\}$ is covered in $\G$.
\\
For any subset $\mathbf{A}$ of the nodes 
in $\G$, we say that a node $A \in \mathbf{A}$ is \textit{maximal} if there is no other node 
$A' \in \mathbf{A}$ such that $A'$ is an ancestor of $A$ in $\G$.
\\
We will denote a node as a \textit{sink node} if all nodes adjacent to it are parents of the node.
\\
\\
Next, we introduce the following theorem from \citet{Chickering2002} which we will extend for tiered background knowledge.
\begin{theorem} \label{thm4chick}
    (Theorem 4 in \citep{Chickering2002})
    \\
    Let $\G$ and $\HH$ be any pair of DAGs such that $\G\leq\HH$. Let $r$ be the number of edges in $\HH$ that have opposite orientations in $\G$, and let $m$ be the number of edges in $\HH$ that do not exist in either orientation in $\G$. Then, there exists a sequence of at most $r+2m$ edge reversals and additions in $\G$ with the following properties:
\begin{enumerate}
    \item Each edge reversed is a covered edge 
    \item After each reversal and addition $\G$ is a DAG and $\G\leq\HH$ 
    \item After all reversals and additions $\G = \HH$
\end{enumerate}
\end{theorem}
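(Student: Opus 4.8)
The plan is to prove the statement by induction on the integer $r+2m$, which quantifies the total discrepancy between $\G$ and $\HH$.

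\emph{Base case} ($r+2m=0$). Every edge of $\HH$ then occurs in $\G$ with the same orientation, so the skeleton of $\HH$ is contained in that of $\G$. Conversely, $\G\leq\HH$ forces the skeleton of $\G$ to lie inside that of $\HH$: if $X,Y$ were adjacent in $\G$ but not in $\HH$, some set $\mathbf{S}$ would d-separate them in $\HH$, hence, by $\G\leq\HH$, in $\G$ as well, contradicting their adjacency in $\G$. Thus $\G$ and $\HH$ have the same skeleton and the same orientations, i.e. $\G=\HH$, and the empty sequence works.

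\emph{Inductive step} ($r+2m>0$). Then $\G\neq\HH$, so some node $Y$ has $\text{Pa}^{\G}_Y\neq\text{Pa}^{\HH}_Y$; choose such a ``mismatched'' $Y$ that is maximal relative to $\HH$, in the sense defined above. The heart of the argument is to show that, possibly after one covered-edge reversal in $\G$ --- which by the covered-edge characterization stated above yields a DAG equivalent to $\G$, hence still $\leq\HH$ --- a single legal move toward $\HH$ is available, namely either:
\begin{enumerate}
    \item[(a)] some $X\in\text{Pa}^{\HH}_Y\setminus\text{Pa}^{\G}_Y$ can be inserted as $X\rightarrow Y$ into $\G$ without creating a directed cycle and without destroying $\G\leq\HH$; or
    \item[(b)] some $X\in\text{Pa}^{\G}_Y\setminus\text{Pa}^{\HH}_Y$ lies on a covered edge $X\rightarrow Y$ of $\G$, whose reversal is therefore legal and brings that edge into agreement with $\HH$.
\end{enumerate}
For the acyclicity part of (a) one shows that $\G$ cannot contain a directed path from $Y$ to $X$: such a path, combined with $\G\leq\HH$ (every d-separation of $\HH$ is one of $\G$) and the maximality of $Y$ in $\HH$, yields a contradiction. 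For the ``$\G\leq\HH$ is preserved'' part one checks that inserting $X\rightarrow Y$ destroys only d-separation statements that already fail in $\HH$. In either case $r+2m$ strictly decreases: an addition reduces $m$ by one; a correcting reversal reduces $r$ by one; and an auxiliary covered reversal needed to enable an addition may raise $r$ by one but is immediately followed by an addition that lowers $m$ by one, so the pair still lowers $r+2m$. Applying the induction hypothesis to the resulting DAG $\G'$ (still a DAG, still $\leq\HH$) and prepending the move(s) just made produces a sequence of at most $r+2m$ operations, each a covered reversal or an addition, along which $\G$ stays a DAG with $\G\leq\HH$, ending at $\G=\HH$.

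The main obstacle is exactly this existence claim in the inductive step --- that a progress-making legal move always exists, and in particular that the chosen addition creates no cycle and preserves $\G\leq\HH$. This is the delicate combinatorial core, resting on three ingredients: the notion of a covered edge together with the fact (above) that reversing one preserves the Markov equivalence class; the maximality of $Y$ relative to $\HH$; and the reading of $\G\leq\HH$ as containment of d-separation statements. Everything else --- the base case, the bookkeeping of $r$ and $m$, and stitching the induction together --- is routine.
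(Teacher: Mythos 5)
This statement is Theorem 4 of \citet{Chickering2002}; the paper does not reprove it but invokes Chickering's constructive proof via Algorithm \ref{alg:apply} (Apply-Edge-Operation), in which every single call performs exactly one modification that provably makes progress. Measured against that, your proposal has two genuine gaps. First, the entire content of the theorem is the existence claim in your inductive step --- that after choosing a suitable sink/maximal node $Y$ there is always a legal covered reversal or a cycle-free, $\leq\HH$-preserving addition --- and you explicitly defer it (``the delicate combinatorial core'') rather than prove it. Chickering's argument is precisely the case analysis establishing this claim (it is the content of cases 2--5 of Algorithm \ref{alg:apply}, using Proposition \ref{prop:chick27} to show every added edge is between nodes adjacent in $\HH$, and the sink property of $Y$ in $\HH$ to control orientations), so what you have is a strategy outline, not a proof.

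Second, your bookkeeping does not deliver the stated bound of $r+2m$ operations. You allow an ``auxiliary covered reversal needed to enable an addition,'' which may raise $r$ by one and is then followed by an addition lowering $m$ by one: that pair costs two operations while lowering $r+2m$ by only one, so your induction yields a total of up to $2+(r+2m-1)=r+2m+1$ operations (and iterating, roughly $2(r+2m)$), exceeding the bound. In Chickering's construction there is no auxiliary reversal: the only reversals ever performed are of covered edges $Y\rightarrow Z$ where $Y$ is a sink of $\HH$, so the reversed edge is oriented oppositely in $\HH$ and the reversal itself decreases $r$ by one, while each addition decreases $m$ by one and increases $r$ by at most one. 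Hence every single operation decreases $r+2m$ by at least one, which is exactly what gives the bound. To repair your argument you would need to prove that a progress-making move always exists \emph{without} a preparatory reversal (or use a finer potential), which is again the part of the proof you left out.
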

\citet{Chickering2002} shows that such a sequence of reversals and additions exist by showing each step in the sequence can be produced by Algorithm \ref{alg:apply} with two DAGs $\G$ and $\HH$ where $\G\leq\HH$. He shows that Algorithm \ref{alg:apply} will result in a DAG $\G'$ where: $\G'\leq\HH$ and $\G'$ is "closer" to $\HH$ than $\G$, in the sense that either the number of edges of opposite orientation has decreased by 1 and the number of different adjacencies remained the same or the number of different adjacencies has decreased by 1 and the number of edges of opposite orientation has maximally increased by one. 
\\
\\
We restate the algorithm from \citet{Chickering2002} and include "$\triangleright\#$" for references used in a later proof.
\begin{algorithm}[H]
\caption{Apply-Edge-Operation \citep{Chickering2002}}
\label{alg:apply}
\begin{algorithmic}  
\Statex \textbf{Input:} DAGs $\mathcal{G}, \HH$
\State $\mathcal{G}'\gets \mathcal{G}$
\While{ $\mathcal{G}$ and $\HH$ contain a node $Y$ that is a sink in both DAGs and for which $\text{Pa}^{\mathcal{G}}_Y = \text{Pa}^{\HH}_Y$}
\State Remove $Y$ and all edges with an endpoint at $Y$ from both DAGs. \Comment{1.}

\EndWhile
\State Let $Y$ be any sink node in $\HH$

\If{$Y$ has no children in $\mathcal{G}$}
\State Let $X$ be any parent of $Y$ in $\HH$ that is not a parent of $Y$ in $\mathcal{G}$. 
\State Add the edge $\left\{X \rightarrow Y\right\}$ to $\mathcal{G}'$ \Comment{2.}
\Else
\State Let $D \in \text{De}^{\mathcal{G}}_Y$ denote the (unique) maximal element from this set within $\HH$. 
\State Let $Z$ be any maximal child of $Y$ in $\mathcal{G}$ such that $D$ is a descendant of $Z$ in $\mathcal{G}$.
\If{$\left\{Y \rightarrow Z\right\}$ is covered in $\mathcal{G}$}
\State Reverse $\left\{Y \rightarrow Z\right\}$ in $\mathcal{G}'$ \Comment{3.}

\ElsIf{there exists a node $X$ that is a parent of $Y$ but not a parent of $Z$ in $\mathcal{G}$}
\State Add $\left\{X \rightarrow Z\right\}$ to $\mathcal{G}'$ \Comment{4.}
\Else
\State Let $X$ be any parent of $Z$ that is not a parent of $Y$ in $\G$. 
\State Add $\left\{X \rightarrow Y\right\}$ to $\mathcal{G}'$. \Comment{5.}
\EndIf
\EndIf
\State \textbf{Return} $\G'$
\end{algorithmic}
\end{algorithm}
We introduce a Proposition also from \citet{Chickering2002}.
\begin{proposition} \label{prop:chick27} \citep{Chickering2002}
Let $\G$ and $\HH$ be two DAGs such that $\G \leq \HH$. If there is an edge between $X$ and $Y$ in $\G$, then there is an edge between $X$ and $Y$ in $\HH$.    
\end{proposition}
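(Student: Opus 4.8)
The plan is to derive the statement directly from the definition of $\G \leq \HH$, combined with the elementary fact that adjacent nodes in a DAG admit no d-separating set. Recall that $\G \leq \HH$ says that every d-separation holding in $\HH$ also holds in $\G$; taking the contrapositive, every conditional dependence in $\G$ transfers to $\HH$, i.e.\ $\mathbf{A} \not\!\perp\!\!\!\perp_\G \mathbf{B} \mid \mathbf{C}$ implies $\mathbf{A} \not\!\perp\!\!\!\perp_{\HH} \mathbf{B} \mid \mathbf{C}$ for all disjoint $\mathbf{A},\mathbf{B},\mathbf{C} \subseteq \V$. This contrapositive is the workhorse of the argument.

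First I would observe that if $X$ and $Y$ are adjacent in $\G$, then the edge joining them is a path with no intermediate node, hence it is open relative to any set $\mathbf{C} \subseteq \V \setminus \{X,Y\}$ (there is no non-collider on it to block and no collider on it to open). Consequently $X \not\!\perp\!\!\!\perp_\G Y \mid \mathbf{C}$ for every such $\mathbf{C}$, and by the contrapositive of $\G \leq \HH$ we obtain $X \not\!\perp\!\!\!\perp_{\HH} Y \mid \mathbf{C}$ for every $\mathbf{C} \subseteq \V \setminus \{X,Y\}$; that is, $X$ and $Y$ have no d-separating set in $\HH$.

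Next I would finish by contradiction: suppose $X$ and $Y$ were non-adjacent in $\HH$. Since $\HH$ is a DAG it has a topological ordering; without loss of generality $X$ precedes $Y$, so $X$ is a non-descendant of $Y$ in $\HH$, and non-adjacency gives $X \notin \text{Pa}^{\HH}_Y$. The local directed Markov property then yields $Y \perp\!\!\!\perp_{\HH} X \mid \text{Pa}^{\HH}_Y$, contradicting the previous step applied with $\mathbf{C} = \text{Pa}^{\HH}_Y$; hence $X$ and $Y$ must be adjacent in $\HH$. The only point warranting any care is the invocation that two non-adjacent nodes of a DAG are always d-separated by the parent set of the later one under a topological order — this is the standard consequence of the local Markov property — so I expect no genuine obstacle: the proof is essentially a short unwinding of the definition of $\leq$.
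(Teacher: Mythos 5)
Your proof is correct. Note that the paper does not prove this proposition at all---it is imported directly from \citet{Chickering2002}---and your argument is exactly the standard one underlying that citation: a single-edge path has no intermediate node and so cannot be blocked, giving $X \not\!\perp\!\!\!\perp_\G Y \mid \mathbf{C}$ for every admissible $\mathbf{C}$, which the contrapositive of $\G \leq \HH$ transfers to $\HH$; and two non-adjacent nodes of a DAG are always d-separated by the parents of the later node in a topological order, so adjacency in $\HH$ follows. All steps, including the care taken that $\text{Pa}^{\HH}_Y$ excludes $X$ and $Y$, are sound, so there is no gap.
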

We show in the following theorem, that if $\G$ and $\HH$ encode $\K$ then there exists a sequence of edge addition and reversals where each operation results in a DAG which encodes $\K$. The theorem is an extension of Theorem \ref{thm4chick} and we will prove it using the same algorithm (Algorithm \ref{alg:apply}) to construct the sequence.
\begin{theorem} \label{TIERthm4chick}
    Let $\G$ and $\HH$ be any pair of DAGs such that $\G$ $\leq$ $\HH$\textbf{ and they both encode the tiered background knowledge $\K$}. Let r be the number of edges in $\HH$ that have opposite orientation in $\G$, and let m be the number of edges in $\HH$ that do not exist in either orientation in $\G$. Then, there exists a sequence of at most $r+2m$ edge reversals and additions in $\G$ with the following properties:
\begin{enumerate}
    \item Each edge reversed is a covered edge 
    \item After each reversal and addition $\G$ is a DAG \textbf{that encodes} $\K$ and $\G $ $\leq$ $ \HH$ 
    \item After all reversals and additions $\G = \HH$
\end{enumerate}
\end{theorem}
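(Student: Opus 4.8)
The plan is to reuse Chickering's construction verbatim: generate the sequence of operations by iterating Algorithm~\ref{alg:apply}, exactly as in the proof of Theorem~\ref{thm4chick}. Because $\G \le \HH$ holds by assumption, Theorem~\ref{thm4chick} already hands us a sequence of at most $r + 2m$ reversals and additions for which property~1 (every reversed edge is covered), property~3 ($\G = \HH$ at the end), and the ``DAG'' and ``$\G \le \HH$'' parts of property~2 all hold. The only genuinely new content is that every intermediate graph encodes $\K$, so the statement reduces to one inductive step: assuming the current DAG $\G$ encodes $\K$, show that the DAG $\G'$ returned by a single call of Algorithm~\ref{alg:apply} again encodes $\K$ (the base case is the hypothesis itself, a DAG $\G$ with $\G \le \HH$ that encodes $\K$).

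Since tiered background knowledge has $\mathcal{R} = \emptyset$, ``encodes $\K$'' forbids only directed edges $\{A \rightarrow B\}$ with $\tau(A) > \tau(B)$, so it suffices to check that the single edge operation performed by Algorithm~\ref{alg:apply} introduces no such edge; deleting an edge can never create a violation, so the node removals labelled 1 are harmless. There are four operations to inspect, labelled 2--5. For operation 2 the inserted edge $\{X \rightarrow Y\}$ is a parent--child edge of $\HH$, and $\HH$ encodes $\K$, so $\tau(X) \le \tau(Y)$. For operation 4 the inserted edge is $\{X \rightarrow Z\}$, while $\{X \rightarrow Y\}$ and $\{Y \rightarrow Z\}$ are edges of $\G$, which encodes $\K$; hence $\tau(X) \le \tau(Y) \le \tau(Z)$. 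Operations 3 and 5 are the only ones that use both graphs: there $Y$ is a sink in $\HH$ and $\{Y \rightarrow Z\}$ is an edge of $\G$, so by Proposition~\ref{prop:chick27} $Y$ and $Z$ are adjacent in $\HH$, and since a sink has no outgoing edge we get $\{Z \rightarrow Y\} \in \HH$ and thus $\tau(Z) \le \tau(Y)$. In operation 3 the reversed edge is exactly $\{Z \rightarrow Y\}$, so this settles it; in operation 5 the inserted edge is $\{X \rightarrow Y\}$ with $\{X \rightarrow Z\}$ an edge of $\G$, so $\tau(X) \le \tau(Z) \le \tau(Y)$. In every case no forbidden edge is created, hence $\G'$ encodes $\K$, and Theorem~\ref{TIERthm4chick} follows by induction on the length of the sequence.

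I expect the main obstacle to be bookkeeping rather than conceptual depth: Algorithm~\ref{alg:apply} manipulates reduced copies of $\G$ and $\HH$ obtained by repeatedly stripping matched sink nodes, so one must verify that the adjacency and sink facts invoked above --- ``$\{Y \rightarrow Z\}$ is an edge of $\G$'', ``$Y$ is a sink in $\HH$'', ``$\{X \rightarrow Z\}$ is an edge of $\G$'' --- transfer faithfully between the reduced graphs and the full graphs to which the operations are actually applied. This is routine: removing sink nodes leaves the parent sets of surviving nodes unchanged, and nodes are deleted from $\G$ and $\HH$ in lockstep, so a node present in one reduced graph is present in the other. It is exactly the kind of detail already controlled in Chickering's correctness argument for Algorithm~\ref{alg:apply}, and once it is in place the four-case check above completes the proof.
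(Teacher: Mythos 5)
Your proposal is correct, and its skeleton is the same as the paper's: invoke Theorem \ref{thm4chick} for properties 1, 3 and the ``DAG and $\G \le \HH$'' part of property 2, then show by induction that each single modification made by Algorithm \ref{alg:apply} (the five sites marked 1--5) introduces no edge forbidden by $\K$; your treatment of sites 1, 2 and 4 coincides with the paper's. Where you genuinely diverge is in the two crux cases, the covered-edge reversal (site 3) and the addition $\{X \rightarrow Y\}$ with $X \in \text{Pa}^{\G}_Z$ (site 5). The paper proves $\tau(Y) = \tau(Z)$ there by contradiction with a d-separation argument: it introduces the maximal descendant $D$ of $Y$, transfers $Y \perp\!\!\!\perp_{\HH} D \mid \text{Pa}^{\HH}_Y$ to $\G$ via $\G \le \HH$, and finds a node $V \in \text{Pa}^{\HH}_Y$ on the directed path from $Y$ through $Z$ to $D$, yielding $\tau(V) \le \tau(Y) < \tau(Z) \le \tau(V)$. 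You instead obtain $\tau(Z) \le \tau(Y)$ in one line: $\{Y \rightarrow Z\}$ is in $\G$ and $\G \le \HH$, so by Proposition \ref{prop:chick27} $Y$ and $Z$ are adjacent in $\HH$, and since $Y$ is a sink there the edge is $\{Z \rightarrow Y\}$, which encodes $\K$. This is valid --- in fact the paper uses exactly this inference inside case 3, but only to conclude $Y \neq D$ --- and it renders $D$, $V$ and the d-separation detour unnecessary, giving a shorter and arguably more transparent argument for both sites 3 and 5; the paper's longer route buys nothing additional, since it also relies on Proposition \ref{prop:chick27} and on the sink property of $Y$. The reduced-versus-full-graph bookkeeping you flag (sink-stripping removes only sinks, hence preserves parent sets of surviving nodes, the orientation of surviving edges, and the $\le$ relation, with nodes removed from both graphs in lockstep) is equally implicit in the paper's proof, so it is a shared reliance on Chickering's correctness argument rather than a gap in yours.
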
 
\begin{proof}
    We wish to show that by running Algorithm \ref{alg:apply} at most $r+2m$ times, we obtain a sequence with properties 1, 2 and 3. By Theorem \ref{thm4chick} we have that 1 and 3 are fulfilled, and that after each iteration of running Algorithm \ref{alg:apply} $\G\leq\HH$. We also have that $\G=\HH$ after, at most, $r+2m$ times. We thus only need to show that $\G$ encodes $\K$ after each reversal or addition performed by Algorithm \ref{alg:apply}.
    \\
    We remind the reader of the relationship between $\G$ and $\HH$, as this will be used in the proofs. All d-separation statements in $\HH$ also hold in $\G$, i.e. $A\perp\!\!\!\perp_\G B|C \Rightarrow A\perp\!\!\!\perp_\HH B|C$. Furthermore, we have that both DAGs encode $\K$, and hence the nodes have the same tiered ordering $\tau$ in both DAGs.
    \\
    \\
    The proof is divided into the 5 cases where a modification is made to the DAG $\G$ by Algorithm \ref{alg:apply}, and we need to show that this modification does not contradict $\K$. The 5 cases are denoted with a numbered comment, $\triangleright$\#, in the right-hand side of Algorithm \ref{alg:apply}.
    \\
    \\
    \Comment{1.} First, we realize that by removing a sink node and all incident edges from a DAG that encodes $\K$ we obtain a DAG that still encodes $\K$ as no contradiction to tiered background knowledge can occur by removing edges.
    \\\\
    \Comment{2.}  Here $X\in\text{Pa}^{\HH}_Y$ and it must be that $\tau(X)\leq \tau(Y)$ hence the edge $\left\{X \rightarrow Y\right\}$ encodes $\K$.
    \\\\
    \Comment{3.} Here $Y$ is a sink node in $\HH$, $D$ is maximal in $\HH$ in the set $\text{De}^{\mathcal{G}}_Y$, $D\in\text{De}^{\mathcal{G}}_Y$, $D\in\text{De}^{\mathcal{G}}_Z$ and $Z\in\text{Ch}^{\mathcal{G}}_Y$. 
    \\
    \\
    We have that $Z\neq Y$ by definition. We also have that $Y \neq D$. We see this by the following reasoning: We know that $Y$ has children in $\G$, otherwise, we would have terminated in case $\triangleright$ 1. Let $A\in\text{Ch}^{\mathcal{G}}_Y$. Then $A$ is trivially also a descendant of $Y$ in $\G$. By Proposition \ref{prop:chick27} there is an edge between $Y$ and $A$ in $\HH$, which since $Y$ is a sink node, must go into $Y$. Therefore, $A\in \text{An}_Y^\HH$ which means $Y$ is not maximal in $\HH$ in the set $\text{De}^{\mathcal{G}}_Y$. $D$ is chosen to be maximal which means $Y \neq D$.
    \\
    \\
    By $Z\in\text{Ch}^{\mathcal{G}}_Y$ we have that $\tau(Y)\leq \tau(Z)$. We wish to show that $\tau(Y) = \tau(Z)$ as this would mean that reversing the edge is not in contradiction with $\K$.
    \\
    We first assume that $\tau(Y) < \tau(Z)$. 
    \\
    \\
    Then, since $Y$ is a sink node in $\HH$, all nodes are non descendants of $Y$ in $\HH$, and in particular $D\in\text{NDe}^{\HH}_Y$. Therefore $Y\perp\!\!\!\perp_\HH D|\text{Pa}^{\HH}_Y$ holds and since $\G \leq \HH$ we have that $Y\perp\!\!\!\perp_\G D|\text{Pa}^{\HH}_Y$. We have that $D\notin \text{Pa}^\HH_Y$ as $D\in\text{De}^{\mathcal{G}}_Z$, which gives $\tau(Z)\leq\tau(D)$, and by assumption $\tau(Y)<\tau(Z)$, and an edge from $D$ to $Y$ would contradict $\K$.
    \\
    Since $Z\in\text{Ch}^{\mathcal{G}}_Y$ and $D\in\text{De}^{\mathcal{G}}_Z$ there must be a directed path in $\G$ from $Y$ to $D$, starting with a directed edge from $Y$ to $Z$.
    \\
    As $Y$ and $D$ are d-separated in $\G$ given $\text{Pa}^{\HH}_Y$, there exist a node $V \in \text{Pa}^{\HH}_Y$ on the directed path from $Y$ to $D$ through $Z$ in $\G$. $\tau(Z) \leq \tau(V)$ since $V$ is on the directed path in $\G$ which start with $Z$, but since $V\in \text{Pa}^{\HH}_Y$ we also have that $\tau(V)\leq \tau(Y)$. 
    \\
    We then have that $\tau(V)\leq \tau(Y)<\tau(Z)\leq \tau(V)$, but $\tau(V)<\tau(V)$ is a contradiction to a node only belonging to a single tier, and it hence must be that $\tau(Z) = \tau(Y)$, which implies that the edge $\left\{Y\rightarrow Z\right\}$ can be reversed while still encoding $\K$.
    \\
    \\
    \Comment{4.} Here $X\in\text{Pa}^{\mathcal{G}}_Y$ and $Z\in\text{Ch}^{\mathcal{G}}_Y$ it must be that $\tau(X)\leq \tau(Y) \leq \tau(Z)$. Hence the added edge $\left\{X \rightarrow Z\right\}$ encodes $\K$.
    \\\\
    \Comment{5.} Here the setup is the same as described in $\triangleright$ 2 in addition to $X\in\text{Pa}^{\mathcal{G}}_Z$.
    We wish to show that $\tau(X) \leq \tau(Y)$, and we assume for contradiction that $\tau(Y) < \tau(X)$. 
    \\
    Since $X\in\text{Pa}^{\mathcal{G}}_Z$, we have that $\tau(X)\leq \tau(Z)$, hence $\tau(Y)<\tau(Z)$ and we can then use same procedure as in $\triangleright$ 2 to obtain a contradiction. 
    \\
    It therefore must be that $\tau(X)\leq \tau(Y)$, and thus the edge $\left\{X \rightarrow Y\right\}$ encodes $\K$.
    \end{proof}
We include one more lemma needed for the proof of Lemma \ref{Lem:TGES_back_S_C}.
\begin{lemma}\label{lem:needed_for_back}
    Let $\G$ be a DAG and let $\G'$ be the DAG that is a result of either a covered edge reversal or an edge addition in $\G$. Then $\G\leq \G'$.
\end{lemma}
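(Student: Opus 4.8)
The plan is to unwind the definition of $\leq$. Since $\G\leq\G'$ means precisely that every independence statement holding in $\G'$ also holds in $\G$, it suffices to fix arbitrary disjoint node sets $\mathbf{A},\mathbf{B},\mathbf{C}$ and show that $\mathbf{A}\perp\!\!\!\perp_{\G'}\mathbf{B}\mid\mathbf{C}$ implies $\mathbf{A}\indG\mathbf{B}\mid\mathbf{C}$. I would then split into the two ways $\G'$ can arise from $\G$.

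If $\G'$ is obtained by reversing a covered edge $\{X\rightarrow Y\}$ of $\G$, I invoke the characterization of covered edges quoted above from \citet{Chickering2002}: reversing a covered edge yields a DAG that is Markov equivalent to $\G$. Markov equivalent DAGs encode exactly the same set of d-separation statements, so in fact $\G\approx\G'$ and in particular $\G\leq\G'$. This case is immediate.

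If $\G'$ is obtained by adding a directed edge $\{X\rightarrow Y\}$ to $\G$, I would argue by contraposition: assume $\mathbf{A}\not\!\perp\!\!\!\perp_{\G}\mathbf{B}\mid\mathbf{C}$ and produce an open path in $\G'$. Choose a path $\pi$ in $\G$ from some $a\in\mathbf{A}$ to some $b\in\mathbf{B}$ that is open given $\mathbf{C}$. Because the edge set of $\G$ is contained in that of $\G'$ and no edge changes orientation, $\pi$ is also a path in $\G'$, and for each internal node $V$ of $\pi$ the two $\pi$-edges incident to $V$ have the same orientation in $\G'$ as in $\G$; hence $V$ is a collider on $\pi$ in $\G'$ if and only if it is a collider on $\pi$ in $\G$. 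For a non-collider $V$ on $\pi$, being open given $\mathbf{C}$ depends only on whether $V\in\mathbf{C}$, which is unchanged. For a collider $V$ on $\pi$, openness given $\mathbf{C}$ requires $\deG_V\cap\mathbf{C}\neq\emptyset$ (in $\G$) resp.\ $\text{De}^{\G'}_V\cap\mathbf{C}\neq\emptyset$ (in $\G'$); since adding an edge can only enlarge descendant sets, $\deG_V\subseteq\text{De}^{\G'}_V$, so an open collider in $\G$ stays open in $\G'$. Therefore $\pi$ is open given $\mathbf{C}$ in $\G'$, giving $\mathbf{A}\not\!\perp\!\!\!\perp_{\G'}\mathbf{B}\mid\mathbf{C}$, which is the contrapositive of what we want.

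The step I would be most careful about is this last case: one must observe that the witnessing path $\pi$ is chosen inside $\G$ and therefore never traverses the newly added edge $\{X\rightarrow Y\}$, so adding that edge cannot turn an internal node of $\pi$ into a new collider along $\pi$ — the collider/non-collider pattern of $\pi$ is literally determined by the same two incident edges in both graphs. Once that point is nailed down, the rest is routine bookkeeping with the d-separation criterion and the monotonicity of descendant sets under edge addition.
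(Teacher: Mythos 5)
Your proof is correct and follows essentially the same route as the paper's: the covered-edge case is dispatched via Markov equivalence of covered-edge reversals, and the edge-addition case by noting that an open path in $\G$ remains open in $\G'$ (the paper states this in one sentence, while you spell out why the collider pattern on the witnessing path is unchanged and why descendant sets only grow). Your added detail on the path never traversing the new edge is a useful elaboration of the paper's terser argument, but it is the same proof.
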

\begin{proof}
    Assume the modification made was a covered edge reversal. Then $\G$ and $\G'$ are Markov equivalent and in particular $\G\leq \G'$.
    \\
    Instead, assume the modification was an edge addition. For $\G\not\leq \G'$, there must exist an open path in $\G$ which is blocked in $\G'$. Adding an edge only allows for there to exist a new blocked path, not to block an existing open path. Therefore $\G\leq \G'$.
\end{proof}
We restate Lemma \ref{Lem:TGES_back_S_C} and present a proof of it:
\begin{customlem}{\ref{Lem:TGES_back_S_C}}
(TGES backward phase)\\
Let $\GS$ be the DAG from which the data $\D$ was generated, $n$ be the number of observations in $\D$, and $\K$ the tiered background knowledge.
Let $\EK$ be the restricted equivalence class that is a result of a backward phase in TGES on $\D$ and $\K$ commencing with a restricted equivalence class where all graphs $\G$ encode $\K$ and $\GS \leq \G$. Then for $n \rightarrow \infty$ we have that all $\HH\in \EK$ encode $\K$ and $\GS \approx \HH$ almost surely. 
\end{customlem} 
\begin{proof}

    The TGES backward phase consists of steps provided in Algorithm \ref{alg:T_bac_ph} (illustrated in Figure \ref{fig:TGES}) where each step removes an edge from the tiered MPDAG. The phase continues to remove edges until there are no neighboring restricted equivalence classes with a greater score.
    \\
    First, we see that each step results in a restricted equivalence class which encodes $\K$ by Lemma \ref{lem:TGESStepResult}.
    \\
    Next, see that after each step in the phase, all DAGs $\HH$ in the restricted equivalence class have that $\GS\leq\HH$.
    Assume that this is not the case, which means there exists a DAG $\HH'$ in the restricted equivalence class such that $\GS \not\leq \HH'$. Then we have by $\K$-consistency ($(i)$ in Def. \ref{def:K_Cons}) that the score has decreased, which is a contradiction to the algorithm being greedy. 
    \\
    \\
    Let $\EK$ be the result of the backward phase in TGES. We need only to show that for all DAGs $\G \in \EK$ we have that $\G \leq \GS$.
    We will prove this by contradiction: Assume that $\exists\G'\in\EK$ s.t. $\G'\not\leq \GS$. 
    \\
    Let $\EKS$ be the restricted equivalence class which contains the true data-generating DAG $\GS$. Then $\G^*\in\EKS$ and $\G'\in\EK$. By $\G'\not\leq \GS$ there exists an independence statement in $\GS$ which is not in $\G'$, hence $\EKS \neq \EK$. 
    \\
    Since $\G^* \leq \G'$ and they both encode $\K$ we have by Theorem \ref{TIERthm4chick} that there exists a sequence of modifications that can turn $\G^*$ into $\G'$, such that after each modification of $\GS$, the modified DAG $\Tilde{\GS}$ will encode $\K$ and $\Tilde{\GS}\leq \G'$.
    Because of the relation $\EKS\neq \EK$, there must at least be one edge addition in the sequence, if there were only edge reversals, the restricted equivalence classes would be identical.
    Let $\Tilde{\GS}$ be the graph that precedes the very last edge addition in the sequence. Then $\Tilde{\GS}$ is contained in a neighboring restricted equivalence class (by definition) and we have by Lemma \ref{lem:needed_for_back} that $\GS\leq\Tilde{\GS}$. By $\K$-consistency ($(ii)$ in Def. \ref{def:K_Cons}) of the scoring criterion we have that $S(\Tilde{\GS}, \D,\K) > S(\G', \D,\K)$, which contradicts that the backward phase iterates until there is no larger score among the neighboring restricted equivalence classes. Hence the resulting restricted equivalence class must be $\EKS$.
\end{proof}
\subsubsection{Proof of Lemma \ref{Lem:TGES_turn_S_C}: TGES turning phase}
This subsection introduces everything needed to prove Lemma \ref{Lem:TGES_turn_S_C} and states the proof of it. 
\\
\\
We start by introducing a proposition, which uses Theorem \ref{TIERthm4chick}.
\begin{proposition}
    \label{prop:PerfMapOptiTBIC} 
    Let $\EKS$ be the restricted equivalence class that contains the true data-generating DAG from which the data $\D$ was generated, $n$ be the number of observations in $\D$, and $\K$ tiered background knowledge. 
    Then for a $\K$-score equivalent and $\K$-consistent scoring criterion $S$ we have that for $n\rightarrow \infty$,
    \begin{align}
        S(\EKS,\D)>S(\EK,\D)\text{ for all }\EKS \neq \EK \text{ almost surely.}
    \end{align}
\end{proposition}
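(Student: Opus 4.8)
The plan is to fix an arbitrary restricted equivalence class $\EK$ with $\EK \neq \EKS$ (and, as in the statement, encoding $\K$), choose a representative DAG $\G \in \EK$ alongside the true DAG $\GS \in \EKS$, and first use $\K$-score equivalence (Definition~\ref{def:K_ScoreEq}) to replace the claim $S(\EKS,\D) > S(\EK,\D)$ by the equivalent DAG-level claim $S(\GS,\D) > S(\G,\D)$. I would then split into the two exhaustive cases $\GS \not\leq \G$ and $\GS \leq \G$ and derive the inequality from $\K$-consistency (Definition~\ref{def:K_Cons}) in each.

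In the case $\GS \not\leq \G$, the DAG $\G$ encodes an independence statement that does not hold in $\GS$. Since $\GS$ encodes $\K$ (assumed throughout), $\G$ encodes $\K$ (it lies in a restricted equivalence class), and trivially $\GS \leq \GS$, condition~(i) of $\K$-consistency applies with $\G':=\GS$ and gives $S(\GS,\D) > S(\G,\D)$ almost surely as $n \to \infty$.

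In the case $\GS \leq \G$, I first note that $\G$ and $\GS$ cannot be Markov equivalent: if $\G \approx \GS$, then, since both encode $\K$, they would lie in the same restricted equivalence class, contradicting $\EK \neq \EKS$. By Theorem~\ref{TIERthm4chick} there is a sequence of covered-edge reversals and edge additions transforming $\GS$ into $\G$ with every intermediate DAG encoding $\K$. Covered-edge reversals do not change the number of edges while each addition increases it by one; since the two endpoints are not Markov equivalent the sequence must contain at least one addition (equivalently, by Proposition~\ref{prop:chick27} the adjacencies of $\GS$ are contained in those of $\G$, and equal adjacency sets together with $\GS \leq \G$ would force Markov equivalence). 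Hence $\G$ has strictly more edges than $\GS$, so condition~(ii) of $\K$-consistency applies with $\G':=\GS$ (using $\GS \leq \GS$, $\GS \leq \G$, both encoding $\K$, and $\GS$ having fewer edges) and again gives $S(\GS,\D) > S(\G,\D)$ almost surely as $n \to \infty$. Combining the two cases, $S(\EKS,\D) > S(\EK,\D)$ for all sufficiently large $n$ almost surely; since there are only finitely many restricted equivalence classes over a fixed node set, intersecting these finitely many almost-sure events yields the conclusion simultaneously for all $\EK \neq \EKS$.

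I expect the main obstacle to be the case $\GS \leq \G$, and specifically the step establishing that $\G$ has strictly more edges than $\GS$. This is exactly where Theorem~\ref{TIERthm4chick} (or, at the skeleton level, Chickering's Theorem~\ref{thm4chick} together with Proposition~\ref{prop:chick27}) is needed, and one has to be careful that "encoding $\K$ and Markov equivalent" collapses to "belonging to the same restricted equivalence class" — this is what turns the hypothesis $\EK \neq \EKS$ into the presence of a genuine edge addition in the transforming sequence, which in turn is what lets $\K$-consistency~(ii) fire.
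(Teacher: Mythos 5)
Your proof is correct, and it reorganizes the argument compared with the paper. The paper proceeds by contradiction: assuming $S(\EKS,\D)\leq S(\EK,\D)$, it deduces $\GS\leq\G$ from $\K$-consistency $(i)$, then tracks the score along the entire Chickering-type transformation sequence of Theorem~\ref{TIERthm4chick} --- covered reversals leave the score unchanged by $\K$-score equivalence, each addition strictly changes it by Lemma~\ref{lem:needed_for_back} and $\K$-consistency $(ii)$ --- so no additions can occur, contradicting $\EKS\neq\EK$. You instead argue directly, splitting on $\GS\not\leq\G$ versus $\GS\leq\G$, and use the transformation sequence only for edge counting (at least one addition since the endpoints are not Markov equivalent, as both encode $\K$ and lie in different restricted classes), after which a single application of $\K$-consistency $(ii)$ to the pair $(\GS,\G)$ finishes. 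This is slightly more economical: since you only apply $\K$-consistency to the two endpoints, which encode $\K$ by assumption, the plain Theorem~\ref{thm4chick} would already suffice for your edge count, whereas the paper genuinely needs the tiered extension so that the intermediate DAGs encode $\K$; you also make explicit the intersection over the finitely many classes, which the paper leaves implicit. One caveat: the paper's proof additionally handles the case in which $\EK$ is not in agreement with $\K$ (all its DAGs contradict $\K$), dispatching it in one line via $\K$-consistency $(iii)$. Your restriction to classes encoding $\K$ is defensible given the statement's notation, but that extra case is what the proposition needs when it is invoked for the turning phase, where a candidate reversal may produce a class contradicting $\K$; you should add that one-line case to match the intended scope.
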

\begin{proof}
     If $\EK$ does not encode $\K$, we have by $\K$-consistency ($(iii)$ in Def. \ref{def:K_Cons}) that $S(\EKS,\D) > S(\EK,\D)$.
     \\
     Now assume $\EK$ encodes $\K$. Assume for contradiction that there exist a $\EK$ such that $S(\EKS,\D) \leq S(\EK,\D)$ where $\EKS \neq \EK$. By $\K$-consistency ($(i)$ in Def. \ref{def:K_Cons}) we have that for $\G \in \EK$ and $\G^*\in \EKS$ that $\G^*\leq \G$. By Theorem \ref{TIERthm4chick} there exists a sequence of edge reversals and additions that transform $\G^*$ into $\G$. Since all edge reversals are ensured to result in a graph in the same restricted equivalence class, we have by $\K$-score equivalence of $S$ that the score remains the same for each edge reversal. After each edge addition the score increases following Lemma \ref{lem:needed_for_back} and by $\K$-consistency ($(ii)$ in Def. \ref{def:K_Cons}). Hence there can be no edge additions. This contradicts $\EKS \neq \EK$.
\end{proof}
We restate the lemma and present the proof
\begin{customlem}{\ref{Lem:TGES_turn_S_C}}
(TGES turning phase)\\
Let $\GS$ be the DAG from which the data $\D$ was generated, $n$ be the number of observations in $\D$, and $\K$ the tiered background knowledge.
Let $\EK$ be the restricted equivalence class that is a result of a turning phase in TGES on $\D$ and $\K$ commencing with a restricted equivalence class where all graphs $\HH$ encode $\K$ and $\GS \approx \HH$. Then for $n \rightarrow \infty$ we have that all $\mathcal{W}\in \EK$ encode $\K$ and $\GS \approx \mathcal{W}$ almost surely.
\end{customlem}
\begin{proof}
    By Proposition \ref{prop:PerfMapOptiTBIC} any change to the restricted equivalence class will result in a lower score, hence no changes are made by the turning phase.
\end{proof}
\newpage
\subsubsection{Figures}
\label{subsubsec:fig_for_Proof}
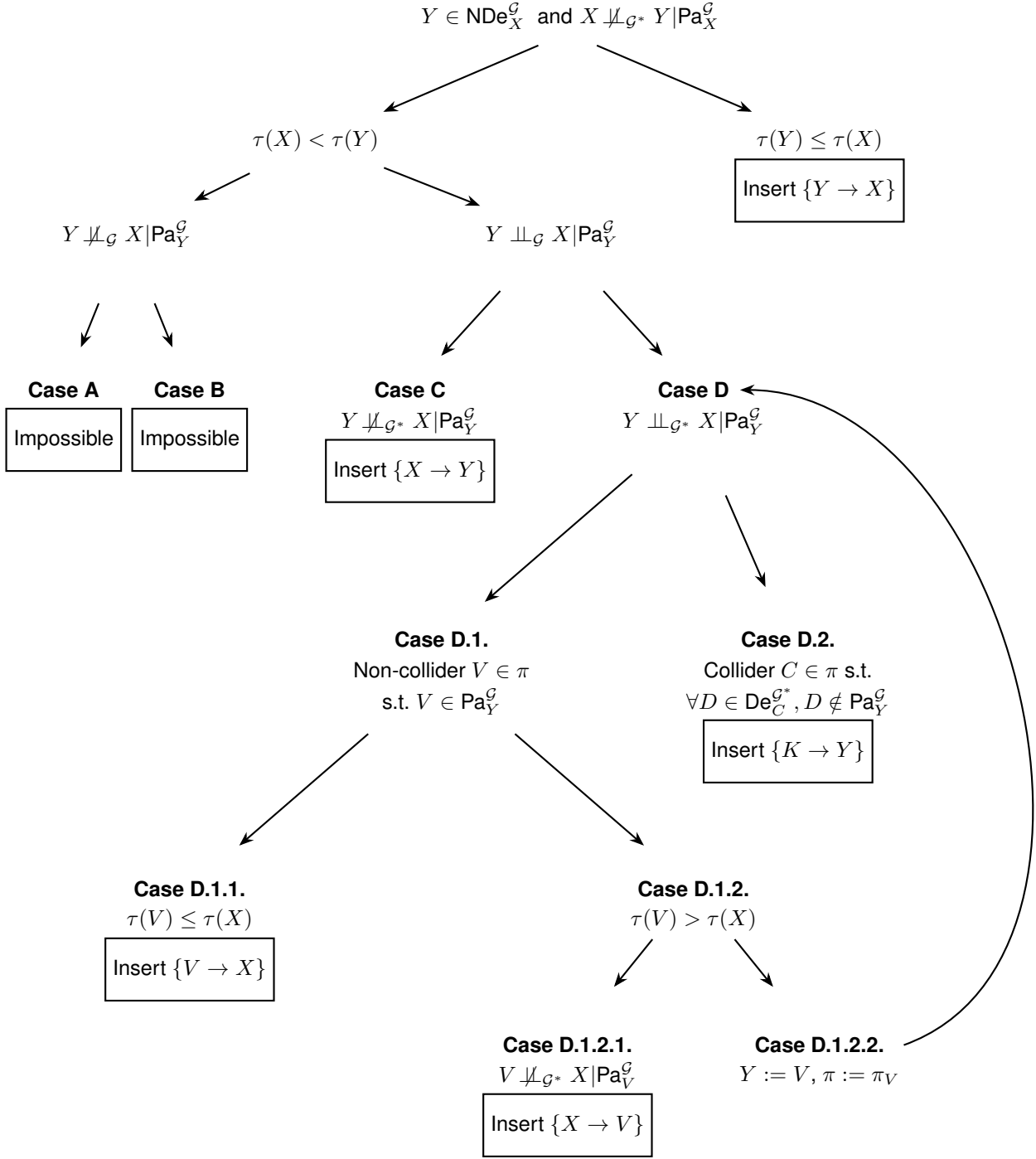
\begin{figure}[H]
\centering
\begin{tikzpicture}[>=Stealth, node distance=2cm, every node/.style={circle, draw, thick}]

  \tikzset{reg/.style={draw = none,font=\sffamily}}
    \tikzset{sqr/.style={
  draw, 
  font=\sffamily, 
  rectangle, 
  minimum size=1cm, 
  align=center 
}}

  \node[reg] (Start01) at (-1.5,0) {$Y\in \ndeG_X$};
  \node[reg] (Start01) at (-0.2,0) {and};
  \node[reg] (Start01) at (1.3,0) {$X\nindGS Y|\paG_X$};
  \node[reg] (Start1) at (0.3,-0.4) {};
  \node[reg] (Start2) at (-0.3,-0.4) {};
  \node[reg] (YX) at (4,-2) {$\tau(Y)\leq\tau(X)$};
  \node[sqr] (Label1) at (4,-2.8) {Insert $\{Y\rightarrow X\}$};
  \node[reg] (XY) at (-4,-2) {$\tau(X) < \tau(Y)$};
  \node[reg] (Label11) at (-7,-3.5) {$Y\nindG X|\paG_Y$};
  
  \node[reg] (Label12) at (-0.25,-3.5) {$Y\indG X|\paG_Y$};
  
  \node[reg] (A) at (-8,-6) {\textbf{Case A}};
  \node[sqr] (Label2) at (-8,-6.8) {Impossible};
  \node[reg] (B) at (-6,-6) {\textbf{Case B}};
  \node[sqr] (Label3) at (-6,-6.8) {Impossible};
  \node[reg] (C) at (-2.5,-6) {\textbf{Case C}};
  \node[reg] (Label4) at (-2.5,-6.5) {$Y\nindGS X|\paG_Y$};
  \node[sqr] (Label5) at (-2.5,-7.3) {Insert $\{X\rightarrow Y\}$};
  \node[reg] (D) at (2,-6) {\textbf{Case D}};
  \node[reg] (Label6) at (2,-6.5) {$Y\indGS X|\paG_Y$};
  \node[reg] (D1) at (-2,-10) {\textbf{Case D.1.}};
  \node[reg] (Label71) at (-2,-10.5) {Non-collider $V\in \pi$};
  \node[reg] (Label72) at (-2,-11) {s.t. $V\in \paG_Y$};

  \node[reg] (D2) at (3.5,-10) {\textbf{Case D.2.}};
  \node[reg] (Label81) at (3.5,-10.5) {Collider $C\in \pi$ s.t.};
  \node[reg] (Label82) at (3.5,-11) {$\forall D\in\deGS_C, D\notin \paG_Y$ };
  \node[sqr] (Label83) at (3.5,-11.8) {Insert $\{K\rightarrow Y\}$};
  
  \node[reg] (D11) at (-6,-14) {\textbf{Case D.1.1.}};
  \node[reg] (Label91) at (-6,-14.5) {$\tau(V) \leq \tau(X)$};
  \node[sqr] (Label92) at (-6,-15.3) {Insert $\{V\rightarrow X\}$};
  \node[reg] (D12) at (2,-14) {\textbf{Case D.1.2.}};
  \node[reg] (Label10) at (2,-14.5) {$\tau(V) > \tau(X)$};
  \node[reg] (D121) at (0,-16.5) {\textbf{Case D.1.2.1.}};
  \node[reg] (Label111) at (0,-17) {$V\nindGS X|\paG_V$};
  \node[sqr] (Label112) at (0,-17.8) {Insert $\{X\rightarrow V\}$};
  
  \node[reg] (D122) at (4,-16.5) {\textbf{Case D.1.2.2.}};
  \node[reg] (Label121) at (4,-17) {$Y:=V$, $\pi:=\pi_V$};

  \draw[->, thick] (Start1) -- (YX);
  \draw[->, thick] (Start2) -- (XY);
  \draw[->, thick] (XY) -- (Label11);
  \draw[->, thick] (XY) -- (Label12);
  \draw[->, thick] (Label11) -- (A);
  \draw[->, thick] (Label11) -- (B);
  \draw[->, thick] (Label12) -- (C);
  \draw[->, thick] (Label12) -- (D);
  \draw[->, thick] (Label6) -- (D1);
  \draw[->, thick] (Label6) -- (D2);
  \draw[->, thick] (Label71) -- (D11);
  \draw[->, thick] (Label71) -- (D12);
  \draw[->, thick] (D12) -- (D121);
  \draw[->, thick] (D12) -- (D122);
  \draw[->, thick] (Label121) to [out=20,in=0] (D);

\end{tikzpicture}
\caption{Illustration of Lemma \ref{Lem:TGES_for_S_C} proof. Here $\pi$ is chosen as a path from $X$ to $Y$ in $\GS$ that is open given $\paG_X$. This path exists by $X\nindGS Y|\paG_X$. $K$ is the node in the collider structure of $C$ closest to $Y$ on $\pi$. $\pi_V$ is the sub-path of $\pi$ which follows $\pi$ from $X$ to $V$.
} \label{fig:Lemma_overview}
\end{figure}
\begin{table}[H]
    \centering
    \begin{tabular}{|c|c|c|c|}
    \hline
    \textbf{Case} & $\G$ & $\GS$ & \textbf{Edge addition} \\
    \hline
    $\tau(Y)\leq\tau(X)$ & 
    \parbox[c][2cm][c]{4cm}{ 
    \centering
    \begin{tikzpicture}[>=Stealth, node distance=2cm, every node/.style={circle, draw, thick}]

  \tikzset{reg/.style={draw = none,font=\sffamily}}

  \node[reg] (X) at (2,0) {$X$};
  \node[reg] (Y) at (0,0) {$Y$};
  \node[reg] (top1) at (1,0.5) {};
  \node[reg] (bot1) at (1,-0.5) {};
  \node[reg] (label1) at (0,-1) {$\tau = 1$};
  \node[reg] (label2) at (2,-1) {$\tau = 2$};
  \draw[-,dotted]{} (top1) -- (bot1);

\end{tikzpicture}
    } &
    \parbox[c][2cm][c]{4cm}{ 
    \centering
    \begin{tikzpicture}[>=Stealth, node distance=2cm, every node/.style={circle, draw, thick}]

  \tikzset{reg/.style={draw = none,font=\sffamily}}

  \node[reg] (X) at (2,0) {$X$};
  \node[reg] (Y) at (0,0) {$Y$};
  \node[reg] (top1) at (1,0.5) {};
  \node[reg] (bot1) at (1,-0.5) {};
  \node[reg] (label1) at (0,-1) {$\tau = 1$};
  \node[reg] (label2) at (2,-1) {$\tau = 2$};
   \draw[->, thick] (Y) -- (X);
  \draw[-,dotted]{} (top1) -- (bot1);

\end{tikzpicture}
    } &
    \parbox[c][2cm][c]{2.5cm}{ 
    \centering
    Insert $\left\{Y \rightarrow X\right\}$
    } \\
    \hline
    \textbf{Case C} & 
    \parbox[c][3.65cm][c]{4cm}{ 
    \centering
    \begin{tikzpicture}[>=Stealth, node distance=2cm, every node/.style={circle, draw, thick}]

    \tikzset{reg/.style={draw = none,font=\sffamily}}

    \node[reg] (X) at (0,1) {$X$};
    \node[reg] (Z) at (0,-1) {$Z$};
    \node[reg] (Y) at (2,0) {$Y$};
    \node[reg] (top1) at (1,1.5) {};
    \node[reg] (bot1) at (1,-1.5) {};
    \node[reg] (label1) at (0,-2) {$\tau = 1$};
    \node[reg] (label2) at (2,-2) {$\tau = 2$};
    \draw[-,dotted]{} (top1) -- (bot1);
    \end{tikzpicture}
    } &
    \parbox[c][3.65cm][c]{4cm}{ 
    \centering
    \begin{tikzpicture}[>=Stealth, node distance=2cm, every node/.style={circle, draw, thick}]

    \tikzset{reg/.style={draw = none,font=\sffamily}}

    \node[reg] (X) at (0,1) {$X$};
    \node[reg] (Z) at (0,-1) {$Z$};
    \node[reg] (Y) at (2,0) {$Y$};
    \node[reg] (top1) at (1,1.5) {};
    \node[reg] (bot1) at (1,-1.5) {};
    \node[reg] (label1) at (0,-2) {$\tau = 1$};
    \node[reg] (label2) at (2,-2) {$\tau = 2$};
    \draw[->, thick] (Z) -- (X);
    \draw[->, thick] (Z) -- (Y);
    \draw[-,dotted]{} (top1) -- (bot1);
    \end{tikzpicture}
    } &
    \parbox[c][3.65cm][c]{2.5cm}{ 
    \centering
    Insert $\left\{X \rightarrow Y\right\}$
    } \\
    \hline
    \textbf{Case D.1.1.} & 
    \parbox[c][3.65cm][c]{4cm}{ 
    \centering
    \begin{tikzpicture}[>=Stealth, node distance=2cm, every node/.style={circle, draw, thick}]

  \tikzset{reg/.style={draw = none,font=\sffamily}}

  \node[reg] (X) at (0,1) {$X$};
  \node[reg] (Z) at (0,-1) {$Z$};
  \node[reg] (Y) at (2,0) {$Y$};
  \node[reg] (top1) at (1,1.5) {};
  \node[reg] (bot1) at (1,-1.5) {};
  \node[reg] (label1) at (0,-2) {$\tau = 1$};
  \node[reg] (label2) at (2,-2) {$\tau = 2$};
   \draw[->, thick] (Z) -- (Y);
  \draw[-,dotted]{} (top1) -- (bot1);

\end{tikzpicture}
    } &
    \parbox[c][3.65cm][c]{4cm}{ 
    \centering
    \begin{tikzpicture}[>=Stealth, node distance=2cm, every node/.style={circle, draw, thick}]

  \tikzset{reg/.style={draw = none,font=\sffamily}}

  \node[reg] (X) at (0,1) {$X$};
  \node[reg] (Z) at (0,-1) {$Z$};
  \node[reg] (Y) at (2,0) {$Y$};
  \node[reg] (top1) at (1,1.5) {};
  \node[reg] (bot1) at (1,-1.5) {};
  \node[reg] (label1) at (0,-2) {$\tau = 1$};
  \node[reg] (label2) at (2,-2) {$\tau = 2$};
   \draw[->, thick] (Z) -- (X);
   \draw[->, thick] (Z) -- (Y);
  \draw[-,dotted]{} (top1) -- (bot1);

\end{tikzpicture}
    } &
    \parbox[c][3.65cm][c]{2.5cm}{ 
    \centering
    Insert $\left\{Z \rightarrow X\right\}$
    } \\
    \hline
    \textbf{Case D.1.2.1.} & 
    \parbox[c][3.65cm][c]{4cm}{ 
    \centering
    \begin{tikzpicture}[>=Stealth, node distance=2cm, every node/.style={circle, draw, thick}]

  \tikzset{reg/.style={draw = none,font=\sffamily}}

  \node[reg] (X) at (0,0) {$X$};
  \node[reg] (Z) at (2,-1) {$Z$};
  \node[reg] (Y) at (2,1) {$Y$};
  \node[reg] (top1) at (1,1.5) {};
  \node[reg] (bot1) at (1,-1.5) {};
  \node[reg] (label1) at (0,-2) {$\tau = 1$};
  \node[reg] (label2) at (2,-2) {$\tau = 2$};
   \draw[->, thick] (Z) -- (Y);
  \draw[-,dotted]{} (top1) -- (bot1);

\end{tikzpicture}
    } &
    \parbox[c][3.65cm][c]{4cm}{ 
    \centering
    \begin{tikzpicture}[>=Stealth, node distance=2cm, every node/.style={circle, draw, thick}]

  \tikzset{reg/.style={draw = none,font=\sffamily}}

  \node[reg] (X) at (0,0) {$X$};
  \node[reg] (Z) at (2,-1) {$Z$};
  \node[reg] (Y) at (2,1) {$Y$};
  \node[reg] (top1) at (1,1.5) {};
  \node[reg] (bot1) at (1,-1.5) {};
  \node[reg] (label1) at (0,-2) {$\tau = 1$};
  \node[reg] (label2) at (2,-2) {$\tau = 2$};
   \draw[->, thick] (X) -- (Z);
   \draw[->, thick] (Z) -- (Y);
  \draw[-,dotted]{} (top1) -- (bot1);

\end{tikzpicture}
    } &
    \parbox[c][3.65cm][c]{2.5cm}{ 
    \centering
    Insert $\left\{X \rightarrow Z\right\}$
    } \\
    \hline
    \textbf{Case D.1.2.2.} & 
    \parbox[c][3.65cm][c]{4cm}{ 
    \centering
    \begin{tikzpicture}[>=Stealth, node distance=2cm, every node/.style={circle, draw, thick}]

  \tikzset{reg/.style={draw = none,font=\sffamily}}

  \node[reg] (X) at (0,1) {$X$};
  \node[reg] (Z_1) at (0,-1) {$Z$};
  \node[reg] (Z_2) at (2,-1) {$W$};
  \node[reg] (Y) at (2,1) {$Y$};
  \node[reg] (top1) at (1,1.5) {};
  \node[reg] (bot1) at (1,-1.5) {};
  \node[reg] (label1) at (0,-2) {$\tau = 1$};
  \node[reg] (label2) at (2,-2) {$\tau = 2$};
   \draw[->, thick] (Z_2) -- (Y);
   \draw[->, thick] (Z_1) -- (Z_2);
  \draw[-,dotted]{} (top1) -- (bot1);

\end{tikzpicture}
    } &
    \parbox[c][3.65cm][c]{4cm}{ 
    \centering
    \begin{tikzpicture}[>=Stealth, node distance=2cm, every node/.style={circle, draw, thick}]

  \tikzset{reg/.style={draw = none,font=\sffamily}}

  \node[reg] (X) at (0,1) {$X$};
  \node[reg] (Z_1) at (0,-1) {$Z$};
  \node[reg] (Z_2) at (2,-1) {$W$};
  \node[reg] (Y) at (2,1) {$Y$};
  \node[reg] (top1) at (1,1.5) {};
  \node[reg] (bot1) at (1,-1.5) {};
  \node[reg] (label1) at (0,-2) {$\tau = 1$};
  \node[reg] (label2) at (2,-2) {$\tau = 2$};
   \draw[->, thick] (Z_1) -- (X);
   \draw[->, thick] (Z_1) -- (Z_2);
   \draw[->, thick] (Z_2) -- (Y);
  \draw[-,dotted]{} (top1) -- (bot1);

\end{tikzpicture}
    } &
    \parbox[c][3.65cm][c]{3cm}{ 
    \centering
    Set $Y:=W$ and $\pi := \pi_{W} = \{X,Z,W\}$. Then we are in Case D.1.1. afterwards and we insert $\left\{Z \rightarrow X\right\}$
    } \\
    \hline
    \textbf{Case D.2.} & 
    \parbox[c][3.65cm][c]{4cm}{ 
    \centering
    \begin{tikzpicture}[>=Stealth, node distance=2cm, every node/.style={circle, draw, thick}]

  \tikzset{reg/.style={draw = none,font=\sffamily}}

  \node[reg] (X) at (0,1) {$X$};
  \node[reg] (Z_1) at (-1,0) {$Z_1$};
  \node[reg] (Z_2) at (0,-1) {$Z_2$};
  \node[reg] (Y) at (2,1) {$Y$};
  \node[reg] (top1) at (1,1.5) {};
  \node[reg] (bot1) at (1,-1.5) {};
  \node[reg] (label1) at (0,-2) {$\tau = 1$};
  \node[reg] (label2) at (2,-2) {$\tau = 2$};
   \draw[->, thick] (Z_1) -- (X);
  \draw[-,dotted]{} (top1) -- (bot1);

\end{tikzpicture}
    } &
    \parbox[c][3.65cm][c]{4cm}{ 
    \centering
    \begin{tikzpicture}[>=Stealth, node distance=2cm, every node/.style={circle, draw, thick}]

  \tikzset{reg/.style={draw = none,font=\sffamily}}

  \node[reg] (X) at (0,1) {$X$};
  \node[reg] (Z_1) at (-1,0) {$Z_1$};
  \node[reg] (Z_2) at (0,-1) {$Z_2$};
  \node[reg] (Y) at (2,1) {$Y$};
  \node[reg] (top1) at (1,1.5) {};
  \node[reg] (bot1) at (1,-1.5) {};
  \node[reg] (label1) at (0,-2) {$\tau = 1$};
  \node[reg] (label2) at (2,-2) {$\tau = 2$};
   \draw[->, thick] (X) -- (Z_1);
   \draw[->, thick] (Z_2) -- (Z_1);
   \draw[->, thick] (Z_2) -- (Y);
  \draw[-,dotted]{} (top1) -- (bot1);

\end{tikzpicture}
    } &
    \parbox[c][3.65cm][c]{2.5cm}{ 
    \centering
    Insert $\left\{Z_2 \rightarrow Y\right\}$
    } \\
    \hline
    
    \end{tabular}
    \caption{Examples of how each \textbf{Case }in the proof of Lemma \ref{Lem:TGES_for_S_C} can occur. The first column notes the case, and the second and third column respectively contains a DAG $G$ and data-generating DAG $\GS$ with tiered ordering $\tau$ that correspond to the given case. The fourth column notes which edge addition that according to local $\K$-consistency will result in a higher score, and hence a contradiction.}
    \label{tab:Lemma1_proof_examples}
\end{table}

\section{Simulation Study} \label{supmat:sec:Sim_study}

\subsection{Metrics} \label{supmat:subsec:Metrics}
\newcommand{\bidirpic}{
\begin{tikzpicture}[>=Stealth, node distance=2cm, every node/.style={circle, draw, thick}]
  \tikzset{reg/.style={draw = none,font=\sffamily}}
  \node[reg] (A) {A};
  \node[reg] (B) [right of=A] {B};
  \draw[-] (A) -- (B);
\end{tikzpicture}
}

\newcommand{\rightdirpic}{
\begin{tikzpicture}[>=Stealth, node distance=2cm, every node/.style={circle, draw, thick}]
  \tikzset{reg/.style={draw = none,font=\sffamily}}
  \node[reg] (A) {A};
  \node[reg] (B) [right of=A] {B};
  \draw[->] (A) -- (B);
\end{tikzpicture}
}
\newcommand{\leftdirpic}{
\begin{tikzpicture}[>=Stealth, node distance=2cm, every node/.style={circle, draw, thick}]
  \tikzset{reg/.style={draw = none,font=\sffamily}}
  \node[reg] (A) {A};
  \node[reg] (B) [right of=A] {B};
  \draw[->] (B) -- (A);
\end{tikzpicture}
}
\newcommand{\nodirpic}{
\begin{tikzpicture}[>=Stealth, node distance=2cm, every node/.style={circle, draw, thick}]
  \tikzset{reg/.style={draw = none,font=\sffamily}}
  \node[reg] (A) {A};
  \node[reg] (B) [right of=A] {B};
\end{tikzpicture}
}
\newcommand{\anypic}{
\begin{tikzpicture}[>=Stealth, node distance=2cm, every node/.style={circle, draw, thick}]
  \tikzset{reg/.style={draw = none,font=\sffamily}}
  \node[reg] (A) at (0,0) {A};
  \node[reg] (B) at (2,0) {B};
  \node[reg,font=\large] (star1) at (0.35,-0.055) {*};
  \node[reg,font=\large] (star2) at (1.65,-0.055) {*};
  \draw[-] (A) -- (B); 
\end{tikzpicture}
}
\subsubsection{Metric of Adjacencies}
\begin{table}[H]
\centering
\label{tab:adjconf}
\begin{tabular}{|c|c|c|}
\hline
\parbox[c][1cm][c]{4cm}{\centering \textbf{Estimated edge}} & 
\parbox[c][1cm][c]{4cm}{\centering \textbf{True edge}} & 
\parbox[c][1cm][c]{3cm}{\centering \textbf{Result}} \\
\hline
\parbox[c][1cm][c]{3cm}{\centering \anypic} & 
\parbox[c][1cm][c]{3cm}{\centering \anypic} & 
\parbox[c][1cm][c]{3cm}{\centering TP} \\
\hline
\parbox[c][1cm][c]{3cm}{\centering \nodirpic} & 
\parbox[c][1cm][c]{3cm}{\centering \anypic} & 
\parbox[c][1cm][c]{3cm}{\centering FN} \\
\hline
\parbox[c][1cm][c]{3cm}{\centering \anypic} & 
\parbox[c][1cm][c]{3cm}{\centering \nodirpic} & 
\parbox[c][1cm][c]{3cm}{\centering FP} \\
\hline
\parbox[c][1cm][c]{3cm}{\centering \nodirpic} & 
\parbox[c][1cm][c]{3cm}{\centering \nodirpic} & 
\parbox[c][1cm][c]{3cm}{\centering TN} \\
\hline
\end{tabular}
\caption{Overview of values in confusion matrix for the metric of adjacencies, given a relation between two nodes in the estimated graph and the true graph. An edge with stars $*$ symbolizes that the edge can be undirected or directed to either side.}
\end{table}

\subsubsection{Metric of all Directions}

\begin{table}[H]
\centering
\begin{tabular}{|c|c|c|}
\hline
\parbox[c][1cm][c]{4cm}{\centering \textbf{Estimated edge}} & 
\parbox[c][1cm][c]{4cm}{\centering \textbf{True edge}} & 
\parbox[c][1cm][c]{3cm}{\centering \textbf{Result}} \\
\hline
\parbox[c][1cm][c]{3cm}{\centering \bidirpic} & 
\parbox[c][1cm][c]{3cm}{\centering \bidirpic} & 
\parbox[c][1cm][c]{3cm}{\centering TN} \\
\hline
\parbox[c][1cm][c]{3cm}{\centering \bidirpic} & 
\parbox[c][1cm][c]{3cm}{\centering \rightdirpic} & 
\parbox[c][1cm][c]{3cm}{\centering FN} \\
\hline
\parbox[c][1cm][c]{3cm}{\centering \rightdirpic} & 
\parbox[c][1cm][c]{3cm}{\centering \bidirpic} & 
\parbox[c][1cm][c]{3cm}{\centering FP} \\
\hline
\parbox[c][1cm][c]{3cm}{\centering \rightdirpic} & 
\parbox[c][1cm][c]{3cm}{\centering \rightdirpic} & 
\parbox[c][1cm][c]{3cm}{\centering TP} \\
\hline
\parbox[c][1cm][c]{3cm}{\centering \rightdirpic} & 
\parbox[c][1cm][c]{3cm}{\centering \leftdirpic} & 
\parbox[c][1cm][c]{3cm}{\centering FP, FN} \\
\hline
\parbox[c][1cm][c]{3cm}{\centering \nodirpic} & 
\parbox[c][1cm][c]{3cm}{\centering \anypic} & 
\parbox[c][1cm][c]{3cm}{\centering NA  } \\
\hline
\parbox[c][1cm][c]{3cm}{\centering \anypic} & 
\parbox[c][1cm][c]{3cm}{\centering \nodirpic} & 
\parbox[c][1cm][c]{3cm}{\centering  NA } \\
\hline
\parbox[c][1cm][c]{3cm}{\centering \nodirpic} & 
\parbox[c][1cm][c]{3cm}{\centering \nodirpic} & 
\parbox[c][1cm][c]{3cm}{\centering  NA } \\
\hline
\end{tabular}
\caption{Overview of values in confusion matrix for the metric of all directions, given a relation between two nodes in the estimated graph and the true graph. An edge with stars $*$ symbolizes that the edge can be undirected or directed to either side. NA marks cases that are not scored by this metric.}
\label{tab:dirconf}
\end{table}

\subsubsection{Metric of In-Tier Directions}
For any pair of nodes $A,B$: if $\tau(A) = \tau(B)$ the entries of the confusion matrix for the metric of in-tier directions are counted equally to the metric of all directions (see Table \ref{tab:dirconf}). If instead $\tau(A) \neq \tau(B)$ then the pair has no effect. 

\subsection{Computation Time} \label{supmat:subsec:Comptime}
We ran 50 simulations for each number of nodes between 3 and 20 and computed the average time spent.
\\
We report both the total time spent by TGES and the time spent only in stage (i) of TGES. In this way, we can both assess whether there is a general change in computation time and whether there is a change in the time spent scoring graphs by using tiered background knowledge.
\begin{figure}[!htb]
    \centering
    \includegraphics[width=0.9\linewidth]{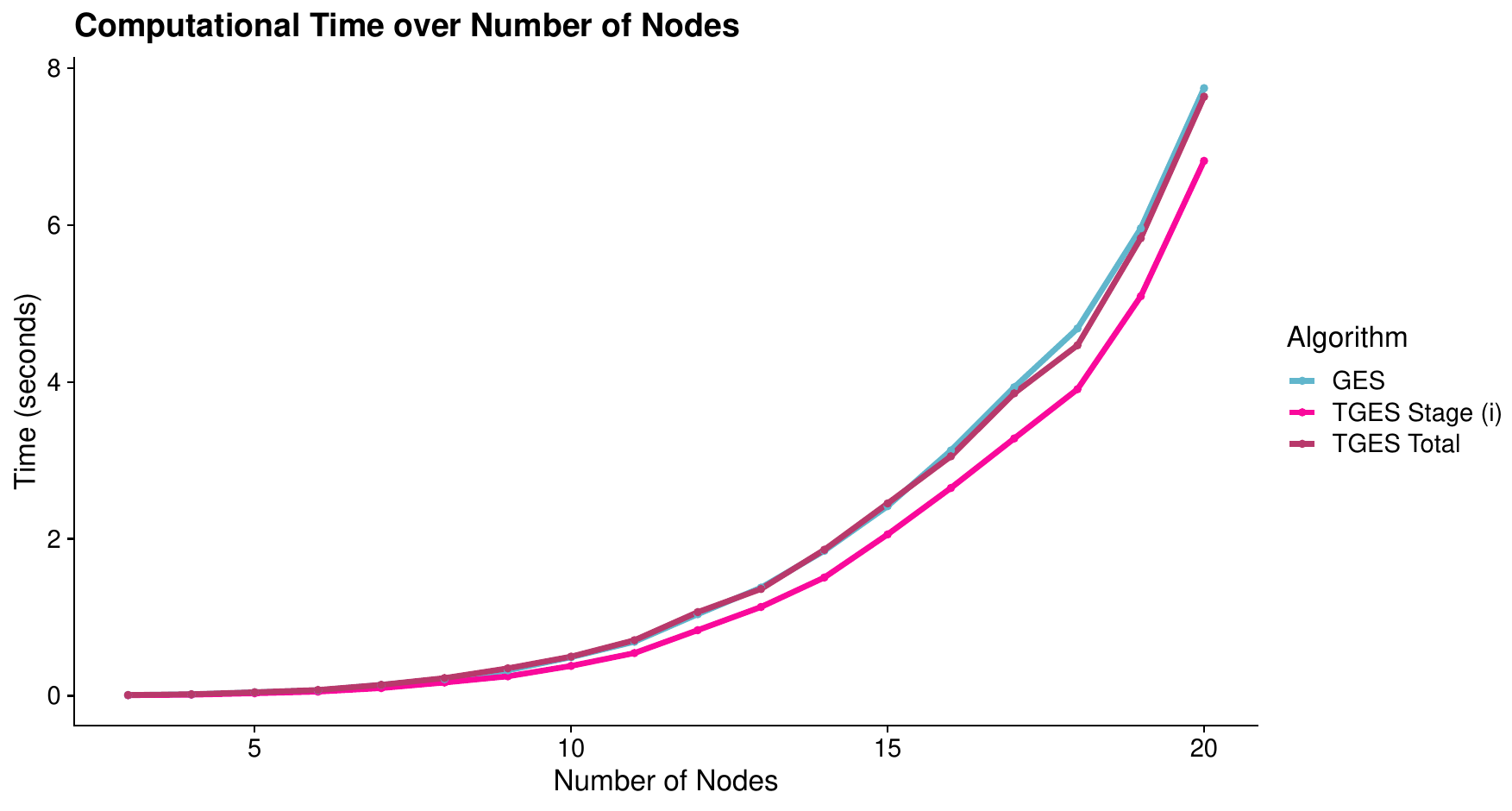}
    \caption{Computational time for a given number of nodes from 3 to 20 nodes. For each number of nodes, the computation time is the mean for 50 simulations. The times are provided for GES (as implemented in the R package \textit{pcalg} without using C++) TGES Stage (i) (time spent in Stage (i) of TGES) and TGES Total (total time spent by TGES).}
    \label{fig:Comp_time_step}
\end{figure}
\\
The total computation times spent by TGES and GES are more or less equal, as can be seen in Figure \ref{fig:Comp_time_step}. Across the number of nodes, the computation time of TGES in stage (i) is smaller than GES. This is most likely due to TGES needing to score fewer edges in the backward phase than GES as a result of using $\K$.
This shows promise of TGES being faster than GES if we are able to restrict by $\K$ and utilize Meek's rule 1 in a faster way than the current implementation.
\subsection{All methods} \label{supmat:subsec:SimStudy_AllMethods}
This section includes the plots shown in the article but with some additional methods.
\begin{figure}[H]
    \centering
    \includegraphics[width=\linewidth]{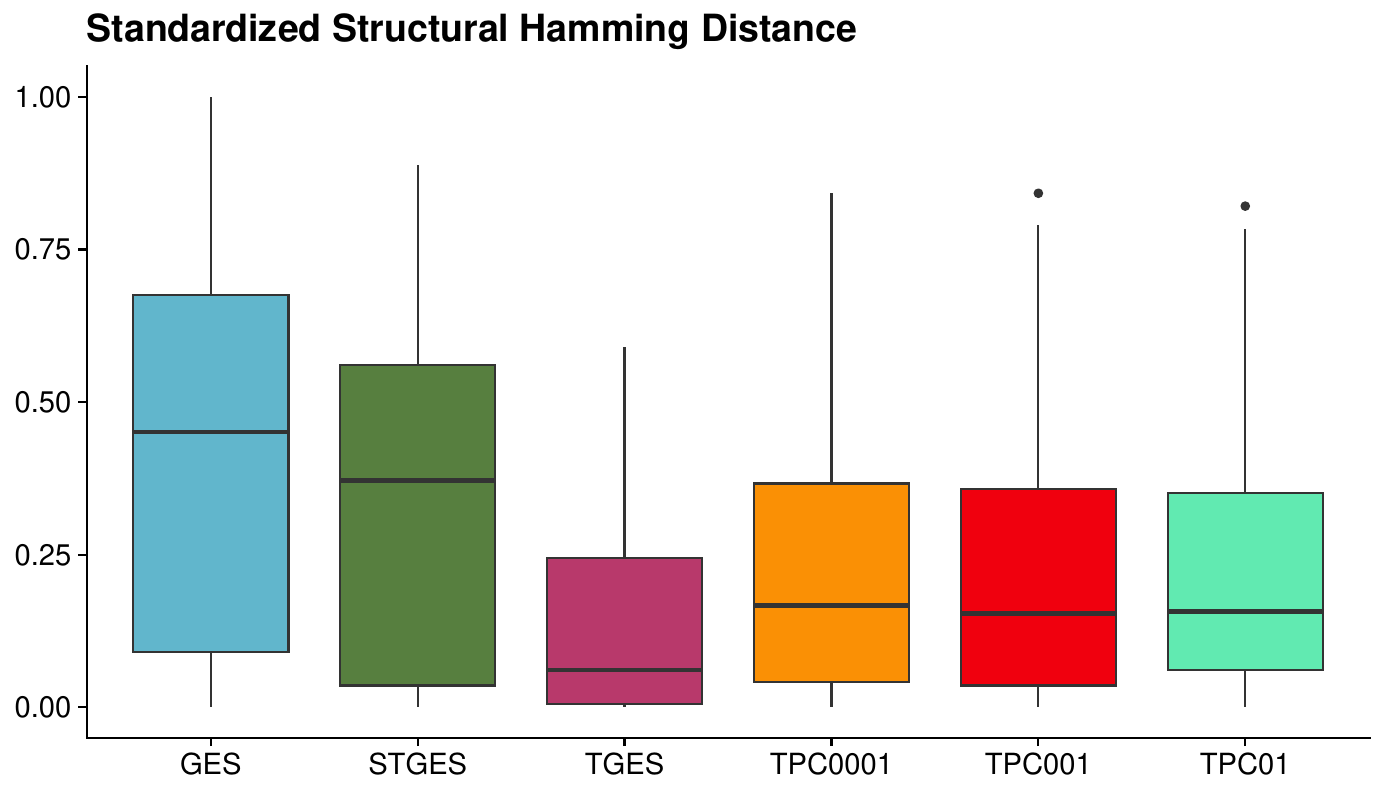}
    \caption{Standardized Structural Hamming Distance for GES, STGES, TGES, TPC with $\alpha = 0.001, 0.01, 0.1$.}
    \label{fig:App_sSHD_all}
\end{figure}
\begin{figure}[H]
    \centering
    \includegraphics[width=\linewidth]{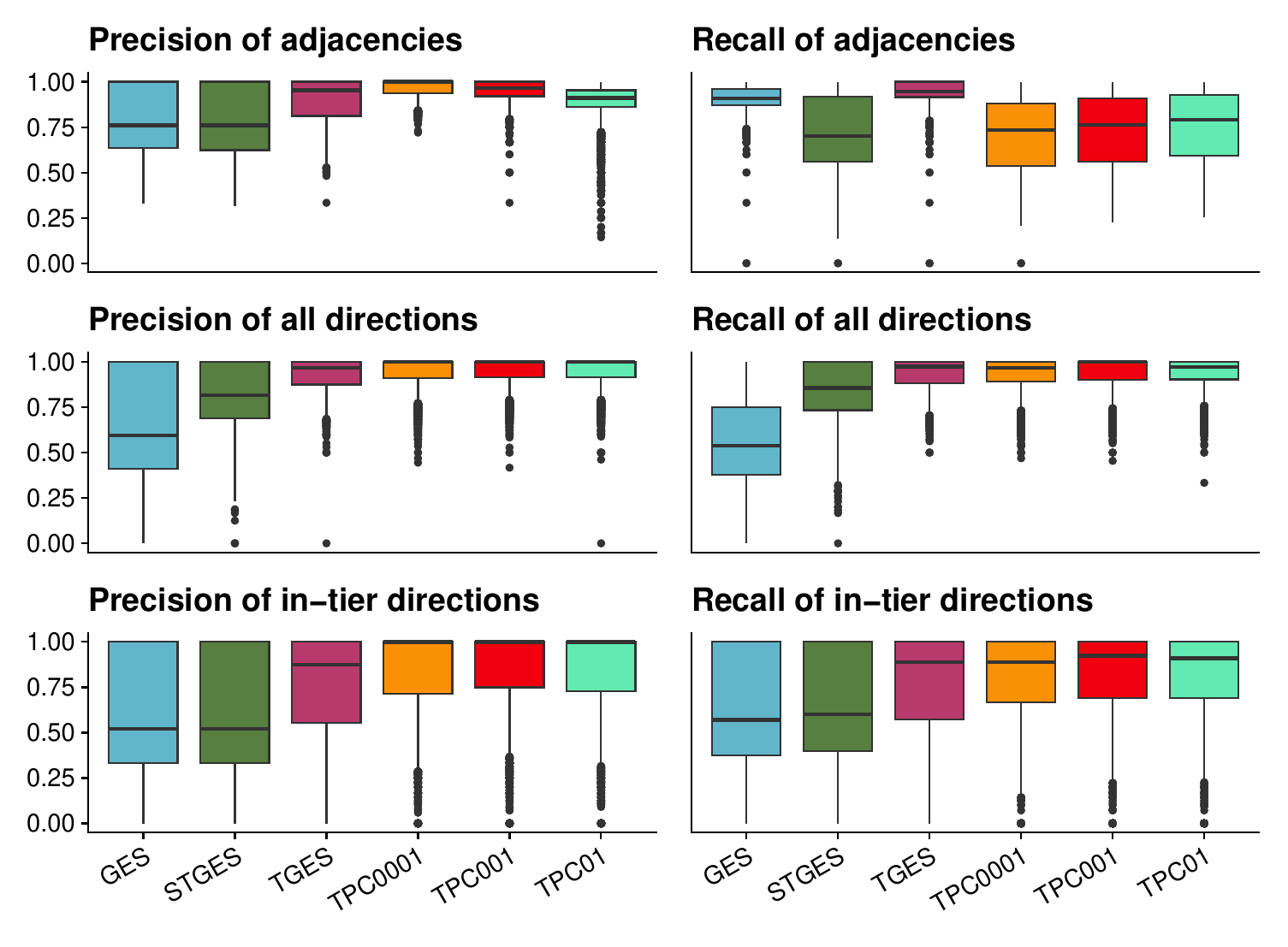}
    \caption{Metrics of adjacencies, all directions and in-tier directions for GES, STGES, TGES, TPC with $\alpha = 0.001, 0.01, 0.1$.}
    \label{fig:App_adj_dir_int_all}
\end{figure}

\section{Data Example}
\label{supmat:sec:data_example}
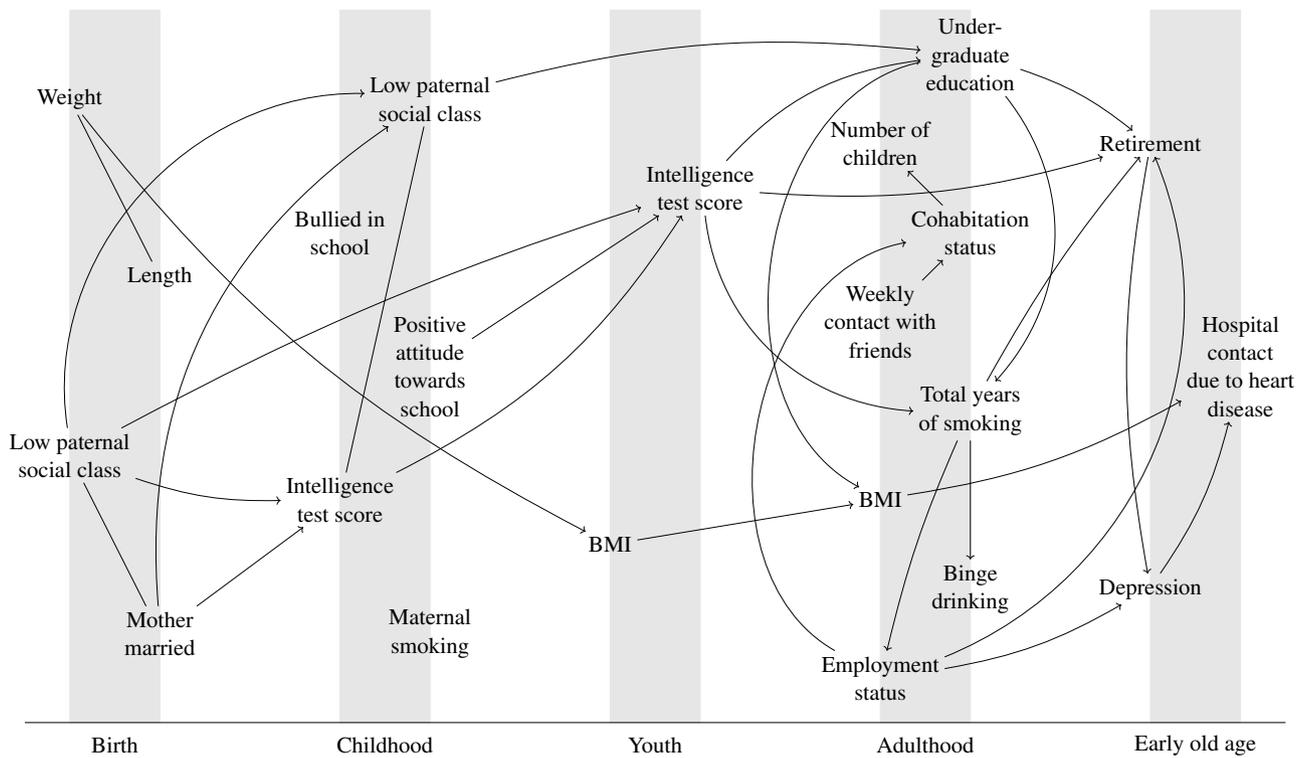
\begin{figure}[H]
    \centering
    \resizebox{\linewidth}{!}{
\begin{tikzpicture}
[every node/.style={font=\Large, align = center}, every edge/.append style={nodes={font=\itshape\scriptsize}}]
\node (1) at (2,2) {Mother \\ married};
\node at (1,-0.5) {Birth};
\node (2) at (0,6) {Low paternal \\ social class};
\node (3) at (2,10) {Length};
\node (4) at (0,14) {Weight};
\node (5) at (8,2) {Maternal \\smoking};
\node at (7,-0.5) {Childhood};
\node (6) at (6,5) {Intelligence \\ test score};
\node (7) at (8,8) {Positive \\ attitude\\ towards\\ school};
\node (8) at (6,11) {Bullied in \\school};
\node (9) at (8,14) {Low paternal \\ social class};
\node (10) at (12,4) {BMI};
\node at (13,-0.5) {Youth};
\node (11) at (14,12) {Intelligence \\test score};
\node (12) at (18,1) {Employment\\ status};
\node at (19,-0.5) {Adulthood};
\node (13) at (20,3) {Binge \\drinking};
\node (14) at (18,5) {BMI};
\node (15) at (20,7) {Total years\\ of smoking};
\node (16) at (18,9) {Weekly \\contact with\\ friends};
\node (17) at (20,11) {Cohabitation\\ status};
\node (18) at (18,13) {Number of\\ children};
\node (19) at (20,15) {Under- \\graduate\\ education};
\node (20) at (24,3) {Depression};
\node at (25,-0.5) {Early old age};
\node (21) at (26,8) {Hospital \\contact \\due to heart \\disease};
\node (22) at (24,13) {Retirement};
\draw [->] (1) edge  (6);
\draw [->] (1) edge [bend left=30] (9);
\draw [->] (2) edge [bend right=10] (6);
\draw [->] (2) edge [bend left=50] (9);
\draw [->] (2) edge [bend left=5] (11);
\draw [->] (4) edge [bend right=12] (10);
\draw [->] (6) edge [bend right=15] (11);
\draw [->] (7) edge (11);
\draw [->] (9) edge [bend left=10] (19);
\draw [->] (10) edge (14);
\draw [->] (11) edge [bend right=40] (15);
\draw [->] (11) edge [bend left=20] (19);
\draw [->] (11) edge [bend right=10] (22);
\draw [->] (12) edge [bend left=70] (17);
\draw [->] (12) edge [bend right=10] (20);
\draw [->] (12) edge [bend right=45] (22);
\draw [->] (14) edge [bend right=10] (21);
\draw [->] (15) edge [bend right=5] (12);
\draw [->] (15) edge  (13);
\draw [->] (15) edge [bend left=5] (22);
\draw [->] (16) edge  (17);
\draw [->] (17) edge  (18);
\draw [->] (19) edge [bend right=70] (14);
\draw [->] (19) edge [bend left=40] (15);
\draw [->] (19) edge [bend left=10] (22);
\draw [->] (20) edge [bend right=10] (21);
\draw [->] (22) edge [bend right=10] (20);
\draw [-] (2) edge  (1);
\draw [-] (4) edge  (3);
\draw [-] (9) edge  (6);
\draw [-] (-1,0) edge (27,0);
\begin{pgfonlayer}{background}
\filldraw [join=round,black!10]
(0,0) rectangle (2,16)
(6,0) rectangle (8,16)
(12,0) rectangle (14,16)
(18,0) rectangle (20,16)
(24,0) rectangle (26,16)
; \end{pgfonlayer}
\end{tikzpicture}
}
\caption{Tiered MPDAG fitted using TGES with score criterion TBIC with a tuned penalty parameter such that the number of edges corresponds to the number of edges in Figure \ref{fig:Expert_metro}. The data is from \citet{TheoryVsDataPetersen}.}
    \label{fig:tun_lam_TGES_metro}
\end{figure}

\begin{figure}[H]
    \centering
    \resizebox{\linewidth}{!}{
\begin{tikzpicture}
[every node/.style={font=\Large, align = center}, every edge/.append style={nodes={font=\itshape\scriptsize}}]
\node (1) at (2,2) {Mother \\ married};
\node at (1,-0.5) {Birth};
\node (2) at (0,6) {Low paternal \\ social class};
\node (3) at (2,10) {Length};
\node (4) at (0,14) {Weight};
\node (5) at (8,2) {Maternal \\smoking};
\node at (7,-0.5) {Childhood};
\node (6) at (6,5) {Intelligence \\ test score};
\node (7) at (8,8) {Positive \\ attitude\\ towards\\ school};
\node (8) at (6,11) {Bullied in \\school};
\node (9) at (8,14) {Low paternal \\ social class};
\node (10) at (12,4) {BMI};
\node at (13,-0.5) {Youth};
\node (11) at (14,12) {Intelligence \\test score};
\node (12) at (18,1) {Employment\\ status};
\node at (19,-0.5) {Adulthood};
\node (13) at (20,3) {Binge \\drinking};
\node (14) at (18,5) {BMI};
\node (15) at (20,7) {Total years\\ of smoking};
\node (16) at (18,9) {Weekly \\contact with\\ friends};
\node (17) at (20,11) {Cohabitation\\ status};
\node (18) at (18,13) {Number of\\ children};
\node (19) at (20,15) {Under- \\graduate\\ education};
\node (20) at (24,3) {Depression};
\node at (25,-0.5) {Early old age};
\node (21) at (26,8) {Hospital \\contact \\due to heart \\disease};
\node (22) at (24,13) {Retirement};
\draw [->] (1) edge [bend right=10] (5);
\draw [->] (1) edge [bend right=10] (6);
\draw [->] (1) edge [bend right=10] (9);
\draw [->] (1) edge [bend right=10] (11);
\draw [->] (2) edge [bend right=10] (1);
\draw [->] (2) edge [bend right=10] (6);
\draw [->] (2) edge [bend right=10] (9);
\draw [->] (2) edge [bend right=10] (10);
\draw [->] (2) edge [bend right=10] (11);
\draw [->] (2) edge [bend right=10] (19);
\draw [->] (3) edge [bend right=10] (1);
\draw [->] (4) edge [bend right=10] (5);
\draw [->] (4) edge [bend right=10] (10);
\draw [->] (4) edge [bend right=10] (14);
\draw [->] (5) edge [bend right=10] (6);
\draw [->] (5) edge [bend right=10] (10);
\draw [->] (5) edge [bend right=10] (15);
\draw [->] (5) edge [bend right=10] (16);
\draw [->] (5) edge [bend right=10] (21);
\draw [->] (6) edge [bend right=10] (9);
\draw [->] (6) edge [bend right=10] (11);
\draw [->] (6) edge [bend right=10] (12);
\draw [->] (6) edge [bend right=10] (15);
\draw [->] (6) edge [bend right=10] (19);
\draw [->] (7) edge [bend right=10] (6);
\draw [->] (7) edge [bend right=10] (11);
\draw [->] (7) edge [bend right=10] (18);
\draw [->] (9) edge [bend right=10] (14);
\draw [->] (9) edge [bend right=10] (19);
\draw [->] (10) edge [bend right=10] (14);
\draw [->] (10) edge [bend right=10] (15);
\draw [->] (11) edge [bend right=10] (15);
\draw [->] (11) edge [bend right=10] (19);
\draw [->] (11) edge [bend right=10] (22);
\draw [->] (12) edge [bend right=10] (17);
\draw [->] (12) edge [bend right=10] (20);
\draw [->] (12) edge [bend right=10] (22);
\draw [->] (14) edge [bend right=10] (21);
\draw [->] (14) edge [bend right=10] (22);
\draw [->] (15) edge [bend right=10] (12);
\draw [->] (15) edge [bend right=10] (13);
\draw [->] (15) edge [bend right=10] (17);
\draw [->] (15) edge [bend right=10] (22);
\draw [->] (16) edge [bend right=10] (13);
\draw [->] (16) edge [bend right=10] (17);
\draw [->] (16) edge [bend right=10] (21);
\draw [->] (17) edge [bend right=10] (13);
\draw [->] (17) edge [bend right=10] (18);
\draw [->] (18) edge [bend right=10] (22);
\draw [->] (19) edge [bend right=10] (14);
\draw [->] (19) edge [bend right=10] (15);
\draw [->] (19) edge [bend right=10] (22);
\draw [->] (20) edge [bend right=10] (21);
\draw [->] (22) edge [bend right=10] (20);
\draw [-] (4) edge [bend right=10] (3);
\draw [-] (-1,0) edge (27,0);
\begin{pgfonlayer}{background}
\filldraw [join=round,black!10]
(0,0) rectangle (2,16)
(6,0) rectangle (8,16)
(12,0) rectangle (14,16)
(18,0) rectangle (20,16)
(24,0) rectangle (26,16)
; \end{pgfonlayer}
\end{tikzpicture}

}
\caption{Tiered MPDAG fitted using TGES with score criterion TBIC on the data from \citet{TheoryVsDataPetersen}.}
    \label{fig:std_lam_TGES_metro}
\end{figure}
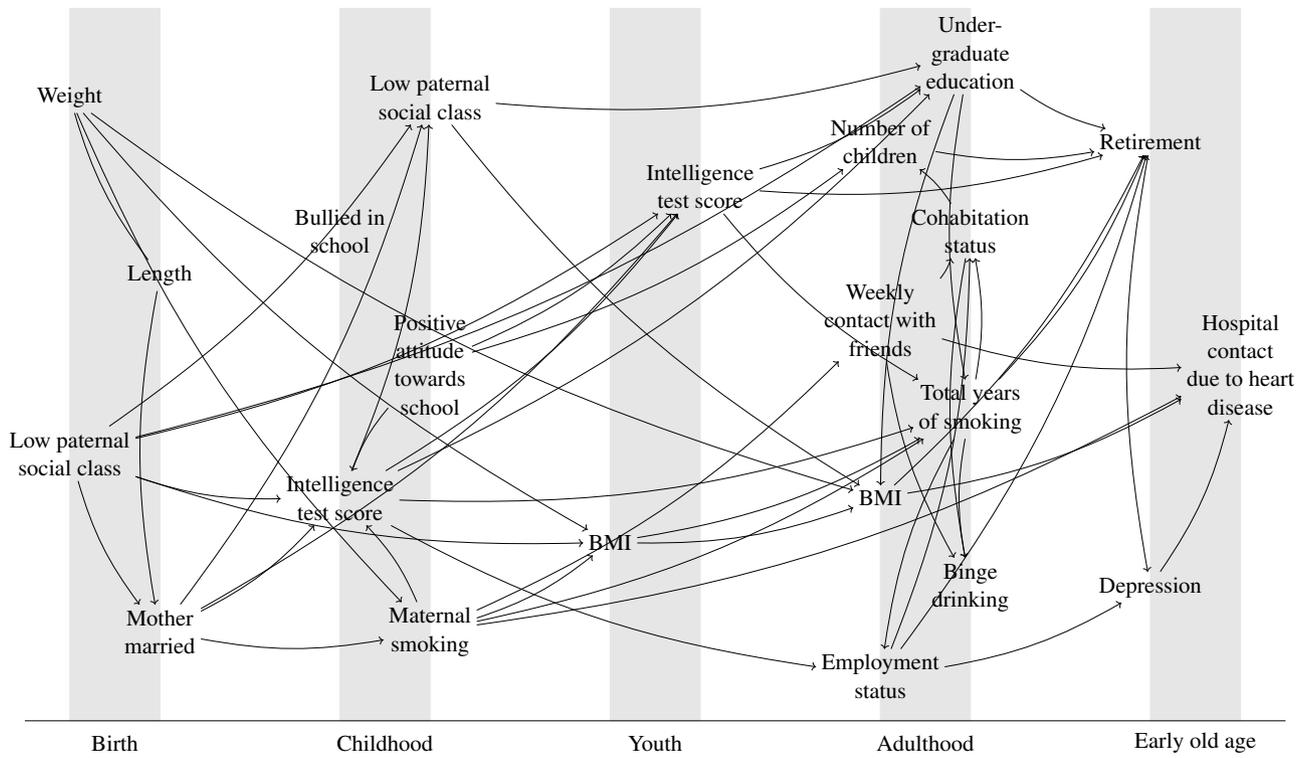

\begin{figure}[H]
    \centering
    \resizebox{\linewidth}{!}{
\begin{tikzpicture}
[every node/.style={font=\Large, align = center}, every edge/.append style={nodes={font=\itshape\scriptsize}}]
\node (1) at (2,2) {Mother \\ married};
\node at (1,-0.5) {Birth};
\node (2) at (0,6) {Low paternal \\ social class};
\node (3) at (2,10) {Length};
\node (4) at (0,14) {Weight};
\node (5) at (8,2) {Maternal \\smoking};
\node at (7,-0.5) {Childhood};
\node (6) at (6,5) {Intelligence \\ test score};
\node (7) at (8,8) {Positive \\ attitude\\ towards\\ school};
\node (8) at (6,11) {Bullied in \\school};
\node (9) at (8,14) {Low paternal \\ social class};
\node (10) at (12,4) {BMI};
\node at (13,-0.5) {Youth};
\node (11) at (14,12) {Intelligence \\test score};
\node (12) at (18,1) {Employment\\ status};
\node at (19,-0.5) {Adulthood};
\node (13) at (20,3) {Binge \\drinking};
\node (14) at (18,5) {BMI};
\node (15) at (20,7) {Total years\\ of smoking};
\node (16) at (18,9) {Weekly \\contact with\\ friends};
\node (17) at (20,11) {Cohabitation\\ status};
\node (18) at (18,13) {Number of\\ children};
\node (19) at (20,15) {Under- \\graduate\\ education};
\node (20) at (24,3) {Depression};
\node at (25,-0.5) {Early old age};
\node (21) at (26,8) {Hospital \\contact \\due to heart \\disease};
\node (22) at (24,13) {Retirement};
\draw [->] (2) edge [bend left=50]  (9);
\draw [->] (3) edge   (4);
\draw [->] (4) edge [bend left=20]  (6);
\draw [->] (5) edge [bend right=10]  (15);
\draw [->] (6) edge   (7);
\draw [->] (6) edge   (10);
\draw [->] (6) edge [bend right=15]  (11);
\draw [->] (7) edge [bend left=30]  (19);
\draw [->] (8) edge   (7);
\draw [->] (10) edge  (14);
\draw [->] (11) edge   (10);
\draw [->] (11) edge  (18);
\draw [->] (11) edge [bend left=10]  (19);
\draw [->] (12) edge   (20);
\draw [->] (12) edge [bend right=40]  (22);
\draw [->] (13) edge   (20);
\draw [->] (13) edge   (21);
\draw [->] (14) edge   (20);
\draw [->] (14) edge   (21);
\draw [->] (15) edge   (21);
\draw [->] (16) edge [bend right=35]  (20);
\draw [->] (16) edge   (21);
\draw [->] (17) edge   (20);
\draw [->] (17) edge   (21);
\draw [->] (18) edge [bend left=44]  (20);
\draw [->] (18) edge [bend left=10]  (21);
\draw [->] (19) edge [bend right=70]  (12);
\draw [->] (21) edge [bend left=10]  (20);
\draw [->] (21) edge [bend right=10]  (22);
\draw [->] (22) edge [bend right=10]  (20);
\draw [-] (-1,0) edge (27,0);
\begin{pgfonlayer}{background}
\filldraw [join=round,black!10]
(0,0) rectangle (2,16)
(6,0) rectangle (8,16)
(12,0) rectangle (14,16)
(18,0) rectangle (20,16)
(24,0) rectangle (26,16)
; \end{pgfonlayer}
\end{tikzpicture}

}
\caption{Expert consensus DAG from \citet{TheoryVsDataPetersen}. }
    \label{fig:Expert_metro}
\end{figure}

\begin{table}[H]
\centering
\begin{tabular}{lrr}
                    & \multicolumn{2}{c}{\textbf{Expert}} \\ 
            \textbf{TPC}            & Adjacency  & Non-adjacency \\ \hline
 Adjacency       & 10         & 20            \\
 Non-adjacency & 20         & 181          
\end{tabular}
\caption{Confusion matrix from \citet{TheoryVsDataPetersen} for the TPC algorithm. TPC is tuned to fit a graph with exactly 30 edges.}
\label{tab:adj_conf_TPC}
\end{table}

\begin{table}[H]
\centering
\begin{tabular}{lrr}
                    & \multicolumn{2}{c}{\textbf{Expert}} \\ 
            \textbf{$\text{TGES}_{\text{tuned}}$}            & True  & False \\ \hline
 Positive       & 8         & 1           \\
 Negative & 0         & 2          
\end{tabular}
\caption{Confusion matrix for the metric of all directions. Estimated using TGES with TBIC with tuned penalty parameter to match the number of edges in the expert graph.}
\label{tab:all_dir_conf_tges_tuned}
\end{table}

\begin{table}[H]
\centering
\begin{tabular}{lrr}
                    & \multicolumn{2}{c}{\textbf{Expert}} \\ 
            \textbf{$\text{TGES}_{\text{tuned}}$}            & True  & False \\ \hline
 Positive       & 1         & 1            \\
 Negative & 0         & 2          
\end{tabular}
\caption{Confusion matrix for the metric of in-tier directions. Estimated using TGES with TBIC with tuned penalty parameter to match the number of edges in the expert graph. The estimated graph has 12 in-tier edges out of 30 edges in total.}
\label{tab:in_dir_conf_tges_tuned}
\end{table}

\begin{table}[H]
\centering
\begin{tabular}{lrr}
                    & \multicolumn{2}{c}{\textbf{Expert}} \\ 
            \textbf{$\text{TGES}_{\text{fixed}}$}            & True  & False \\ \hline
 Positive       & 10          & 2           \\
 Negative & 0         & 3          
\end{tabular}
\caption{Confusion matrix for the metric of all directions. Estimated using TGES with TBIC.}
\label{tab:all_dir_conf_tges_fixed}
\end{table}

\begin{table}[H]
\centering
\begin{tabular}{lrr}
                    & \multicolumn{2}{c}{\textbf{Expert}} \\ 
            \textbf{$\text{TGES}_{\text{fixed}}$}            & True  & False \\ \hline
 Positive       & 1         & 2            \\
 Negative & 0         & 3          
\end{tabular}
\caption{Confusion matrix for the metric of in-tier directions. Estimated using TGES with TBIC. The estimated graph has 18 in-tier edges out of 55 edges in total.}
\label{tab:in_dir_conf_tges_fixed}
\end{table}

\end{document}